\newcommand{\R}{{\mathbb R}}
\newcommand{\Z}{{\mathbb Z}}
\newcommand{\N}{{\mathbb N}}
\newcommand{\be}{\begin{eqnarray}}
\newcommand{\ben}{\begin{eqnarray*}}
\newcommand{\en}{\end{eqnarray}}
\newcommand{\enn}{\end{eqnarray*}}
\newcommand{\ba}{\backslash}
\newcommand{\pa}{\partial}
\newcommand{\ov}{\overline}
\newcommand{\Om}{\Omega}
\newcommand{\la}{\lambda}
\newcommand{\wi}{\widetilde}
\newcommand{\hx}{\hat{x}}
\newtheorem{theorem}{Theorem}[section]
\newtheorem{lemma}[theorem]{Lemma}
\newtheorem{assumption}[theorem]{Assumption}
\begin{document}
\title{\bf The Radon Transform-Based Sampling Methods for Biharmonic Sources from the Scattered Fields}

\author[1]{Xiaodong Liu}
\author[2]{Qingxiang Shi}
\author[1,3]{Jing Wang} 
\makeatletter 
\renewcommand{\AB@affilnote}[1]{\footnotemark[#1]} 
\renewcommand{\AB@affillist}[1]{}
\renewcommand{\AB@affillist}[2]{} 
\makeatother

\date{}
\maketitle
\footnotetext[1]{State Key Laboratory of Mathematical Sciences, Academy of Mathematics and Systems Science,
Chinese Academy of Sciences, Beijing 100190, China. Email: xdliu@amt.ac.cn} 
\footnotetext[2]{School of Mathematical Sciences, Dalian University of Technology, Dalian 116024, China. Email: sqxsqx142857@dlut.edu.cn}
\footnotetext[3]{Corresponding author. School of Mathematical Sciences, University of Chinese Academy of Sciences, Beijing 100049, China, Email:wangjing23@amss.ac.cn}

\begin{abstract}

This paper presents three quantitative sampling methods for reconstructing extended sources of the biharmonic wave equation using scattered field data. The first method employs an indicator function that solely relies on scattered fields $ u^s$ measured on a single circle, eliminating the need for Laplacian or derivative data. Its theoretical foundation lies in an explicit formula for the source function, which also serves as a constructive proof of uniqueness. To improve computational efficiency, we introduce a simplified double integral formula for the source function, at the cost of requiring additional measurements $\Delta u^s$. This advancement motivates the second indicator function, which outperforms the first method in both computational speed and reconstruction accuracy. The third indicator function is proposed to reconstruct the support boundary of extended sources from the scattered fields $ u^s$ at a finite number of sensors. By analyzing singularities induced by the source boundary, we establish the uniqueness of annulus and polygon-shaped sources. A key characteristic of the first and third indicator functions is their link between scattered fields and the Radon transform of the source function. Numerical experiments demonstrate that the proposed sampling methods achieve high-resolution imaging of the source support or the source function itself.

\vspace{.2in}
{\bf Keywords:} extended sources; biharmonic wave equation; sparse data; sampling method.

\vspace{.2in} {\bf AMS subject classifications:}
35R30, 35P25
\end{abstract}

\section{Introduction}
Due to the wide applications of inverse source problems in scientific and industrial fields \cite{App-1, App-2, App-3}, extensive research has been conducted on the mathematical analysis and numerical computation of acoustic waves, electromagnetic waves, and elastic waves \cite{1990application, 2020Identification, JiLiu-nearfield, electromagnetic sources, elastic sources,2025extended-sources-acoustic, 2025extended-sources-elastic+magnetic}. However, there are very few results for biharmonic waves, which model an important scattering problem in an infinitely thin elastic plate. Many challenges arise for the  analysis and computation of the biharmonic waves due to the inherent properties of a solution to the high order equations. Owing to the existence of non-radiating sources \cite{non-radiating}, extended sources cannot be uniquely determined from scattered field data at a single frequency. In this work, we study the inverse biharmonic source problems using multi-frequency scattered fields.


Throughout this paper, we denote by $B_r(x)$ and $\pa B_r(x)$, respectively, the disk and the circle centered at $x\in\R^2$ with radius $r$. Let $S(x)\in L^{2}(\R^{2})$ be a real-valued source function with compact support  $\Omega \subset B_{R}(0)$. Then the source $S$ gives rise to a scattered field $u^s$ satisfying 
\begin{align}\label{Main_eq}
\Delta_x^{2} u^s(x,k)-k^4u^s(x,k)=S(x)\quad\mathrm{~in~}\mathbb{R}^2
\end{align}
and
\begin{align}\label{se2}
\partial_rw-ikw=o\left(r^{-\frac{n-1}{2}}\right),\ r=|x|\to\infty,\ w=u^s,\Delta_x u^s.
\end{align}
Here, $u^s$ denotes the out-of-plane displacement of the mid-surface of the thin elastic plate, $k>0$ is the wave number.

There are a few works for inverse biharmonic source problems.
Uniqueness analysis and the Fourier method for reconstructing extended sources are investigated in \cite{Y.Guo} and \cite{Y.Guo-Phaseless data}, which utilize multi-frequency scattering data.
We refer to \cite{2021application1, 2021application2} for the stability analyses of the inverse source problem.
It should be noted that all the works \cite{Y.Guo, Y.Guo-Phaseless data,2021application1, 2021application2} rely on both $u^s(x,k)$ and $\Delta_x u^s(x,k)$  collected around the unknown source. In this work, we consider the inverse problems using less data. The first inverse problem is formulated as follows.

\emph{\textbf{IP(1)}: Determine the source function $S(x)$ from multi-frequency scattered fields measured on a circle:
\begin{equation}
\{u^s(x,k)\mid x\in \pa B_R(0),\, k\in \left[k_{-}, k_{+}\right]\}.
\label{data_source_function}
\end{equation}}

The first contribution of this paper is to give a formula for $S(x)$ by simply using the scattered fields given in \eqref{data_source_function}. Such a formula not only gives a constructive uniqueness proof, but also introduces a direct numerical method. In contrast to the approaches in \cite{Y.Guo, Y.Guo-Phaseless data}, our method requires neither the Laplacian data $\Delta_x u^s(x,k)$ nor the computation of the normal derivatives $\partial _{\nu}u^s(x,k)$ and $\partial _{\nu}\Delta_x u^s(x,k)$. Specifically, inspired by \cite{2007Kunyansky-formula-f}, we establish an identity that connects the scattered field data $u^s(x,k)$ with the Radon transform of the real-valued source function $S(x)$. This relation allows us to derive an explicit expression for $S(x)$ using only the scattered fields given in \eqref{data_source_function}.
Given the additional data $\Delta_x u^s(x,k)$, we will also introduce a more concise double integral expression. Numerically, we find that even with sparse sensors, the source function $S(x)$ can be well identified, demonstrating the significant advantage of our proposed formulation for $S(x)$.

In many practical applications, measurements are only available from a finite number of sensors, which naturally leads to the question of what information can be recovered from such sparse data.
In \cite {JiLiu-nearfield, 2025point-sources-biharmonic}, it is shown that the point sources can be identified from multi-frequency scattered field data measured at sparse sensors. 
A natural question arises regarding the more complex inverse source problem: can analogous results be established for extended sources?
Inspired by \cite{singularity-radon-transform}, \cite{2007Kunyansky-formula-f} and \cite{2025extended-sources-acoustic}, this work studies whether multi-frequency scattered field data at sparse sensors can be used to recover partial information on extended sources in the biharmonic wave equation, such as the location and geometry of the support $\Omega$. 
Moreover, when only one sensor is available, what information about the source function can be obtained with such limited data? Both the sparsity of the data and the complexity of the fourth-order biharmonic equation pose challenges to the uniqueness analysis of the corresponding inverse source problem.

Based on the above consideration, the second inverse problem of our interest is formulated as follows.

\emph{\textbf{IP(2)}: Reconstruct the support $\Omega$ of the source function $S$ from multi-frequency scattered fields measured at a set of sparse sensors, given by
\begin{equation}
\{u^s(x,k)\mid x\in \Gamma_L,\, k\in \left[k_{-}, k_{+}\right]\},
\label{data_patial_omega}
\end{equation}}
where $\Gamma_L$ is defined as
\begin{equation}
\Gamma_L:=\{x_1,x_2,\ldots,x_L\}\subset \pa B_R(0) 
\nonumber
\end{equation}
denoting the set of $L\in \N$ sensors used for measuring scattered fields.

The other contribution of this paper is to give an indicator function for reconstructing $\pa \Omega$ by only using sparse data \eqref{data_patial_omega}.
The proposed indicator function is proven to be singular, enabling the identification of singularities induced by the boundary 
$\partial \Omega$ of the source support $\Omega$. In particular, when 
$\Omega$  comprises annuluses and polygons, we demonstrate that the boundary 
$\partial \Omega$ of the source support $\Omega$ can be uniquely determined from the data given in \eqref{data_patial_omega}. Furthermore, the lower bound of $L$ is explicitly expressed in terms of the number of annuluses and polygons.
The authors emphasize that the uniqueness analysis in this work, based on scattered fields from the biharmonic wave equation, differs fundamentally from those obtained using far-field data from the Helmholtz equation in \cite{2025extended-sources-acoustic}. Furthermore, the use of scattered fields significantly enhances the reconstruction accuracy of the source boundary.
The numerical results demonstrate that when only a single sensor is available, we obtain approximate circles that can roughly delineate the range of the extended source. As the number of sensors increases finitely, we are able to reconstruct both the shape and location of $\pa \Omega$ for the extended source.

The rest of this paper is arranged as follows. Section \ref{Uniqueness results} is divided into two subsections, each dedicated to proving one of the two main theorems. 
In subsection \ref{Identify-source-function}, 
by establishing a relationship between the scattered field and the Radon transform of the source function, we derive an explicit expression for the source function, which gives a constructive uniqueness and introduces a quantitative sampling method for the source function.
In subsection \ref{Reconstruct-partial-Omega}, we introduce an indicator function and show that its singularity is closely related to the boundary $\pa \Omega$. This tool is then used, for annular or polygonal sources, to identify the boundary $\pa \Omega$ from the scattered fields taken at sparse sensors. Then three numerical algorithms are presented in subsections \ref{indicator-1}, \ref{indicator-2} and \ref{indicator-3}, respectively. 
Finally, in the last section, several numerical examples are provided to illustrate the efficiency and robustness of the three numerical algorithms.

\section{Indicators and their applications for uniqueness}
\label{Uniqueness results}
\setcounter{equation}{0}
As with many inverse problems, the issue of uniqueness is of primary concern. This section focuses on two inverse source problems, designated as \textbf{IP(1)} and \textbf{IP(2)}. 
For \textbf{IP(1)}, we make use of full aperture scattering measurements \eqref{data_source_function}.
In this case, we derive an explicit expression for the source function $S(x)$, which allows not only the reconstruction of the support set $\Omega$, but also the values of $S(x)$ for all $x\in \mathbb{R}^2$.
For \textbf{IP(2)}, we aim to establish uniqueness under a sparse sensor configuration. We seek to determine a lower bound on the number of sensors $L$ such that the support of the source $S(x)$ can be uniquely recovered from the multi-frequency sparse scattering data \eqref{data_patial_omega}.

Denote by $J_{n}$ and $N_{n}$, respectively, the Bessel and Neumann functions of order $n$. Then the fundamental solution of the biharmonic wave equations \eqref{Main_eq} is given by
\begin{equation}
\Phi_k(x,y):=  \frac{i}{8k^2}\left(H_0^{(1)}(k|x-y|)+\frac{2i}{\pi}K_0(k|x-y|)\right),\quad  x,y \in \R^2,\, x\neq y,
   \label{equation_FS}
\end{equation}
where 
\ben
H_0^{(1)}(t):=J_{0}(t)+iN_{0}(t)\quad \mbox{and}\quad
K_0(t):=2e^{-t}\sum_{j=0}^{\infty}\frac{(2j)!}{2^{j}(j!)^3}t^{j},\quad t\in (0, \infty)
\enn
are the Hankel function of the first kind and the Macdonald function of order 0, respectively. Then, the scattered field $u^s$ is given by
\begin{align}\label{Scattered field}
u^s(x,k)=\int_{\R^2}\Phi_k(x,y)S(y)dy,\quad x\in\mathbb{R}^2.
\end{align}
With the help of \eqref{equation_FS} and \eqref{Scattered field}, we derive that 
\be
I_x(r)&:=& \int_0^{+\infty}8k^3 r\Im( u^s(x,k)) J_0(kr)dk\cr
&=&\int_{\R^2}\int_0^{+\infty}J_0(k|x-y|)krJ_0(kr)dkS(y)dy\cr
&=&\int_{\R^2}\delta(|x-y|-r)S(y)dy\cr
&=&\int_{|y-x|=r}S(y)ds(y),\quad r\in [0,+\infty),
\label{I(r)}
\en
where the third equality follows from equation (3.3) of \cite{2025point-sources-biharmonic}.
Due to the fact that the scattered field depends analytically on the wave number $k$, the multi-frequency scattered fields $u^s(x,k), k\in [k_{-}, k_+]$ define $I_x$ by \eqref{I(r)} for fixed $x:=(x_1,x_2)\in \Gamma_L$.
We also want to remark that the right-hand side of \eqref{I(r)} is essentially the Radon transform of the function $S(x)$ over circular paths. 

\subsection{Identification of the source function with multi-frequency scattered fields}\label{Identify-source-function}

To begin with, we introduce a function $I_S(z)$ defined on the plane as follows:
\begin{equation}
\begin{aligned}
I_S(z):=&\frac{R}{\pi}\int_0^{2\pi} \Bigg[  \frac{z-x}{|z-x|}\cdot (\cos\theta, \sin\theta)\Bigg]\int _{0}^{+\infty}\int _0^{2R} \la ^2 r\Big[N_1(\la |z-x|)J_0(\la r)\\
&-J_1(\la |z-x|)N_0(\la r)\Big] \int_0^{+\infty}k^3 \Im(u^s(x,k))J_0(k r) dk dr d\la d\theta, \quad z\in B_R(0),
\end{aligned}
\label{07-I_S}
\end{equation}
where $x=R(\cos\theta, \sin\theta)$. 
We want to emphasize that we have used only the multi-frequency scattering fields $\Im(u^s(x,k))$ taken on the measurement surface $\pa B_R(0)$. 
The following lemma gives a representation for the source function $S(z)\in L^2_{comp}(\R^2)$. We refer to \cite{2007Kunyansky-formula-f} for a proof.
\begin{lemma}\label{lemma-2007-S(z)-general-radon}
For a function $S(z)\in L^2_{comp}(\R^2)$ with compact support in $B_R(0)$. We have the following formula about $S(z)$:
\begin{equation}
\begin{aligned}
S(z)=&\frac{-1}{8\pi}{\rm div}\int_{\pa B_R(0)}\nu(x)\int_{0}^{+\infty}\Bigg[N_0(\la |z-x|)\left(\int_0^{2R}J_0(\la r)g(x,r)dr\right)\\
&-J_0(\la |z-x|)\left(\int_0^{2R}N_0(\la r)g(x,r)dr\right)\Bigg]\la d\la ds(x),\quad z\in B_R(0),
\label{2007-S(z)-general-radon}
\end{aligned}
\end{equation}
where $\nu(x)$ is the unit outward normal vector at $x\in\pa B_R(0)$ and 
\ben
g(x,r):=\int_{|y-x|=r}S(y)ds(y),\quad x\in\pa B_R(0), \, r\in[0, +\infty).
\enn
\end{lemma}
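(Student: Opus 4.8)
The plan is to follow Kunyansky's derivation, of which \eqref{2007-S(z)-general-radon} is the two–dimensional case. The skeleton rests on the radial resolution of the identity: in the distributional sense on $\R^2$ one has $\int_0^{\infty}J_0(\lambda|x|)\,\lambda\,d\lambda=2\pi\,\delta(x)$, so for $S\in L^2_{comp}(\R^2)$ with $\mathrm{supp}\,S\subset B_R(0)$,
\begin{equation*}
S(z)=\frac{1}{2\pi}\int_0^{\infty}\lambda\,\mathcal H_\lambda S(z)\,d\lambda,\qquad \mathcal H_\lambda S(z):=\int_{\R^2}J_0(\lambda|z-y|)S(y)\,dy .
\end{equation*}
Here $\mathcal H_\lambda S$ is an entire solution of $(\Delta+\lambda^2)w=0$, while its companion $\mathcal N_\lambda S(z):=\int_{\R^2}N_0(\lambda|z-y|)S(y)\,dy$ satisfies $(\Delta+\lambda^2)\mathcal N_\lambda S=-4S$, since $(\Delta+\lambda^2)N_0(\lambda|\cdot|)=-4\delta$. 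Writing the area integrals in polar coordinates about $x$ and using $\mathrm{supp}\,S\subset B_R(0)$ (so $|y-x|\le 2R$), the Dirichlet traces of these two potentials on $\partial B_R(0)$ are exactly the radial transforms occurring in the statement,
\begin{equation*}
\mathcal H_\lambda S(x)=\int_0^{2R}J_0(\lambda r)\,g(x,r)\,dr=:W_\lambda(x),\qquad \mathcal N_\lambda S(x)=\int_0^{2R}N_0(\lambda r)\,g(x,r)\,dr=:V_\lambda(x),\qquad x\in\partial B_R(0).
\end{equation*}

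The heart of the matter is to recover the interior values $\mathcal H_\lambda S(z)$, $z\in B_R(0)$, from $W_\lambda$ and $V_\lambda$ alone, with no normal-derivative data. I would obtain this from two applications of Green's second identity on $B_R(0)$: pairing the fundamental solution $-\tfrac14 N_0(\lambda|z-\cdot|)$ with $\mathcal H_\lambda S$ expresses $\mathcal H_\lambda S(z)$ through the double layer with density $W_\lambda$ and the single layer with the unknown density $\partial_\nu\mathcal H_\lambda S$; pairing the regular wave $\tfrac14 J_0(\lambda|z-\cdot|)$ with $\mathcal N_\lambda S$ produces — through the source term $-4S$ — a second multiple of $\mathcal H_\lambda S(z)$, now through $V_\lambda$ and the unknown $\partial_\nu\mathcal N_\lambda S$. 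The two unknown normal traces are eliminated using the exterior radiating Dirichlet-to-Neumann relation $\partial_\nu\mathcal H_\lambda S+\mathrm i\,\partial_\nu\mathcal N_\lambda S=\Lambda^+(W_\lambda+\mathrm i V_\lambda)$, where $\Lambda^+$ on the disc is explicit (diagonal in $\{e^{\mathrm i n\theta}\}$ with symbols built from $H_n^{(1)}(\lambda R)$ and $(H_n^{(1)})'(\lambda R)$); its real and imaginary parts give $\partial_\nu\mathcal H_\lambda S$ and $\partial_\nu\mathcal N_\lambda S$ in terms of $W_\lambda,V_\lambda$. Inserting the resulting boundary representation of $\mathcal H_\lambda S$ into the $\lambda$-integral above, using Bessel-function identities to collapse the (a priori nonlocal) boundary operators, converting the $\partial_{\nu_x}$ that acts on the Bessel kernels into the outer operator via the elementary fact $\nu(x)\cdot\nabla_z\,h(|z-x|)=-\partial_{\nu_x}h(|z-x|)$ for $x\in\partial B_R(0)$, and collecting constants, yields \eqref{2007-S(z)-general-radon} with the prefactor $-1/(8\pi)$.

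Equivalently — and this is the computation I would actually carry out to pin down the constant — one substitutes $g(x,r)=\int_{|y-x|=r}S(y)\,ds(y)$ into the right-hand side of \eqref{2007-S(z)-general-radon}, interchanges all integrations, and reduces the claim (since it must hold for every admissible $S$) to the kernel identity
\begin{equation*}
\frac{-1}{8\pi}\,\mathrm{div}_z\int_{\partial B_R(0)}\nu(x)\,\mathcal K\bigl(|z-x|,|y-x|\bigr)\,ds(x)=\delta(z-y),\qquad z,y\in B_R(0),
\end{equation*}
where $\mathcal K(a,b)=\int_0^{\infty}\lambda\bigl[N_0(\lambda a)J_0(\lambda b)-J_0(\lambda a)N_0(\lambda b)\bigr]d\lambda=\operatorname{Im}\int_0^{\infty}\lambda\,H_0^{(1)}(\lambda a)H_0^{(2)}(\lambda b)\,d\lambda$. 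One evaluates this (regularised) Weber--Schafheitlin/Lipschitz--Hankel integral explicitly, then expands $J_0(\lambda|x-y|)$ and $N_0(\lambda|x-y|)$ by Graf's addition theorem about the centre $0$, integrates out the angular variable on $\partial B_R(0)$ using orthogonality of $\{e^{\mathrm i n\theta}\}$, and resums against the Hankel closure relation to recover the Dirac kernel; the divergence in $z$ supplies the geometric localization (the $\mathcal K$-singularity sits on the set $\{x:|z-x|=|y-x|\}$, the chord's perpendicular bisector met with $\partial B_R(0)$).

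The step I expect to be genuinely delicate is this Bessel-function bookkeeping. The $\lambda$-integral defining $\mathcal K$ is only conditionally convergent — after cancellation the integrand is $O(1)$ and oscillatory in $\lambda$ — so one must work with an Abel or Gaussian regulariser and justify the limit and the interchanges of integration; and, in the route through the interior Dirichlet problem, one must handle the values $\lambda^2$ that are Dirichlet eigenvalues of $-\Delta$ on $B_R(0)$, where $\mathcal H_\lambda S$ is not uniquely determined by its trace $W_\lambda$: it is precisely the simultaneous use of both traces $W_\lambda$ and $V_\lambda$, followed by the integration in $\lambda$, that absorbs this ambiguity. The remaining ingredients — the two Green's identities, the identity $\nu(x)\cdot\nabla_z=-\partial_{\nu_x}$, and the (absolutely convergent, post-regularisation) Fubini steps — are routine.
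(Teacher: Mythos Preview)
The paper does not prove this lemma at all: immediately after the statement it writes ``We refer to \cite{2007Kunyansky-formula-f} for a proof'' and moves on. Your plan to reproduce Kunyansky's derivation is therefore exactly what the paper does, only by citation rather than by argument, so in that sense your proposal and the paper's ``proof'' coincide.

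A brief comment on the sketch itself: the skeleton you give (the Hankel resolution $S=\tfrac{1}{2\pi}\int_0^\infty\lambda\,\mathcal H_\lambda S\,d\lambda$, the identification of the boundary traces $\mathcal H_\lambda S|_{\partial B_R}=W_\lambda$ and $\mathcal N_\lambda S|_{\partial B_R}=V_\lambda$, and the use of interior Green identities together with the outgoing structure of $\mathcal H_\lambda S+i\mathcal N_\lambda S$ in the exterior to eliminate the unknown Neumann traces) is the right set of ingredients. Two small points. First, the sign in $(\Delta+\lambda^2)N_0(\lambda|\cdot|)=-4\delta$ is off by a sign under the convention used in the paper; this is harmless but will propagate into the constant if you are not careful. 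Second, the DtN detour is heavier than what Kunyansky actually does: in his argument the Neumann traces cancel directly once one combines the two interior Green identities with the exterior one applied to the outgoing pair $H_0^{(1)}(\lambda|z-\cdot|)$ and $\int H_0^{(1)}(\lambda|\cdot-y|)S(y)\,dy$, so no explicit symbol of $\Lambda^+$ and no ``collapsing of nonlocal operators'' is needed. Your alternative route via the kernel identity for $\mathcal K(a,b)$ is also valid, and your caveats about conditional convergence and Dirichlet eigenvalues are appropriate.
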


Based on the above lemma, we show that \eqref{07-I_S} is exactly a formula for computing the source function.
\begin{theorem}
Let $S(z)\in L^2_{comp}(\R^2)$ be a real-valued source function supported in $\Omega\subset B_R(0)$, then
\begin{equation}
I_S(z)=S(z),\quad z\in B_R(0).
   \label{07-I_S=S(z)} 
\end{equation}
    \label{source function-07}
\end{theorem}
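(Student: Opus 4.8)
The plan is to identify $I_S(z)$ with the right-hand side of the Kunyansky-type formula \eqref{2007-S(z)-general-radon} from Lemma \ref{lemma-2007-S(z)-general-radon}, and then invoke that lemma directly. The bridge between the two expressions is the identity \eqref{I(r)}, which says precisely that
\[
g(x,r) = \int_{|y-x|=r} S(y)\,ds(y) = \int_0^{+\infty} 8k^3 r\,\Im(u^s(x,k))\,J_0(kr)\,dk,
\]
so that $g(x,r)$, the circular Radon transform of $S$ that appears in Lemma \ref{lemma-2007-S(z)-general-radon}, is exactly what one reconstructs from the measured scattered fields $\Im(u^s(x,k))$ on $\pa B_R(0)$. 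The first step is therefore to substitute this expression for $g(x,r)$ into \eqref{2007-S(z)-general-radon}.

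Next I would reconcile the two formulas term by term. In \eqref{2007-S(z)-general-radon} the boundary integral is written as a divergence of a vector field $\nu(x)\,[\cdots]$ over $\pa B_R(0)$ with respect to $z$; parametrizing $x = R(\cos\theta,\sin\theta)$ gives $\nu(x) = (\cos\theta,\sin\theta)$ and $ds(x) = R\,d\theta$, turning the surface integral into $\int_0^{2\pi}(\cdot)\,R\,d\theta$. Moving the divergence inside and applying it to the $z$-dependent factor $N_0(\la|z-x|)$ and $J_0(\la|z-x|)$ produces, via $\nabla_z N_0(\la|z-x|) = \la N_0'(\la|z-x|)\frac{z-x}{|z-x|} = -\la N_1(\la|z-x|)\frac{z-x}{|z-x|}$ (and likewise for $J_0$ with $J_0' = -J_1$), exactly the scalar factor $\frac{z-x}{|z-x|}\cdot(\cos\theta,\sin\theta)$ together with the combination $N_1(\la|z-x|)J_0(\la r) - J_1(\la|z-x|)N_0(\la r)$ that appears inside \eqref{07-I_S}. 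Tracking the constants: the $-1/(8\pi)$ prefactor in \eqref{2007-S(z)-general-radon} combines with the factor $8$ from $g(x,r)$'s representation and the sign introduced by $N_0' = -N_1$, $J_0' = -J_1$, and together with the extra $\la$ from differentiating the Bessel arguments this should land on the $R/\pi$, the $\la^2$, and the overall sign displayed in \eqref{07-I_S}. Matching these constants carefully is the bookkeeping heart of the argument.

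Concretely, the key steps in order are: (i) recall from \eqref{I(r)} that $g(x,r)$ equals the frequency integral of $8k^3 r\,\Im(u^s(x,k))J_0(kr)$, noting that the measured band $k\in[k_-,k_+]$ determines $g(x,\cdot)$ fully by analyticity in $k$ as remarked after \eqref{I(r)}; (ii) insert this into \eqref{2007-S(z)-general-radon}; (iii) parametrize $\pa B_R(0)$ and rewrite $\mathrm{div}$ acting on the only $z$-dependent factors using the Bessel derivative relations $J_0' = -J_1$, $N_0' = -N_1$; (iv) collect constants and signs to recognize the resulting expression as exactly \eqref{07-I_S}; (v) conclude $I_S(z) = S(z)$ by Lemma \ref{lemma-2007-S(z)-general-radon}.

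The main obstacle I anticipate is justifying the formal manipulations: interchanging the divergence with the $x$- and $\la$-integrals, interchanging the order of the $\la$-, $r$-, and $k$-integrals, and making sense of the oscillatory integrals $\int_0^{+\infty}(\cdots)\la\,d\la$ (which are only conditionally convergent and must be read in a distributional / Hankel-transform sense, much as the $\int_0^{+\infty}J_0(k|x-y|)krJ_0(kr)\,dk = \delta(|x-y|-r)$ step in \eqref{I(r)}). Since the paper says it "refers to \cite{2007Kunyansky-formula-f} for a proof" of Lemma \ref{lemma-2007-S(z)-general-radon}, I would lean on that reference for the analytic legitimacy of the inversion formula and keep the present proof at the level of the algebraic identification of \eqref{07-I_S} with \eqref{2007-S(z)-general-radon}, treating the compact support of $S$ in $B_R(0)$ (hence $g(x,r) = 0$ for $r > 2R$, which is why the inner $r$-integral is truncated at $2R$) as the ingredient that keeps everything well-defined.
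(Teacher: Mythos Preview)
Your proposal is correct and follows essentially the same approach as the paper: insert the identity \eqref{I(r)} for $g(x,r)$ into the Kunyansky formula \eqref{2007-S(z)-general-radon}, parametrize $\pa B_R(0)$ by $x=R(\cos\theta,\sin\theta)$, compute the divergence via $\nabla J_0(|z|)=-\frac{z}{|z|}J_1(|z|)$ and $\nabla N_0(|z|)=-\frac{z}{|z|}N_1(|z|)$, and match constants to recognize $I_S(z)$ as exactly the right-hand side of \eqref{2007-S(z)-general-radon}, whence $I_S(z)=S(z)$ by Lemma \ref{lemma-2007-S(z)-general-radon}. The paper performs precisely this algebraic identification and does not address the analytic justification of the integral interchanges any more than you do.
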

\begin{proof}
Recall the equality \eqref{I(r)}, we have
\begin{equation}
\begin{aligned}
\int_0^{+\infty}8k^3 r\Im(u^s(x,k))J_0(kr)dk
=\int_{|x-y|=r}S(y)ds(y),\quad r\in [0, +\infty), x\in B_R(0).
\label{radon-g}
\end{aligned}
\end{equation}
Inserting this into \eqref{07-I_S}, using Lemma \ref{lemma-2007-S(z)-general-radon} and the differentiation formulas
\begin{equation}
\nabla J_0(|z|)=-\frac{z}{|z|}J_1(|z|),\quad \nabla N_0(|z|)=-\frac{z}{|z|}N_1(|z|),\quad z\in \R^2,
\nonumber
\end{equation}
we derive
{\allowdisplaybreaks
\begin{align*}
I_S(z)=&\frac{R}{8\pi}\int_0^{2\pi} \Bigg[  \frac{z-x}{|z-x|}\cdot (\cos\theta, \sin\theta)\Bigg]\int _{0}^{+\infty}\int _0^{2R} \la ^2 \Big[N_1(\la |z-x|)J_0(\la r)\\
&-J_1(\la |z-x|)N_0(\la r)\Big]\int_{|y-x|=r}S(y)ds(y) dr d\la d\theta\\
=&\frac{R}{8\pi}\int_0^{2\pi} \Bigg[ \frac{z-x}{|z-x|}\cdot (\cos\theta, \sin\theta)\Bigg]\int _{0}^{+\infty}\int _0^{2R} \la ^2 \Big[N_1(\la |z-x|)J_0(\la r)\\
&-J_1(\la |z-x|)N_0(\la r)\Big] g(x,r) dr d\la d\theta\\
=&\frac{-1}{8\pi}\text{div}\int_{\pa B_R(0)}\nu(x)\int _{0}^{+\infty} \Bigg[N_0(\la |z-x|)\left(\int_0^{2R}J_0(\la r)g(x,r)dr\right)\\
&-J_0(\la |z-x|)\left(\int_0^{2R}N_0(\la r)g(x,r)dr\right)\Bigg]\la d\la ds(x)\\
=& S(z),\quad z\in B_R(0).
\end{align*}
}
\end{proof}

\subsection{Reconstruction of the support \texorpdfstring{$\Omega$}{Ω} using the multi-frequency sparse scattered fields}
\label{Reconstruct-partial-Omega}
The uniqueness of the source function cannot be ensured under sparse sensor measurement conditions. In contrast to identifying the source function itself, our focus is exclusively on determining the source support. To achieve this, we make the following assumptions.
\begin{assumption} \label{assumptiom}
Let $S(x)$ be a real-valued function. Denote by $\Omega=\bigcup\limits_{m=1}^{M}\Omega_m$ the support of $S(x)$, where each $\Omega_m$ is a compact annular or polygonal region, and $\Omega_m \cap\Omega_n=\emptyset,\, \forall m\neq n$. Assume further that $S\big|_{\Omega_m}\in C^{1}(\Omega_m)$, and $S(x)\neq 0$ for all $x\in \pa\Omega$.
\end{assumption}


\begin{lemma}\label{discontinuity-I'}
Let Assumption \ref{assumptiom} hold.  For any sensor $x\in \pa B_R(0)$, $I_x'$ does not exist at $r_0\in \R^{+}$ if and only if there exists constants $ c_{j,i}\in \R, j=1,2,3, i=1,2,\ldots,l $, such that 
\begin{equation}
\llbracket I_x^{\prime} \rrbracket(r_0)
:=\lim_{r\rightarrow r_0^+} I_x^{\prime}(r)-\lim_{r\rightarrow r_0^-} I_x^{\prime}(r)
=\displaystyle\sum\limits_{i=1}^{l}\Bigg[c_{1,i}+\lim\limits_{r\rightarrow r_0^+}\frac{c_{2,i}}{\sqrt{r-r_0}}+\lim\limits_{r\rightarrow r_0^-}\frac{c_{3,i}}{\sqrt{r_0-r}}\Bigg]\neq 0.
\label{asymptotic-expression}  
\end{equation}

\end{lemma}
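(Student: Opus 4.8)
## Proof Proposal

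The plan is to analyze the local behavior of $I_x(r) = \int_{|y-x|=r} S(y)\, ds(y)$ near a candidate radius $r_0$, by parametrizing the circle $\{|y-x|=r\}$ and tracking how it intersects the boundary $\pa\Omega = \bigcup_m \pa\Omega_m$. Away from the boundary, $I_x$ is a smooth function of $r$ since $S$ is $C^1$ on each $\Omega_m$ and vanishes identically on $\R^2 \setminus \Omega$; hence any failure of differentiability of $I_x$ at $r_0$ must come from the circle of radius $r_0$ centered at $x$ being tangent to, or otherwise interacting singularly with, some boundary arc or edge. So the first step is to reduce to a purely local analysis: fix $x$, fix $r_0$, and enumerate the finitely many points where $\pa B_{r_0}(x)$ meets $\pa\Omega$ (finitely many by Assumption \ref{assumptiom}, since each $\Omega_m$ is annular or polygonal and the pieces are disjoint). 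Write $I_x(r)$ as a sum of contributions, one per intersection point, plus a smooth remainder; the jump $\llbracket I_x'\rrbracket(r_0)$ is then the sum of the local jump contributions.

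The second step is the case analysis of the local geometry at each intersection point. There are essentially three scenarios. (a) The circle $\pa B_r(x)$ crosses a smooth boundary arc (a polygon edge, or a circular arc of an annulus) transversally as $r$ passes through $r_0$: here, by writing the arc length integral in the variable $r$ and using the coarea/implicit function theorem, one finds $I_x$ has a finite one-sided derivative jump, contributing a term of type $c_{1,i}$ — because the length of the portion of the circle inside $\Omega$ changes with a corner in its $r$-derivative, while $S$ is continuous across... actually more carefully, since $S(x)\neq 0$ on $\pa\Omega$, the integrand picks up or drops a contribution whose $r$-derivative has a jump proportional to $S$ at the crossing point and the angle between the circle and the arc; this is the $c_{1,i}$ term. (b) The circle is internally or externally tangent to a smooth boundary arc at $r_0$: then, near the tangency, the intersection set appears or disappears like $\{|y-x|=r\}\cap \Omega$ gaining an arc of length $\sim \sqrt{|r-r_0|}$ on one side; differentiating, $I_x'(r)$ blows up like $1/\sqrt{|r-r_0|}$ from one side, giving the $c_{2,i}$ or $c_{3,i}$ terms. (c) The circle passes through a polygon vertex: similar to (a) but with a one-sided version, again producing a $c_{1,i}$-type finite jump (the two incident edges each contribute a half-crossing). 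Summing over all intersection points $i=1,\dots,l$ gives exactly the right-hand side of \eqref{asymptotic-expression}.

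For the converse direction — that $I_x'$ fails to exist at $r_0$ precisely when this sum is nonzero — one argues that if the sum vanishes, the competing singular/jump contributions from the various intersection points cancel exactly, so $I_x'$ extends continuously across $r_0$; conversely a nonzero sum forces either an honest jump or an unbounded one-sided limit, in either case precluding differentiability. The key technical device throughout is the change of variables converting $\int_{|y-x|=r} S\, ds$ into an integral over the angular parameter with Jacobian factors that become singular exactly at tangencies, which is the quantitative source of the $\sqrt{r-r_0}$ terms.

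The main obstacle I expect is handling the tangency case (b) rigorously and uniformly: one must show that the local arc-length contribution genuinely behaves like $a\sqrt{|r-r_0|} + O(|r-r_0|)$ with a computable nonzero coefficient $a$ (proportional to the value of $S$ at the tangency point and to the reciprocal square root of the difference of curvatures of the circle and the boundary arc), and rule out degenerate tangencies of higher order — which Assumption \ref{assumptiom} must be invoked to exclude, since a circular boundary arc concentric with the measurement circle, or other non-generic configurations, could in principle produce worse singularities or a whole interval of coincidence. A careful statement of what "annular or polygonal" buys us here — namely that $\pa\Omega$ consists of line segments and circular arcs, for which tangency with another circle is automatically first-order unless the two circles coincide on an arc, a case excluded because the $\Omega_m$ are fixed compact sets with $S\neq 0$ on the boundary — is the crux, and I would isolate it as a preliminary geometric sublemma before doing the asymptotic expansion.
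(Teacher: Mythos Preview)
Your overall strategy matches the paper's: localize near the finitely many points where $\partial B_{r_0}(x)$ meets $\partial\Omega$, parametrize the arc $\partial B_r(x)\cap\Omega_m$ by its angular endpoints $\alpha(r),\beta(r)$ in polar coordinates about $x$, differentiate $I_x(r)=\int_{\beta(r)}^{\alpha(r)} S(r,\theta)\,r\,d\theta$, and read off the jump of $I_x'$ from the boundary terms $\alpha'(r)S(r,\alpha(r))$ and $\beta'(r)S(r,\beta(r))$. The paper carries this out by explicit computation in four geometric configurations (tangency to an annulus boundary, tangency to a polygon edge, passage through a vertex, and passage through a vertex while simultaneously tangent to an incident edge), obtaining closed-form asymptotics for $\llbracket I_x'\rrbracket(r_0)$ in each.

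Your case (a), however, is a genuine misconception: a \emph{transversal} crossing of $\partial B_r(x)$ with a smooth boundary arc contributes \emph{no} jump to $I_x'$, not a bounded one. If the crossing at $r_0$ is transversal it persists transversally for all nearby $r$ by the implicit function theorem, the angular coordinate $\alpha(r)$ of the intersection point is $C^1$ in $r$, and the boundary term $\alpha'(r)S(r,\alpha(r))$ is continuous across $r_0$. There is no distinguished radius associated with a transversal crossing; these are exactly the smooth points of $I_x$. The bounded constants $c_{1,i}$ come \emph{only} from vertices (your case (c)), where the intersection combinatorics changes discontinuously, and the $1/\sqrt{|r-r_0|}$ terms come only from tangencies (your case (b)). Drop case (a) and the remaining analysis lines up with the paper's. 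Your proposed geometric sublemma about ruling out higher-order tangencies is a reasonable precaution but is automatic here: a line segment meets a circle with at most first-order contact, and an annulus boundary $\partial B_{R_i}(z)$ cannot be concentric with $\partial B_r(x)$ since $x\in\partial B_R(0)$ while $z$ lies strictly inside $B_R(0)$.
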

\begin{proof}
\begin{figure}[h!]
    \centering
    \begin{tabular}{cc}
    \subfigure[The case of a single circular arc]{
    \includegraphics[width=0.40\textwidth]{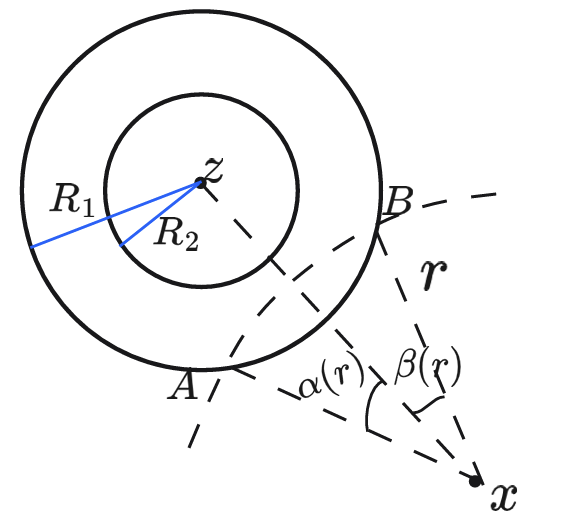}}\hspace{0em} &
    \subfigure[The case of two circular arcs]{
\includegraphics[width=.40\textwidth]{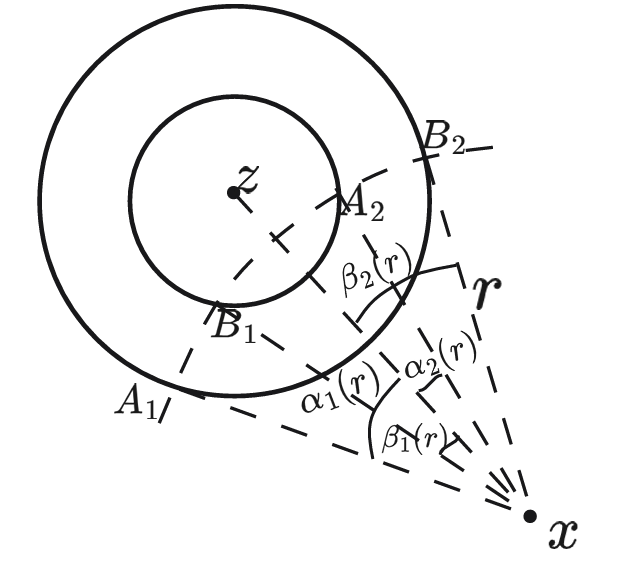}
}
    \end{tabular}
    \caption{Illustrations for the \textbf{Case 1}.}
    \label{Simplification-annular}
\end{figure}
By the definition of $\llbracket I_x^{\prime} \rrbracket(r_0)$, the proof of necessity is straightforward. 
If $I_x'$ does not exist at $r_0\in \R^{+}$, by Section 4 of \cite{singularity-radon-transform}, the circle $\partial B_{r_0}(x)$ passes through a vertex of some polygon or is tangent to $\Omega_m$, for some $m=1,2,\ldots, M$. Therefore, by virtue of linearity, we establish sufficiency by analyzing the following four cases:
\begin{itemize}
    \item Case 1: the circle $\partial B_{r_0}(x)$ is tangent to the annulus $\Omega_m$;
    \item Case 2: the circle $\partial B_{r_0}(x)$ is tangent to exactly one edge of the polygon $\Omega_m$;
    \item Case 3: the circle $\partial B_{r_0}(x)$ passes through exactly one vertex of the polygon $\Omega_m$;
    \item Case 4: the circle $\partial B_{r_0}(x)$ passes through exactly one vertex of the polygon $\Omega_m$ and is tangent to exactly one edge of $\Omega_m$ at that vertex.
\end{itemize}
For simplicity, we assume mutual exclusivity among these cases.

\textbf{Case 1: the circle $\partial B_{r_0}(x)$ is tangent to the annulus $\Omega_m$.}
 
Denote such an annulus by $\ov{B_{R_1}(z)\ba B_{R_2}(z)}$ for some $z\in\R^2$. Note that a circle centered at $x\in \pa B_R(0)$ may intersect the annulus in one of the two configurations shown in Figure \ref{Simplification-annular}, forming one arc and two arcs, respectively. 
We define the ray $\overrightarrow{xz}$ as the polar axis with an initial angle of $0$, adopting the counterclockwise direction as the positive angular orientation. The intersection points can then be parameterized as $A=(r,\alpha(r))$, $B=(r,\beta(r))$, $A_1=(r,\alpha_1(r))$, $B_1=(r,\beta_1(r))$, $A_2=(r,\alpha_2(r))$, and $B_2=(r,\beta_2(r))$,  as shown in Figure \ref{Simplification-annular}.
For simplicity, we define 
\begin{equation*}
\begin{aligned}
   & \Sigma_1^{(1)}:=(|x-z|-R_1,|x-z|-R_2];\\
   & \Sigma_2^{(1)}:=(|x-z|-R_2,|x-z|+R_2);\\
   & \Sigma_3^{(1)}:=[|x-z|+R_2,|x-z|+R_1).
\end{aligned}
\end{equation*}
Then, we have the explicit expressions
\begin{equation*}
\begin{aligned}
    &\alpha (r)=-\beta(r)=\arccos \left(\frac{|x-z|^2+r^2-R_1^2}{2|x-z|r}\right),\quad r\in \Sigma_1^{(1)}\cup\Sigma_3^{(1)};\\
    &\alpha_1 (r)=-\beta_2(r)=\arccos \left(\frac{|x-z|^2+r^2-R_1^2}{2|x-z|r}\right),\quad r\in \Sigma_2^{(1)};\\
    &\beta_1 (r)=-\alpha_2(r)=\arccos \left(\frac{|x-z|^2+r^2-R_2^2}{2|x-z|r}\right),\quad r\in \Sigma_2^{(1)}.
\end{aligned}
\end{equation*}
Taking the derivative with respect to $r$ on both sides of \eqref{I(r)}, we have
\begin{equation}
I_x'(r)=\begin{cases}
\int_{\beta(r)}^{\alpha(r)}\frac{\pa S(r,\theta)}{\pa r}d\theta+\alpha'(r)S(r,\alpha(r))-\beta'(r)S(r,\beta(r)),&\quad r\in \Sigma_1^{(1)}; \\
\sum\limits_{i=1}^2 \Big[\int_{\beta_i(r)}^{\alpha_i(r)}\frac{\pa S(r,\theta)}{\pa r}d\theta+\alpha_i'(r)S(r,\alpha_i(r))-\beta_i'(r)S(r,\beta_i(r))\Big],&\quad r\in \Sigma_2^{(1)};\\
\int_{\beta(r)}^{\alpha(r)}\frac{\pa S(r,\theta)}{\pa r}d\theta+\alpha'(r)S(r,\alpha(r))-\beta'(r)S(r,\beta(r)),&\quad r\in \Sigma_3^{(1)}.
\label{calculate-annulas-I'}
\end{cases}
\end{equation}
Under the Assumption \ref{assumptiom}, we find that the discontinuity of $I_x'(r)$ in \eqref{calculate-annulas-I'} originates from functions $\alpha'(r)$, $\alpha_i'(r)$, $\beta'(r)$ and $\beta_i'(r)$, $i=1,2$. Straightforward calculations show that
\be
\label{alpha'_beta'_1}
&&\alpha'(r)=-\beta'(r)=\frac{r^2+R_1^2-|x-z|^2}{r\sqrt{R_1^2-(|x-z|-r)^2}\sqrt{(|x-z|+r)^2-R_1^2}}, \quad r\in \Sigma_1^{(1)}\cup\Sigma_3^{(1)};\quad\\
\label{alpha'_beta'_2}
&&\alpha_1'(r)=-\beta_2'(r)=\frac{r^2+R_1^2-|x-z|^2}{r\sqrt{R_1^2-(|x-z|-r)^2}\sqrt{(|x-z|+r)^2-R_1^2}},\quad r\in \Sigma_2^{(1)};\\
\label{alpha'_beta'_3}
&&\beta_1'(r)=-\alpha_2'(r)=\frac{r^2+R_2^2-|x-z|^2}{r\sqrt{R_2^2-(|x-z|-r)^2}\sqrt{(|x-z|+r)^2-R_2^2}},\quad r\in \Sigma_2^{(1)}.
\en

From \eqref{calculate-annulas-I'}-\eqref{alpha'_beta'_3}, it can be directly observed that the discontinuities of $I_x'(r)$ occur only at $r=|x-z|\pm R_i, i=1,2$.
For the critical point $r=|x-z|\pm R_1$, we deduce that
\be
\llbracket I_x^{\prime} \rrbracket(|x-z|\pm R_1)
 &=&\mp2 S_{\pm}^{R_1} \lim_{r\rightarrow (|x-z|\pm R_1)^\mp}\frac{|x-z|^2-r^2-R_1^2}{r\sqrt{R_1^2-(|x-z|-r)^2}\sqrt{(|x-z|+r)^2-R_1^2}}  \cr
 &=&\frac{\sqrt{2R_1}S_{\pm}^{R_1}}{\sqrt{|x-z|^2\pm|x-z|R_1}}\lim_{\epsilon_{R_1}^{\pm}\rightarrow 0^{+}}\frac{1}{\sqrt{\epsilon_{R_1}^{\pm}}}.
\label{annular-dis1}
\en
Similarly, we have
\begin{equation}
 \llbracket I_x^{\prime} \rrbracket(|x-y|\pm R_2)=\frac{-\sqrt{2R_2}S_{\pm}^{R_2}}{\sqrt{|x-z|^2\pm|x-z|R_2}}\lim_{\epsilon_{R_2}^{\pm}\rightarrow 0^{+}}\frac{1}{\sqrt{\epsilon_{R_2}^{\pm}}}.
\label{annular-dis2}
\end{equation}
Here, $S_{\pm}^{R_i}:=S|_{\pa \Omega \cap \pa B_{|x-y|\pm R_i}(x)}$ and
$\epsilon_{R_i}^{\pm}=R_i\pm (|x-y|-r)$ define the asymptotic parameters.

\textbf{Case 2: the circle $\partial B_{r_0}(x)$ is tangent to exactly one edge of the polygon $\Omega_m$.}

We denote such an edge by $e_1:=\overline{v_1v_3}$, as shown in Figure \ref{Simplification-polygon-edge}. Let $l_1$ and $l_2$ be the lines that are perpendicular to the edge $e_1$ and pass through $v_1$ and $v_3$, respectively, as shown in Figure \ref{range-e1}. The lines $l_1$ and $l_2$ intersect the circle $\pa B_R(0)$ at points $A_1$, $A_2$, $B_1$ and $B_2$. 
We claim that if there exist $x$ and $r_0$ such that Case 2 holds for some $m\in \{1,2,\ldots,M\}$, then $\llbracket I_x^{\prime} \rrbracket(r_0)$ is unbounded.
In fact, for every point $x\in \Gamma_{e_1}:=\overset{\frown}{A_1B_1}\cup \overset{\frown}{A_2B_2}$, we can draw a perpendicular from $x$ to edge $e_1$, intersecting $e_1$ at point $C$.  The distance from $x$ to $e_1$ is then denoted by $\tilde{h}_1$, as shown in Figure \ref{e1-discontinuity}.
We define the ray $\overrightarrow{xC}$ as the polar axis, with an initial angle of $0$ and the counterclockwise direction taken as the positive angular orientation. The intersection points can then be parameterized as $\tilde{A}=(r,\tilde{\alpha}(r))$ and $\tilde{B}=(r,\tilde{\beta}(r))$, as illustrated in Figure \ref{e1-discontinuity}.
\begin{figure}
      \centering
    \begin{tabular}{cc}
    \subfigure[The discontinuity from the edge $e_1$]{
    \label{e1-discontinuity}
    \includegraphics[width=0.36\textwidth]{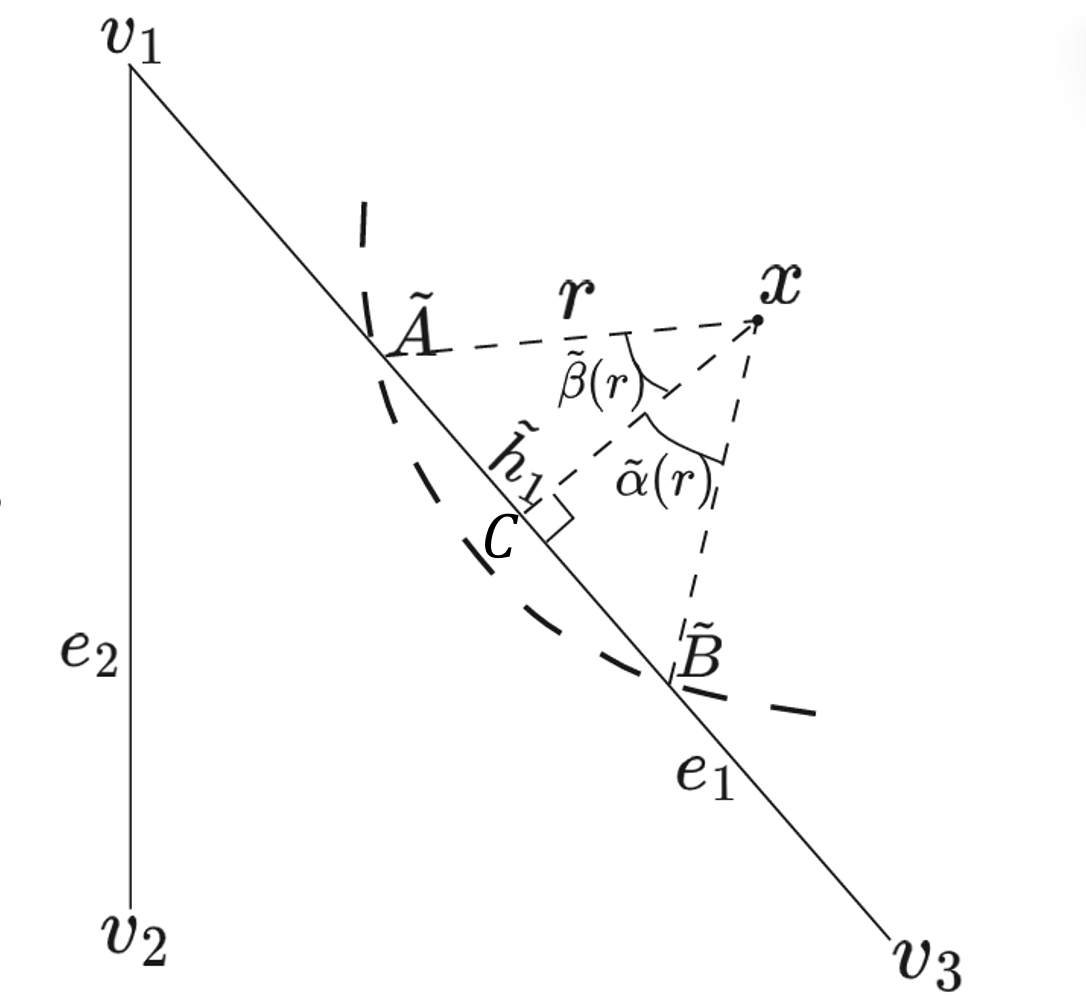}}\hspace{0em} &
    \subfigure[The range of sensors for edge $e_1$]{
    \label{range-e1}
\includegraphics[width=.36\textwidth]{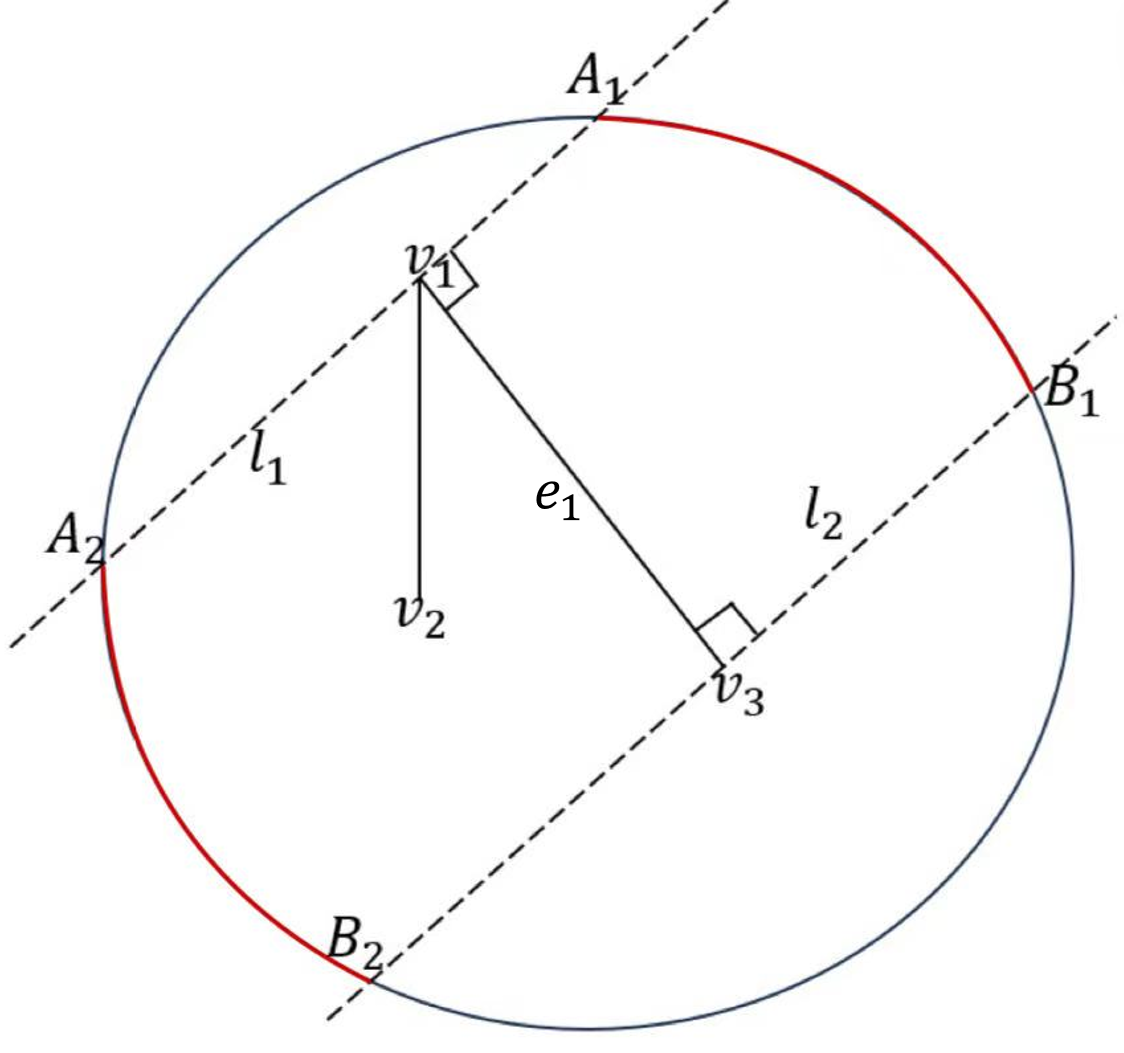}
}
    \end{tabular}
    \caption{Illustrations for the \textbf{Case 2}.}
    \label{Simplification-polygon-edge}
\end{figure}
For simplicity, we denote 
\begin{equation*}
\begin{aligned}
    \Sigma_1^{(2)}:=(\tilde{h}_1-\epsilon_{\tilde{h}_1},\tilde{h}_1)\quad \mbox{and}\quad
    \Sigma_2^{(2)}:=(\tilde{h}_1,\tilde{h}_1+\epsilon_{\tilde{h}_1}),
\end{aligned}
\end{equation*}
where $\epsilon_{\tilde{h}_1}>0$ is a sufficiently small number.
A direct and straightforward computation yields 
\begin{equation}
\tilde{\alpha} (r)=-\tilde{\beta} (r)=\tilde{\alpha}^{\prime}(r)=-\tilde{\beta}^{\prime}(r)=0,\quad r\in \Sigma_1^{(2)},
\label{polygon_edge_alpha_beta2}
\end{equation}
\begin{equation}
\tilde{\alpha} (r)=-\tilde{\beta} (r)=\arccos\left(\frac{\tilde{h}_1}{r}\right),\quad r\in \Sigma_2^{(2)},
\nonumber
\end{equation}
and
\begin{equation}
\begin{aligned}
\tilde{\alpha}^{\prime}(r)=-\tilde{\beta}^{\prime}(r)=\frac{d}{dr}\left(\arccos \left(\frac{\tilde{h}_1}{r}\right)\right)=\frac{\tilde{h}_1}{r\sqrt{r^2-\tilde{h}_1^2}},\quad r\in \Sigma_2^{(2)}
\end{aligned}
\label{polygon_edge_alpha_beta1}
\end{equation}
Differentiating both sides of the equation \eqref{I(r)} with respect to $r$, we have
\begin{equation}
I_x'(r)=\begin{cases}
\int_{\tilde{\beta}(r)}^{\tilde{\alpha}(r)}\frac{\pa S(r,\theta)}{\pa r}d\theta+\tilde{\alpha}'(r)S(r,\tilde{\alpha}(r))-\tilde{\beta}'(r)S(r,\tilde{\beta}(r)),&\quad r\in \Sigma_2^{(2)}; \\
0,&\quad r\in \Sigma_1^{(2)}.
\label{calculate-edge-I'}
\end{cases}
\end{equation}
Therefore, using equations \eqref{polygon_edge_alpha_beta2}-\eqref{polygon_edge_alpha_beta1} and substituting $r=\tilde{h}_1$ into \eqref{calculate-edge-I'}, we have 
\begin{equation}
\begin{aligned}
 \llbracket I_x^{\prime} \rrbracket(\tilde{h}_1)
 =2 S_{\tilde{h}_1} \lim_{r\rightarrow \tilde{h}_1^{+}}\frac{\tilde{h}_1}{r\sqrt{r^2-\tilde{h}_1^2}} 
 =S_{\tilde{h}_1}\sqrt{\frac{2}{\tilde{h}_1}}\lim_{\epsilon_{\tilde{h}_1}\rightarrow 0^{+}}\frac{1}{\sqrt{\epsilon_{\tilde{h}_1}}},
\end{aligned}
\label{polygon-dis1}
\end{equation}
where $S_{\tilde{h}_1}=S|_{e_1 \cap C(x,\tilde{h}_1)}$ and $\epsilon_{\tilde{h}_1}= r-\tilde{h}_1$.


\textbf{Case 3: the circle $\partial B_{r_0}(x)$ passes through exactly one vertex of the polygon $\Omega_m$.}

\begin{figure}
      \centering
    \begin{tabular}{cc}
    \subfigure[The discontinuity from the vertex $v_1$]{
    \label{v1-discontinuity}
    \includegraphics[width=0.36\textwidth]{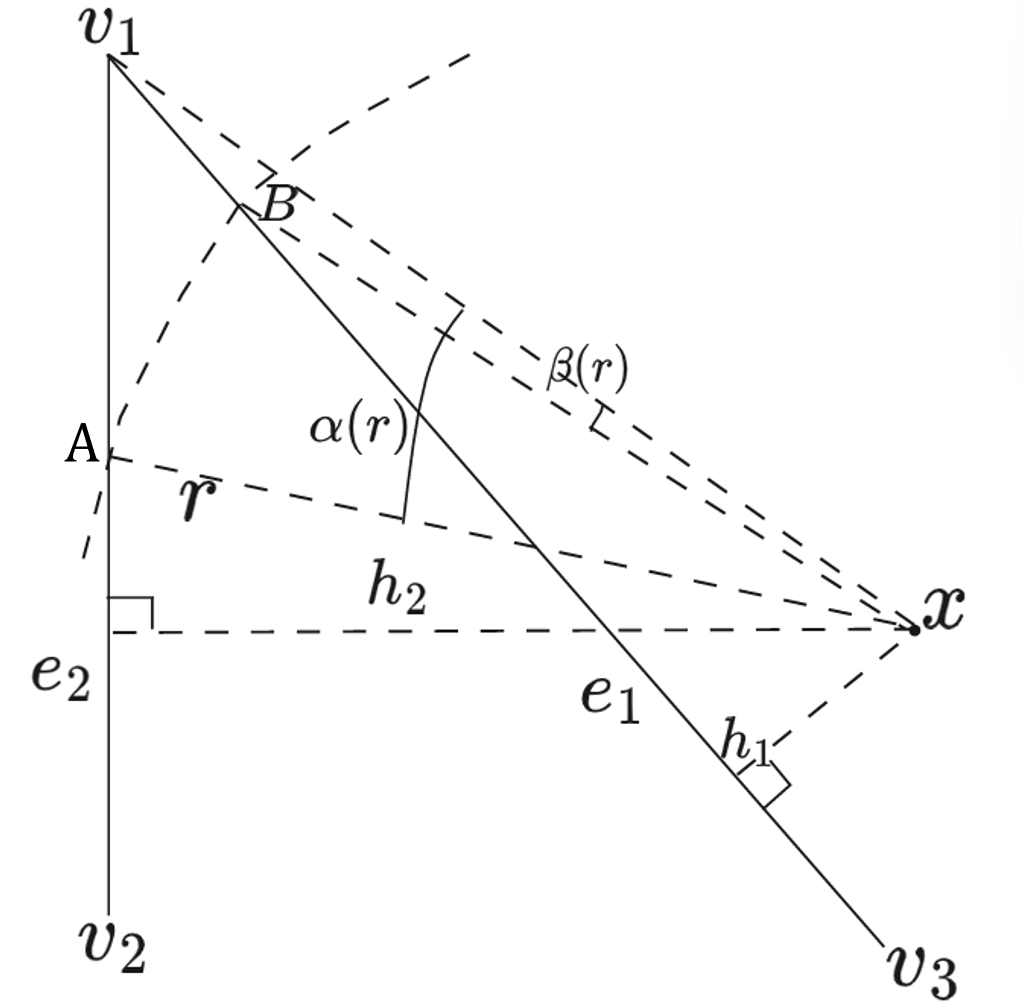}}\hspace{0em} &
    \subfigure[The range of sensors for vertex $v_1$]{
    \label{range-v1}
\includegraphics[width=.36\textwidth]{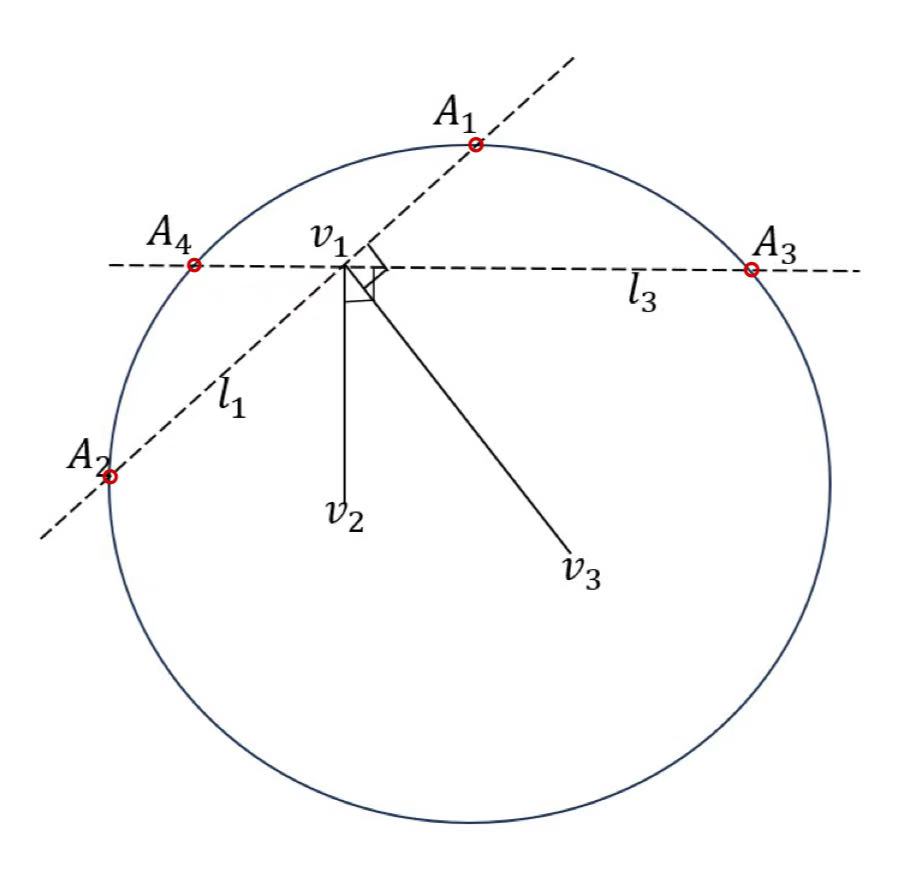}
}
    \end{tabular}
    \caption{Illustrations for the \textbf{Case 3}.}
    \label{Simplification-polygon-vertex1}
\end{figure}
Denote such a vertex by $v_1$. 
As shown in Figure \ref{range-v1}, at the vertex $v_1$, we construct perpendicular lines $l_1$ and $l_3$ to the edges $\overline{v_1v_3}$ and $\overline{v_1v_2}$, respectively, and define $\Gamma_{v_1}:=(l_1\cup l_3)\cap \pa B_R(0)=\{A_1,A_2,A_3,A_4\}$. 
We claim that if there exist $x$ and $r_0$ such that for some $m\in \{1,2,\ldots,M\}$, Case 3 holds, then the jump $\llbracket I_x^{\prime} \rrbracket(r_0)$ is bounded and non-zero.
For any $x\in \pa B_R(0)\backslash \Gamma_{v_1}$, we define the ray $\overrightarrow{xv_1}$ as the polar axis with angle $0$, taking the counterclockwise direction as the positive angular direction.
Then the points $A$ and $B$ can be parameterized as $A=(r,\alpha(r))$ and $B=(r,\beta(r))$. Let $h_1$ and $h_2$ denote the distances from point $x$ to edges $\overline{v_1v_3}$ and $\overline{v_1v_2}$, respectively, with $h_1<h_2$. 
For simplicity, we define 
\begin{equation*}
\begin{aligned}
    \Sigma_1^{(3)}:=(|x-v_1|-\epsilon_1,|x-v_1|)\quad\mbox{and}\quad
    \Sigma_2^{(3)}:=(|x-v_1|,|x-v_1|+\epsilon_1),
\end{aligned}
\end{equation*}
where $\epsilon_1$ denotes a sufficiently small positive constant.
Then we compute the angular functions $\alpha(r)$, $\beta(r)$ and their derivatives $\alpha^{\prime}(r)$, $\beta^{\prime}(r)$, 
\begin{equation}
\alpha (r)=\arccos\left(\frac{h_2}{|x-v_1|}\right)-\arccos \left(\frac{h_2}{r}\right),\,\alpha^{\prime}(r)
=\frac{h_2}{r\sqrt{r^2-h_2^2}},\quad r\in \Sigma_1^{(3)},
\label{polygon_vertex_alpha_beta1}
\end{equation}
\begin{equation}
\beta (r)=\arccos\left(\frac{h_1}{|x-v_1|}\right)-\arccos \left(\frac{h_1}{r}\right),\,\beta^{\prime}(r)=\frac{h_1}{r\sqrt{r^2-h_1^2}},\quad r\in \Sigma_1^{(3)},
\label{polygon_vertex_alpha_beta2}
\end{equation}
and
\begin{equation}
\alpha (r)= \alpha^{\prime}(r)=\beta (r)= \beta^{\prime}(r)=0,\quad r\in \Sigma_2^{(3)}.
\label{polygon_vertex_alpha_beta22}
\end{equation}
Next, differentiating both sides of equation \eqref{I(r)}, we obtain
\begin{equation}
I_x'(r)=\begin{cases}
\int_{\beta(r)}^{\alpha(r)}\frac{\pa S(r,\theta)}{\pa r}d\theta+\alpha'(r)S(r,\alpha(r))-\beta'(r)S(r,\beta(r)),&\quad \quad r\in \Sigma_1^{(3)};\\
0,&\quad \quad r\in \Sigma_2^{(3)}.
\label{calculate-vertex-I'}
\end{cases}
\end{equation}
Finally, 
we estimate $|\llbracket I_x^{\prime} \rrbracket(|x-v_1|)|$ from equations \eqref{polygon_vertex_alpha_beta1}- \eqref{calculate-vertex-I'} as follows
\be
 |\llbracket I_x^{\prime} \rrbracket(|x-v_1|)|&=&\frac{h_2|S(v_1)|}{|x-v_1|\sqrt{|x-v_1|^2-h_2^2}}-\frac{h_1|S(v_1)|}{|x-v_1|\sqrt{|x-v_1|^2-h_1^2}}\cr
 &&\leq \frac{2|S(v_1)|\max (h_1,h_2)}{|x-v_1|\sqrt{|x-v_1|^2-(\max (h_1,h_2))^2}}\cr
 &&< \infty,
\label{polygon-dis2}
\en
where $|x-v_1|>h_2>h_1$ for the selected $x$.

\textbf{Case 4: the circle $\partial B_{r_0}(x)$ passes through exactly one vertex of the polygon $\Omega_m$ and is tangent to exactly one edge of $\Omega_m$ at that vertex.}
    
\begin{figure}
      \centering
    \begin{tabular}{cc}
    \subfigure[The discontinuity from the vertex $v_1$]{
    \label{v1--e1-discontinuity}
    \includegraphics[width=0.36\textwidth]{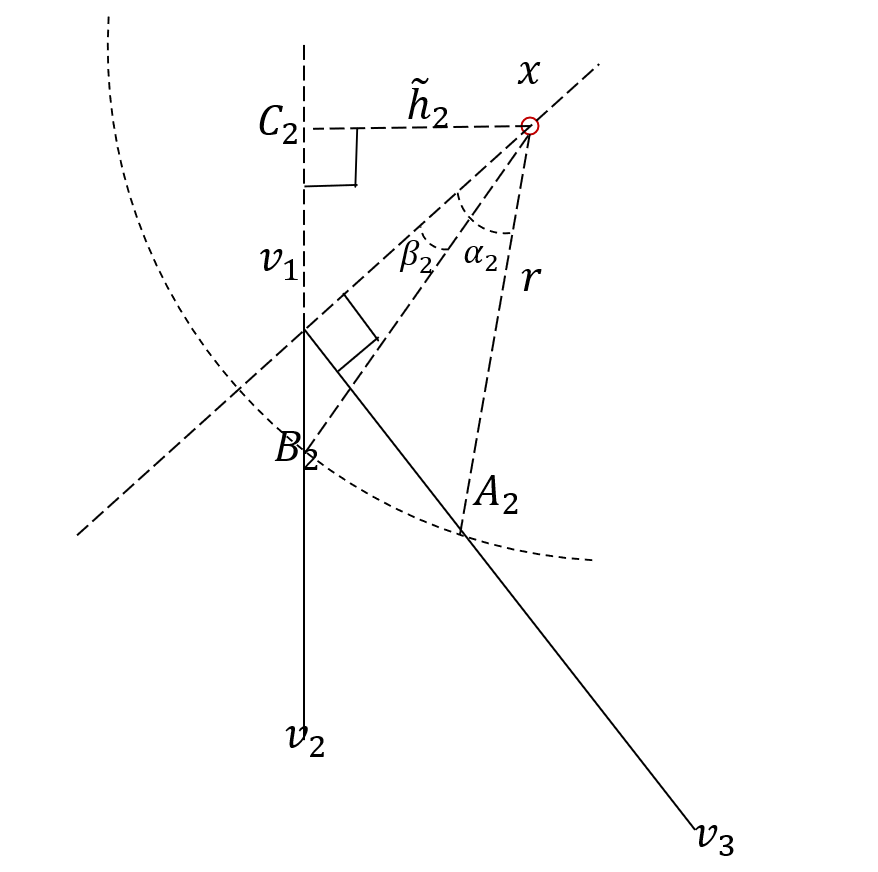}}\hspace{0em} &
    \subfigure[The range of sensors for vertex $v_1$]{
    \label{range-v1-e1}
\includegraphics[width=.36\textwidth]{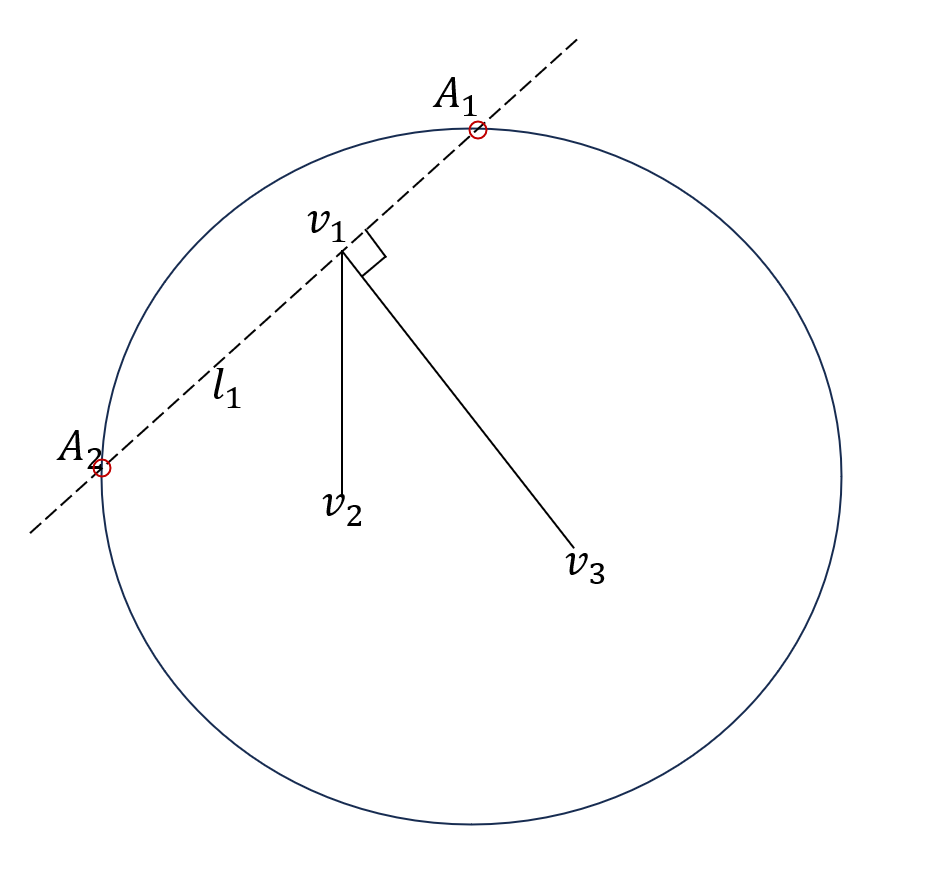}
}
    \end{tabular}
    \caption{Illustrations for the \textbf{Case 4}.}
    \label{Simplification-polygon-vertex2}
\end{figure}

We finally consider the case where both the vertex $v_1$ and the edge $\ov{v_1v_3}$ simultaneously contribute to the discontinuity of $I_x^{\prime}$.
Following the notations in Case 3, as shown in Figure \ref{range-v1-e1}, we observe that both the circles $\pa B_{|A_1-v_1|}(A_1)$ and $\pa B_{|A_2-v_1|}(A_2)$ are tangent to the edge $\ov{v_1v_3}$  at the vertex $v_1$. Without loss of generality, taking $x=A_1$. We claim that the jump $\llbracket I_x^{\prime} \rrbracket(|x-v_1|)$ consists of a bounded term and an unbounded term. As shown in Figure 
\ref{v1--e1-discontinuity}, we drop a perpendicular to the line containing $\ov{v_1v_2}$, denote by $C_2$ the foot and define by $\tilde{h}_2:=|x-C_2|$ the corresponding distance.
We still follow the definition of the polar axis in Case 3, and define
\begin{equation*}
\begin{aligned}
    \Sigma_1^{(4)}:=(|x-v_1|-\epsilon_2,|x-v_1|)\quad\mbox{and}\quad
    \Sigma_2^{(4)}:=(|x-v_1|,|x-v_1|+\epsilon_2),
\end{aligned}
\end{equation*}
where $\epsilon_2$ is a sufficiently small number.
A straightforward computation yields
\begin{equation}
\alpha_2(r)=\alpha_2^{\prime}(r)=\beta_2(r)=\beta_2^{\prime}(r)=0,\quad r\in \Sigma_1^{(4)}.
\label{v_1-e_1-alphabeta-left}
\end{equation}
\begin{equation}
\alpha_2(r)=\arccos{\frac{|x-v_1|}{r}},\,\alpha_2^{\prime}(r)=\frac{|x-v_1|}{r\sqrt{r^2-|x-v_1|^2}},\quad r\in \Sigma_2^{(4)}
\label{v_1-e_1-alpha-right}
\end{equation}
and
\begin{equation}
\beta_2(r)=\arccos{\frac{\tilde{h}_2}{r}}-\arccos{\frac{\tilde{h}_2}{|x-v_1|}},\, \beta_2^{\prime}(r)=\frac{\tilde{h}_2}{r\sqrt{r^2-\tilde{h}_2^2}},\quad r\in \Sigma_2^{(4)},
\label{v_1-e_1-beta-right}
\end{equation}
Differentiating both sides of equation \eqref{I(r)}, we have
\begin{equation}
I_x'(r)=\begin{cases}
\int_{\beta_2(r)}^{\alpha_2(r)}\frac{\pa S(r,\theta)}{\pa r}d\theta+\alpha_2'(r)S(r,\alpha_2(r))-\beta_2'(r)S(r,\beta_2(r)),& \quad r\in \Sigma_2^{(4)};\\
0,& \quad r\in \Sigma_1^{(4)}.
\label{calculate-vertex-special-I'}
\end{cases}
\end{equation}
Therefore, using equations \eqref{v_1-e_1-alphabeta-left}-\eqref{calculate-vertex-special-I'} and the fact that $|x-v_1|>\tilde{h}_2$, we have
\be
\llbracket I_x^{\prime} \rrbracket(|x-v_1|)
&=&S(v_1) \Bigg[\lim_{r\rightarrow |x-v_1|^{+}}\frac{|x-v_1|}{r\sqrt{r^2-|x-v_1|^2}}-\frac{\tilde{h}_2}{|x-v_1|\sqrt{|x-v_1|^2-\tilde{h}_2^2}} \Bigg]\cr
&=&S(v_1)\Bigg[\frac{1}{\sqrt{2|x-v_1|}}\lim_{\epsilon_{2}\rightarrow 0^{+}}\frac{1}{\sqrt{\epsilon_{2}}}-\frac{\tilde{h}_2}{|x-v_1|\sqrt{|x-v_1|^2-\tilde{h}_2^2}}\Bigg],
\label{vertex-edge-dis}
\en
where $\epsilon_2:= r-|x-v_1|$. This implies that $I_x'$ does not exist at $r_0:=|x-v_1|$.
The proof is complete.
\end{proof}

As demonstrated in \eqref{annular-dis1}, \eqref{annular-dis2}, \eqref{polygon-dis1}, \eqref{polygon-dis2}, and \eqref{vertex-edge-dis}, our analysis reveals a fundamental distinction: annular sources generate unbounded discontinuities in $I_x^{\prime}$, whereas polygonal sources produce unbounded jumps at edges but only bounded discontinuities or a combination of bounded and unbounded terms at vertices.

With the help of Lemma \ref{discontinuity-I'} and the corresponding results in \cite{2025extended-sources-acoustic}, we emphasize that the discontinuity analyses of the function $I_x^{\prime}$ in near-field and far-field configurations are different. 
This distinction thus reveals a key advantage of near-field data: it allows for more accurate characterization of polygonal edges and superior reconstruction results, with verification provided in subsequent uniqueness proofs and numerical experiments.
In the next two subsections, we analyze two special cases: annular sources (subsection \ref{2.1.1annular}) and polygonal sources (subsection \ref{2.1.2polygon}). Subsequently, we examine the hybrid case combining annular and polygonal structures (subsection \ref{2.1.3mixed}).

For notation convenience, let $\mathcal{J}$ be the space containing all the extended functions $J: \R^3\rightarrow \R\cup\{\infty\}$
\begin{equation*}
J(a,b,c)=a+b\cdot\lim_{\epsilon\rightarrow0^+}1\big/\sqrt{\epsilon}+c\cdot\lim_{\epsilon\rightarrow0^-}1\big/\sqrt{-\epsilon},\quad (a,\,b,\,c)\in \R^3.
\end{equation*}
Furthermore, we define four operators $P_i: \mathcal{J}\rightarrow\R$, $i=1,2$,  $F_{\text{vertex}}: \R^2\rightarrow\N$ and $F_{\text{annular}}: \R^2\times\R\rightarrow \N$, respectively, by
\begin{equation*}
P_1(J):=a,\quad P_2(J):=b+c,\quad J\in\mathcal{J},
\end{equation*}
\begin{equation*}
F_{\text{vertex}}(y):=\# \{x\in \Gamma_L \mid P_1(\llbracket I_x^{\prime} \rrbracket(|x-y|))\neq 0\},\quad y\in \R^2,
\end{equation*}
and
\begin{equation}
F_{\text{annular}}(y,d):=\sum_{j=0,1}\#\left\{x\in \Gamma_L \mid \llbracket I_x^{\prime} \rrbracket (|x-y|+(-1)^jd)\neq 0 \right\}, \quad (y,d)\in \R^2\times\R.
\label{F-annular}
\end{equation}


\subsubsection{Annular sources}\label{2.1.1annular}
In this subsection, we consider a special case
\ben
\Omega_m=\overline{B_{R_m}(y_m)\ba B_{r_m}(y_m)},\quad m=1,2,\ldots M,
\enn
where $0\leq r_m<R_m$. We also denote by $d_m=r_m, R_m$.
Lemma \ref{discontinuity-I'} immediately implies the following important uniqueness result using the multi-frequency scattering fields taken at a single sensor.
\begin{lemma}
Let Assumption \ref{assumptiom} hold. We also assume that $\Omega=\bigcup\limits_{m=1}^{M}\overline{B_{R_m}(y_m)\ba B_{r_m}(y_m)}$. Fix a sensor $x\in \pa B_R(0)$.
If there exists a unique index $m_0\in \{1,2,\ldots,M\}$ such that the circular $\pa B_r(x)$ is exclusively tangent to either the inner boundary $\pa B_{r_{m_0}}(y_{m_0})$ or the outer boundary $\pa B_{R_{m_0}}(y_{m_0})$ of the annular component $\Omega_{m_0}$, then we have
\begin{equation*}
P_2\left(\llbracket I_x^{\prime} \rrbracket(r)\right)\neq0.
\end{equation*}
\label{lemma-annulas}
\end{lemma}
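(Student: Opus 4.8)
The plan is to deduce Lemma~\ref{lemma-annulas} directly from the computations already carried out in the proof of Lemma~\ref{discontinuity-I'}, specifically from the two jump formulas \eqref{annular-dis1} and \eqref{annular-dis2}. Under the hypothesis, the circle $\pa B_r(x)$ is tangent to exactly one of the two bounding circles $\pa B_{r_{m_0}}(y_{m_0})$ or $\pa B_{R_{m_0}}(y_{m_0})$ of the annular component $\Omega_{m_0}$, and to no boundary circle of any other $\Omega_m$. Hence exactly one of the \textbf{Case 1} critical radii $r=|x-y_{m_0}|\pm d_{m_0}$ equals $r$, and all other annular components contribute nothing to $\llbracket I_x^{\prime}\rrbracket(r)$; since $\Omega$ consists purely of annuli here, there is no polygonal contribution either. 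By linearity of $I_x$ in $S$ (the representation \eqref{I(r)} is linear in $S$), the jump $\llbracket I_x^{\prime}\rrbracket(r)$ reduces to the single term given by \eqref{annular-dis1} or \eqref{annular-dis2}.

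First I would fix the index $m_0$ and the sign $\pm$ so that $r=|x-y_{m_0}|\pm d_{m_0}$ with $d_{m_0}\in\{r_{m_0},R_{m_0}\}$, and invoke \eqref{annular-dis1}--\eqref{annular-dis2} to write, in the notation of $\mathcal J$,
\begin{equation*}
\llbracket I_x^{\prime}\rrbracket(r)=J\!\left(0,\ b,\ c\right),
\end{equation*}
where exactly one of $b,c$ is nonzero (the one corresponding to the one-sided limit $\epsilon_{d_{m_0}}^{\pm}\to0^+$ that actually appears), and its value is $\pm\sqrt{2d_{m_0}}\,S_{\pm}^{d_{m_0}}\big/\sqrt{|x-y_{m_0}|^2\pm|x-y_{m_0}|d_{m_0}}$ up to sign, with $S_{\pm}^{d_{m_0}}=S|_{\pa\Omega\cap\pa B_r(x)}$. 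Then I would apply the operator $P_2$ from the definition just above, which extracts $b+c$, to obtain
\begin{equation*}
P_2\!\left(\llbracket I_x^{\prime}\rrbracket(r)\right)=b+c=\pm\frac{\sqrt{2d_{m_0}}\,S_{\pm}^{d_{m_0}}}{\sqrt{|x-y_{m_0}|^2\pm|x-y_{m_0}|d_{m_0}}}.
\end{equation*}
Finally, Assumption~\ref{assumptiom} guarantees $S(x)\neq0$ on $\pa\Omega$, so $S_{\pm}^{d_{m_0}}\neq0$; since $d_{m_0}>0$ and $|x-y_{m_0}|>0$ (the sensor $x\in\pa B_R(0)$ cannot coincide with the interior center $y_{m_0}$), the denominator and the factor $\sqrt{2d_{m_0}}$ are finite and nonzero, hence $P_2(\llbracket I_x^{\prime}\rrbracket(r))\neq0$.

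There is no serious obstacle here; the lemma is essentially a bookkeeping corollary of Lemma~\ref{discontinuity-I'}. The one point requiring a little care is the justification that the "unique index $m_0$" and "exclusively tangent" hypotheses really do kill every other contribution to the jump: one must note that $I_x^{\prime}$ can fail to exist at $r$ only when $\pa B_r(x)$ is tangent to some $\Omega_m$ (again by Section~4 of \cite{singularity-radon-transform}), that for annuli this happens precisely at $r=|x-y_m|\pm d_m$, and that by hypothesis none of these other radii coincide with $r$; combined with the additivity $S=\sum_m S|_{\Omega_m}$ and the linearity of $r\mapsto I_x(r)$, every term but the $m_0$ one is smooth at $r$ and contributes $0$ to $\llbracket I_x^{\prime}\rrbracket(r)$. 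I would also remark that the two critical radii $|x-y_{m_0}|-d_{m_0}$ and $|x-y_{m_0}|+d_{m_0}$ are distinct (as $d_{m_0}>0$), so "tangent to exactly one of the two bounding circles" does pin down a single sign.
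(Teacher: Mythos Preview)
Your proposal is correct and follows exactly the same approach as the paper's own proof, which consists of the single sentence ``This claim follows directly from Case 1 of Lemma~\ref{discontinuity-I'}.'' Your version simply unpacks that sentence, invoking \eqref{annular-dis1}--\eqref{annular-dis2} explicitly, using linearity of $I_x$ in $S$ to isolate the $m_0$ contribution, and then reading off $P_2$; the only additional care you take (checking $S_\pm^{d_{m_0}}\neq 0$ via Assumption~\ref{assumptiom} and that the denominator is finite and nonzero) is implicit in the paper's Case~1 computation.
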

\begin{proof}
This claim follows directly from Case 1 of Lemma \ref{discontinuity-I'}.
\end{proof}
The following theorem gives a characterization of the boundary $\pa \Om$ using scattering fields taken at finitely many sensors.
\begin{theorem}
\label{Annuluses-proof}
The support $\Om$ of annular sources can be uniquely determined by the multi-frequency sparse scattered fields
$\{u^s(x,k)\mid x\in \Gamma_L,\, k\in \left[k_{-}, k_{+}\right]\}$,  
provided that
\begin{equation}
L> 16M-8.
\label{Annuluses-condition}
\end{equation}
\end{theorem}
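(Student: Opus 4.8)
\textbf{Proof proposal for Theorem \ref{Annuluses-proof}.}

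The plan is to turn the local singularity analysis of Lemma \ref{discontinuity-I'} (Case 1) and the one-sensor criterion of Lemma \ref{lemma-annulas} into a global reconstruction scheme, and then count how many sensors are needed so that the scheme succeeds for every annulus simultaneously. First I would fix a candidate pair $(y,d)\in\R^2\times\R$ and interpret the functional $F_{\text{annular}}(y,d)$ from \eqref{F-annular}: it counts sensors $x\in\Gamma_L$ for which $\llbracket I_x'\rrbracket(|x-y|\pm d)\neq0$. By Lemma \ref{lemma-annulas} and Case 1, if $\pa B_d(y)$ is genuinely one of the $2M$ boundary circles $\pa B_{r_m}(y_m)$, $\pa B_{R_m}(y_m)$, then every sensor $x$ whose distance circle is tangent to that circle ``alone'' contributes, so $F_{\text{annular}}(y,d)$ is large; conversely, if $(y,d)$ does not describe a true boundary circle, then for a sensor $x$ to register a jump at radius $|x-y|\pm d$ the circle $\pa B_{|x-y|\pm d}(x)$ would have to be tangent to some genuine $\Omega_m$ by ``chance'' — an event that pins $x$ down to lie on a small exceptional set. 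So the strategy is: declare $\pa B_d(y)$ to be a boundary circle of $\Omega$ iff $F_{\text{annular}}(y,d)$ exceeds a threshold, and show the threshold $L>16M-8$ forces this test to be correct.

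Next I would carry out the exceptional-set bookkeeping. For a fixed genuine boundary circle $\mathcal C$ of $\Omega$, the sensors $x\in\pa B_R(0)$ that fail to detect it via Case 1 are exactly those for which the tangency of $\pa B_{r}(x)$ to $\mathcal C$ is \emph{not} exclusive — i.e., the same circle $\pa B_r(x)$ is simultaneously tangent to or passes through a vertex of some other component, or $S$ happens to vanish there (ruled out by $S\neq0$ on $\pa\Omega$ in Assumption \ref{assumptiom}), or $x$ lies on one of the finitely many lines/arcs attached to the other components. Each other annulus contributes a bounded number of ``bad'' sensor positions on $\pa B_R(0)$ (intersections of $\pa B_R(0)$ with finitely many auxiliary circles/lines), so summing over the $\le 2M$ boundary circles and the $O(M)$ interaction constraints gives a uniform bound of the form ``at most $16M-16$ bad sensors'' (the precise constant is where the arithmetic lives, and it must be reconciled with the $16M-8$ in \eqref{Annuluses-condition}; I expect the count to be: $2M$ boundary circles, each detectable unless the sensor is among a handful determined by the other $M-1$ annuli and by the self-overlap configurations of Figure \ref{Simplification-annular}). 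Then, if $L>16M-8$, the pigeonhole principle guarantees that for each of the $2M$ true boundary circles at least one — in fact a controlled surplus of — sensors detect it, while no spurious $(y,d)$ can be detected by more sensors than the bad-set bound. This separates true boundary circles from false ones purely in terms of the observable $F_{\text{annular}}$.

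Finally I would assemble the boundary: having identified all $2M$ circles $\pa B_{r_m}(y_m)$, $\pa B_{R_m}(y_m)$ (as a set, with centres and radii read off from three detecting sensors each via the three-circle intersection, or directly from the loci of $(y,d)$ with $F_{\text{annular}}$ above threshold), I pair inner with outer radii by the nesting/centre-coincidence relation $r_m<R_m$ with common centre $y_m$, and recover $\Omega=\bigcup_m\overline{B_{R_m}(y_m)\ba B_{r_m}(y_m)}$; uniqueness of $\Omega$ follows because the construction used only the data \eqref{data_patial_omega} (which, by analyticity in $k$, determines every $I_x$ and hence every $\llbracket I_x'\rrbracket$). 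The main obstacle is the combinatorial counting in the middle step: one must show that the ``exceptional'' sensor positions associated with cross-interactions between the $M$ annuli, together with the two self-intersection geometries of Figure \ref{Simplification-annular}, never consume more than $16M-8$ sensors, so that a genuine boundary circle is always seen by a sensor with exclusive tangency while a fake circle is never seen by enough sensors — getting the constant $16M-8$ exactly right, rather than merely $O(M)$, is the delicate part and presumably mirrors the acoustic argument of \cite{2025extended-sources-acoustic} adapted to the near-field jump structure of Lemma \ref{discontinuity-I'}.
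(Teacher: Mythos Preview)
Your overall architecture is exactly the paper's: use $F_{\text{annular}}(y,d)$ as a test statistic, bound it from below on genuine boundary circles and from above on spurious ones, and separate the two classes by the inequality $L>16M-8$. Where your proposal has a real gap is in the geometry of the ``exceptional'' sensor positions. You describe them as ``intersections of $\pa B_R(0)$ with finitely many auxiliary circles/lines,'' but that is not what they are. For a genuine circle $\pa B_{d_m}(y_m)$, a sensor $x$ fails to register a jump at $|x-y_m|\pm d_m$ only if that same radius is a tangency to some other boundary circle $\pa B_{d_{\tilde m}}(y_{\tilde m})$, i.e.\ only if
\[
|x-y_m|\pm d_m \;=\; |x-y_{\tilde m}|\pm d_{\tilde m}.
\]
Each such equation is a \emph{hyperbola} with foci $y_m,y_{\tilde m}$, and a hyperbola meets the measurement circle $\pa B_R(0)$ in at most four points. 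For each of the $2(M-1)$ other boundary circles this gives two hyperbolas, hence at most $8$ bad sensors per circle and $16(M-1)$ in total; since $F_{\text{annular}}$ sums over \emph{both} tangencies $j=0,1$ (so its maximum is $2L$, not $L$), one gets $F_{\text{annular}}(y_m,d_m)\ge 2L-16(M-1)$. The same hyperbola argument, now run against all $2M$ genuine circles, gives $F_{\text{annular}}(y,d)\le 16M$ for any fake $(y,d)$. The threshold $2L-16(M-1)>16M$ is precisely $L>16M-8$.

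So the missing idea is not the counting per se but the recognition that the coincidence loci are confocal hyperbolas and that a hyperbola–circle intersection has at most four points; your ``circles/lines'' picture would give the wrong constants (and lines would give only two intersections). You should also be explicit about the factor $2L$ coming from the definition \eqref{F-annular}. Once these two points are in place, the rest of your outline (pairing inner and outer radii by common centre, invoking analyticity in $k$ to pass from the band $[k_-,k_+]$ to all $k$) is fine and matches the paper.
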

\begin{proof}
For any $\pa B_{d_m}(y_m)\subset \pa \Omega$, there exist at least $2L$ circles centered at $x\in \Gamma_L$ that are tangent to the circle $\pa B_{r_m}(y_m)$ or $\pa B_{R_m}(y_m)$.
If $P_2(\llbracket I_x^{\prime} \rrbracket(|x-y_m|+d_m))=0$ or $P_2(\llbracket I_x^{\prime} \rrbracket(|x-y_m|-d_m))=0$, then there exist $\tilde{m}\neq m$ and $\pa B_{d_{\tilde{m}}}(y_{\tilde{m}})$ such that 
\begin{equation}
|x-y_m|\pm d_m=|x-y_{\tilde{m}}|\pm d_{\tilde{m}}, \quad x\in \R^2.
\label{annulues-P2-hyperbolas}
\end{equation}
This implies that the sensor $x$ must lie on one of the two pairs of hyperbolas. Each hyperbola intersects $\pa B_R(0)$ at four points. The boundary $\pa \Omega \Big\backslash \Big(\pa B_{r_{m}}(y_{m})\bigcup\pa B_{R_{m}} (y_{m})\Big)$ contains $2(M-1)$ circles, yielding 
\begin{equation}
F_{\text{annular}}(y_m,d_m)\geq 2L-16(M-1).
\nonumber
\end{equation}

For any $\pa B_d(y)\not\subset \pa \Omega$, if $P_2(\llbracket I_x^{\prime} \rrbracket(|x-y|+d))\neq0$ or $P_2(\llbracket I_x^{\prime} \rrbracket(|x-y|-d))\neq0$, then there exist $m_0\in \{1,2,\ldots, M\}$ and $\pa B_{d_{m_0}}(y_{m_0})\subset \pa \Omega$ such that 
\begin{equation}
|x-y|\pm d=|x-y_{m_0}|\pm d_{m_0},\quad x\in \R^2.
\nonumber
\end{equation}
Therefore, we deduce 
\begin{equation}
F_{\text{annular}}(y,d)\leq 16M.
\nonumber
\end{equation}
Under the condition \eqref{Annuluses-condition}, we have
\begin{equation}
F_{\text{annular}}(y_m,d_m)> F_{\text{annular}}(y,d),\quad \forall \pa B_{d_m}(y_m)\subset \pa \Omega,\,\pa B_d(y)\not\subset \pa \Omega,
\nonumber
\end{equation}
which implies that all circles $\pa B_{d_m}(y_m)\subset \pa \Omega$ can be uniquely determined from the data.
The theorem now follows exactly from similar arguments in the proof of Theorem 3.2 in \cite{2025extended-sources-acoustic}.

\end{proof}

\subsubsection{Polygonal sources}\label{2.1.2polygon}
This subsection is dedicated to the case where the source support $\Om$ is a union of simply connected polygons:
\ben
\Om=\bigcup\limits_{m=1}^{M}\Om_m.
\enn
Let $v_1,v_2,\ldots, v_N$ and $e_1,e_2,\ldots,e_N$ denote, respectively, the vertices and edges of $\Om$. 
Due to the unbounded singularities introduced by the edges of the polygon, our proof differs from the sources of the Helmholtz equation \cite{2025extended-sources-acoustic}. It is precisely this distinct approach that allows for the superior reconstruction performance observed in our numerical experiments.

We begin with a characteristic of the circles passing through the vertices using the multi-frequency scattering fields at a single sensor.


\begin{lemma}
Let Assumption \ref{assumptiom} hold. For a fixed sensor $x\in \pa B_R(0)$ and some $r>0$, if there exists a unique index $n_0\in \{1,2,\ldots,N\}$ such that the circle $\pa B_r(x)$ passes through the vertex $v_{n_0}$, then 
\begin{equation*}
P_1\left(\llbracket I_x^{\prime} \rrbracket(r)\right)\neq 0.
\end{equation*}
\label{lemma-polygon-vertex}
\end{lemma}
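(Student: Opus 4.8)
The plan is to reduce the assertion to the case analysis carried out in the proof of Lemma~\ref{discontinuity-I'} and then isolate the bounded (the $P_1$) part of the jump $\llbracket I_x^{\prime}\rrbracket(r)$ at $r=|x-v_{n_0}|$. Since $\pa B_r(x)$ passes through $v_{n_0}$ and, by hypothesis, through no other vertex, the mutual-exclusivity convention of Lemma~\ref{discontinuity-I'} places us in exactly one of two situations: \emph{Case 3}, where $\pa B_r(x)$ meets $\pa\Omega$ only at $v_{n_0}$ and is not tangent there to either incident edge; or \emph{Case 4}, where $\pa B_r(x)$ additionally touches one incident edge tangentially at $v_{n_0}$. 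In both situations the one-sided limits of $I_x^{\prime}$ at $r=|x-v_{n_0}|$ were already computed in the proof of Lemma~\ref{discontinuity-I'}, so it only remains to read off the bounded coefficient and invoke Assumption~\ref{assumptiom}.

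In Case 3 the right limit of $I_x^{\prime}$ is $0$ and the left limit equals $S(v_{n_0})\left(\frac{h_2}{|x-v_{n_0}|\sqrt{|x-v_{n_0}|^2-h_2^2}}-\frac{h_1}{|x-v_{n_0}|\sqrt{|x-v_{n_0}|^2-h_1^2}}\right)$, where $h_1<h_2$ are the distances from $x$ to the two edges incident to $v_{n_0}$; hence $\llbracket I_x^{\prime}\rrbracket(r)$ has no singular part, $P_1(\llbracket I_x^{\prime}\rrbracket(r))=\llbracket I_x^{\prime}\rrbracket(r)$, and its modulus is the quantity recorded in \eqref{polygon-dis2}. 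This is nonzero because $S(v_{n_0})\neq0$ --- indeed $v_{n_0}\in\pa\Omega$ and Assumption~\ref{assumptiom} forbids $S$ from vanishing on $\pa\Omega$ --- and because $s\mapsto s/\sqrt{|x-v_{n_0}|^2-s^2}$ is strictly increasing on $[0,|x-v_{n_0}|)$, so the two fractions are distinct whenever $h_1\neq h_2$. In Case 4, \eqref{vertex-edge-dis} writes $\llbracket I_x^{\prime}\rrbracket(r)$ as the sum of a singular term and the bounded term $-S(v_{n_0})\tilde h_2/(|x-v_{n_0}|\sqrt{|x-v_{n_0}|^2-\tilde h_2^2})$; therefore $P_1(\llbracket I_x^{\prime}\rrbracket(r))$ equals the latter, which is again nonzero since $S(v_{n_0})\neq0$ and $0<\tilde h_2<|x-v_{n_0}|$.

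The point that needs care --- and the one I expect to be the main obstacle --- is making the reduction to Cases 3--4 airtight: one must verify that at this particular $r$ the jump is generated \emph{only} by $v_{n_0}$. The uniqueness of $n_0$ excludes a second vertex on $\pa B_r(x)$, but to rule out $\pa B_r(x)$ being simultaneously tangent to another polygon's edge or to an annulus at the same radius --- which would add a possibly unbounded term from \eqref{annular-dis1}--\eqref{polygon-dis1} that could cancel the bounded contribution above --- one must lean on the mutual-exclusivity convention of Lemma~\ref{discontinuity-I'}. A secondary caveat is that the conclusion excludes the finitely many degenerate sensor directions for which $\overrightarrow{xv_{n_0}}$ bisects the angle at $v_{n_0}$ (forcing $h_1=h_2$ in Case 3) or runs along the non-tangent edge at $v_{n_0}$ (forcing $\tilde h_2=0$ in Case 4); away from these, the two displayed identities finish the proof.
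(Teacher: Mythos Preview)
Your argument is correct and is essentially what the paper does: the paper's proof is the single sentence ``The proof can be obtained directly from Case~2 of Lemma~\ref{discontinuity-I'}'' (almost certainly a typo for Case~3, since Case~2 produces only an unbounded jump with $P_1=0$), i.e.\ a bare reference back to the case analysis you invoke. Your write-up is in fact more complete than the paper's: you correctly separate Cases~3 and~4, you extract the bounded coefficient explicitly from \eqref{polygon-dis2} and \eqref{vertex-edge-dis}, and you give the monotonicity reason why $h_1\neq h_2$ forces the Case~3 coefficient to be nonzero. Your two caveats (the mutual-exclusivity convention needed to rule out a coincidental tangency at the same radius, and the finitely many degenerate sensor positions with $h_1=h_2$ or $\tilde h_2=0$) are genuine and the paper simply does not address them; they do not affect the downstream counting argument in Theorem~\ref{polygon_corners}, but strictly speaking they are implicit hypotheses that the lemma statement omits.
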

\begin{proof}
The proof can be obtained directly from Case 2 of Lemma \ref{discontinuity-I'}.
\end{proof}

\begin{theorem}\label{polygon_corners}
The vertices $v_1,v_2,\ldots, v_N$ of $\Om$ can be uniquely determined from the multi-frequency sparse
scattered fields
$\{u^s(x,k)\mid x\in \Gamma_L,\, k\in \left[k_{-}, k_{+}\right]\}$,  
provided that
\begin{equation}
L> 4N-2.
\label{polygons-condition}
\end{equation}.
\end{theorem}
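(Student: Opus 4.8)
The plan is to mirror the structure of the annular case (Theorem \ref{Annuluses-proof}), using Lemma \ref{lemma-polygon-vertex} together with the operator $F_{\text{vertex}}$ as the discriminating quantity, and then a counting argument over sensors to separate genuine vertices from spurious points. First I would fix a vertex $v_n$ of $\Om$. For every sensor $x\in\Gamma_L$ the circle $\pa B_{|x-v_n|}(x)$ passes through $v_n$, and by Lemma \ref{lemma-polygon-vertex} we have $P_1(\llbracket I_x^\prime\rrbracket(|x-v_n|))\neq 0$ \emph{provided} no other vertex, edge, or annular feature interferes at the same radius $r=|x-v_n|$ from that same $x$ (so that Case 3 of Lemma \ref{discontinuity-I'}, rather than Cases 2 or 4 or a superposition, applies). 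Hence I need to bound the number of ``bad'' sensors $x$ for which $|x-v_n|=|x-v_{n'}|$ for some $n'\neq n$, or $|x-v_n|$ equals the tangency radius of some edge $e_j$ from $x$, or $|x-v_n|$ equals $|x-y_m|\pm d_m$ for some annular component (absent here, since $\Om$ is purely polygonal). The equation $|x-v_n|=|x-v_{n'}|$ forces $x$ onto the perpendicular bisector of $v_nv_{n'}$, a line meeting $\pa B_R(0)$ in at most $2$ points; summing over the at most $N-1$ other vertices gives at most $2(N-1)$ bad sensors from vertex collisions. The edge-tangency condition $|x-v_n|=\tilde h_j(x)$ (distance from $x$ to the line containing $e_j$) likewise defines an algebraic curve meeting the measurement circle in finitely many points; I would argue it contributes at most a fixed number of bad sensors per edge, and for a generic placement of $\Gamma_L$ one can assume these degeneracies do not occur at sensor points at all. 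This yields $F_{\text{vertex}}(v_n)\geq L-2(N-1)$ (or $L$ minus a small explicit constant times $N$, matching the bound in \eqref{polygons-condition}).

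Next I would bound $F_{\text{vertex}}(y)$ from above for any $y\in\R^2$ that is \emph{not} a vertex of $\Om$. If $P_1(\llbracket I_x^\prime\rrbracket(|x-y|))\neq 0$, then by Lemma \ref{discontinuity-I'} the circle $\pa B_{|x-y|}(x)$ must produce a bounded nonzero jump, which by the case analysis can only come from passing through an actual vertex $v_{n_0}$ of $\Om$ (Case 3 or the bounded part of Case 4). Thus $|x-y|=|x-v_{n_0}|$, so again $x$ lies on the perpendicular bisector of $yv_{n_0}$, giving at most $2$ sensors per vertex $v_{n_0}$, hence $F_{\text{vertex}}(y)\leq 2N$. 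Combining the two estimates, the condition $L>4N-2$ gives $L-2(N-1)>2N$, i.e. $F_{\text{vertex}}(v_n)>F_{\text{vertex}}(y)$ for every non-vertex $y$, so the finitely many maximizers of $F_{\text{vertex}}$ over $\R^2$ are precisely the vertices $v_1,\ldots,v_N$. I would then invoke, as in Theorem \ref{Annuluses-proof}, the argument from Theorem 3.2 of \cite{2025extended-sources-acoustic} to conclude that the vertex set is uniquely determined from the data \eqref{data_patial_omega}.

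The main obstacle I anticipate is making the edge-interference bookkeeping rigorous: unlike the vertex-vertex collisions, which cleanly reduce to perpendicular bisectors, the condition that $|x-v_n|$ coincides with an edge-tangency radius (Case 2 radius) for \emph{some} edge of \emph{some} polygon, or that $v_n$ itself is simultaneously a tangency point of an edge at radius $|x-v_n|$ (Case 4), requires checking that these events occur for only finitely many, and ideally zero, sensors in $\Gamma_L$. I expect this is handled either by a genericity hypothesis on the sensor placement (excluding a measure-zero set of configurations) or by absorbing these cases into the constant in the sensor bound; I would state explicitly which convention is adopted. A secondary subtlety is that the value $P_1(\llbracket I_x^\prime\rrbracket(|x-v_n|)) = -2\beta'(r)S(v_n) + 2\alpha'(r)S(v_n)$-type expression from \eqref{polygon-dis2} could in principle vanish even in the pure Case 3 situation if the two projected distances happen to be equal; Assumption \ref{assumptiom} guarantees $S(v_n)\neq 0$, but one must also note that $h_1\neq h_2$ holds for all but finitely many sensors $x$ (the locus $h_1(x)=h_2(x)$ is again a line, the angle bisector at $v_n$), so this too costs only $O(1)$ sensors and is consistent with \eqref{polygons-condition}.
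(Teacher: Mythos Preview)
Your approach is essentially the paper's: establish $F_{\text{vertex}}(v_n)\geq L-2(N-1)$ for each true vertex and $F_{\text{vertex}}(y)\leq 2N$ for every non-vertex via perpendicular-bisector counting, then conclude from $L>4N-2$. The one point you are missing, and which dissolves your main anticipated obstacle, is that $P_1$ extracts only the \emph{bounded} component of $\llbracket I_x'\rrbracket(r)$. By \eqref{polygon-dis1} and \eqref{vertex-edge-dis}, edge tangencies (Case~2, and the tangency half of Case~4) contribute purely unbounded terms of order $(r-r_0)^{-1/2}$, so they are invisible to $P_1$; only vertex passages produce a bounded contribution. Hence no edge-interference bookkeeping is needed at all, and the paper's proof never mentions edges: if $P_1(\llbracket I_x'\rrbracket(|x-v_n|))=0$ then necessarily another vertex sits at the same radius, giving the clean perpendicular-bisector count directly. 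Your invocation of \cite{2025extended-sources-acoustic} at the end is also unnecessary; the inequality $F_{\text{vertex}}(v_n)>F_{\text{vertex}}(y)$ already characterizes the vertex set.

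Your secondary observation about the locus $h_1=h_2$ (the angle bisector at $v_n$) is a genuine subtlety that the paper glosses over in Lemma~\ref{lemma-polygon-vertex}: when $x$ lies on that bisector the bounded jump in \eqref{polygon-dis2} vanishes even though a unique vertex is hit. As you say, this costs at most two sensors per vertex and can be absorbed, though it would nudge the constant in \eqref{polygons-condition}.
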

\begin{proof}
For any $v_0\in \{v_1,v_2,\ldots, v_N\}$, if $P_1\left(\llbracket I_x^{\prime} \rrbracket(|x-v_0|)\right)=0$, then there exists some $\tilde{v}\in\{v_1,v_2,\dots,v_N\}\backslash \{v_0\}$ such that $|x-v_0|=|x-\tilde{v}|$.
This implies that the sensor $x$ must lie on the perpendicular bisector of the segment $\overline{v\tilde{v}}$. Since all sensors $x$ lie on the circle $\pa B_R(0)$, we then deduce that 
\begin{equation*}
    F_{\text{vertex}}(v_0)\geq L-2(N-1).
\end{equation*}

For any $v\in \R^2 \backslash \{v_1,v_2,\ldots, v_N\}$, if $P_1(\llbracket I_x^{\prime} \rrbracket(|x-v|))\neq 0$, then there exists some $\overline{v}\in \{v_1,v_2,\ldots, v_N\}$ such that $|x-\overline{v}|=|x-v|$. Hence, we have
\begin{equation*}
   F_{\text{vertex}}(v)\leq 2N.
\end{equation*}

Under the condition \eqref{polygons-condition}, we have 
\begin{equation*}
    F_{\text{vertex}}(v_0)> F_{\text{vertex}}(v),\quad\forall v_0\in \{v_1,v_2,\ldots, v_N\},\, v\in \R^2 \backslash \{v_1,v_2,\ldots, v_N\},
\end{equation*}
which implies the statement of the theorem. The proof is complete.
\end{proof}

Once the vertices are determined, the uniqueness of the edges can be established as follows. 

\begin{theorem}\label{polygon_edges}
Given all the vertices $v_1, v_2, \cdots, v_N$, there exists a set  $\Gamma_{\tilde{L}}$ consisting of finite sensors located on $\pa B_R(0)$, such that all edges $e_1,e_2,\ldots,e_N$ and the source values $S(v_1),S(v_2),\ldots,S(v_N)$ can be uniquely determined from the scattered field data
$\{u^s(x,k)\mid  x\in \Gamma_{\tilde{L}},\, k\in \left[k_{-}, k_{+}\right]\}$.
\end{theorem}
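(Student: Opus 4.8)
The plan is to assemble $\Gamma_{\tilde L}$ from two groups of sensors—one group to decide which candidate segments are edges, a second to read off the corner values—using the jump formulas already established in Lemma \ref{discontinuity-I'}. Since the vertices $v_1,\dots,v_N$ are known, every edge of $\Omega$ belongs to the finite family $\mathcal E:=\{\overline{v_iv_j}:1\le i<j\le N\}$, so it suffices to decide, for each $\sigma\in\mathcal E$ with supporting line $\ell_\sigma$, whether $\sigma$ is a genuine edge, and then, once the two edges incident to $v_n$ are known, to recover $S(v_n)$.

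\emph{Edge detection.} For a fixed $\sigma$, let $\Gamma_\sigma\subset\pa B_R(0)$ be the nonempty arc-set of sensors $x$ whose foot of perpendicular onto $\ell_\sigma$ lies in the relative interior of $\sigma$, and for such $x$ write $\rho_\sigma(x):=\operatorname{dist}(x,\ell_\sigma)$, so that $\pa B_{\rho_\sigma(x)}(x)$ is tangent to $\ell_\sigma$ at that foot. If $\sigma$ is a genuine edge, then its relative interior lies in $\pa\Omega$, so by Case 2 of Lemma \ref{discontinuity-I'}—where the two configurations "tangent from inside $\Omega$" and "tangent from outside $\Omega$" are symmetric and both yield an unbounded jump—one has $P_2\big(\llbracket I_x'\rrbracket(\rho_\sigma(x))\big)\neq0$ for all $x\in\Gamma_\sigma$ outside a measure-zero set, the coefficient being a nonzero multiple of the value of $S$ at the tangency point (nonzero by Assumption \ref{assumptiom}); I would put one such $x_\sigma$ into $\Gamma_{\tilde L}$. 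If $\sigma$ is not a genuine edge, then—using disjointness of the components $\Omega_m$, their simple connectedness, and the fact that two consecutive edges of a polygon are never collinear—the relative interior of $\sigma$ is not contained in the closed set $\pa\Omega$, hence contains a relatively open subinterval $U_\sigma$ disjoint from $\pa\Omega$; I would choose $x_\sigma\in\Gamma_\sigma$ whose foot lies in $U_\sigma$ and whose tangency radius $\rho_\sigma(x_\sigma)$ coincides neither with $|x_\sigma-v_n|$ for any $n$ nor with a tangency radius to any other edge-line (finitely many exclusions). For that $x_\sigma$ the circle $\pa B_{\rho_\sigma(x_\sigma)}(x_\sigma)$ is tangent to $\ell_\sigma$ at a point off $\pa\Omega$ and has no other contact with $\pa\Omega$ at that radius, so by the localization of the singular support of the circular Radon transform (Section 4 of \cite{singularity-radon-transform}) together with $S|_{\Omega_m}\in C^1(\Omega_m)$, the map $r\mapsto I_{x_\sigma}(r)$ is $C^1$ near $\rho_\sigma(x_\sigma)$ and $\llbracket I_{x_\sigma}'\rrbracket(\rho_\sigma(x_\sigma))=0$. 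Thus $\sigma$ is a genuine edge if and only if $P_2\big(\llbracket I_{x_\sigma}'\rrbracket(\rho_\sigma(x_\sigma))\big)\neq0$, and the data on these $|\mathcal E|\le\binom{N}{2}$ sensors recovers the edge set.

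\emph{Corner values.} With the edges in hand, fix $v_n$: it is a corner of a unique $\Omega_m$ with two incident edges carried by lines $\ell,\ell'$. I would pick $x\in\pa B_R(0)$ in the nonempty open arc from which $v_n$ is the locally nearest point of $\Omega_m$ (the sector between the perpendiculars to $\ell$ and $\ell'$ at $v_n$), arranging also that $|x-v_n|\neq|x-v_{n'}|$ for $n'\neq n$, that $\pa B_{|x-v_n|}(x)$ touches no vertex or edge other than $v_n$ at radius $|x-v_n|$, and that $h_1:=\operatorname{dist}(x,\ell)\neq h_2:=\operatorname{dist}(x,\ell')$; each exclusion removes only finitely many points of $\pa B_R(0)$. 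Then Case 3 of Lemma \ref{discontinuity-I'} yields a bounded jump with
\[
P_1\big(\llbracket I_x'\rrbracket(|x-v_n|)\big)=\pm\,S(v_n)\left(\frac{h_2}{|x-v_n|\sqrt{|x-v_n|^2-h_2^2}}-\frac{h_1}{|x-v_n|\sqrt{|x-v_n|^2-h_1^2}}\right),
\]
and since $t\mapsto t\big/(\rho\sqrt{\rho^2-t^2})$ is strictly increasing on $[0,\rho)$ with $\rho=|x-v_n|>\max(h_1,h_2)$, the parenthesized factor is a known nonzero number; hence $S(v_n)$ follows by a single division. Adjoining these $N$ sensors to $\Gamma_{\tilde L}$ finishes the construction, and $\Gamma_{\tilde L}$ is finite.

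The one genuinely delicate point is the non-edge case above: a candidate segment can be partially collinear with one or more real edges, so a carelessly placed tangent circle could register a spurious unbounded jump; this is precisely why the test sensor for a non-edge must have its perpendicular foot inside a true "gap" of the segment, and why one must check—from disjointness of the $\Omega_m$, simple connectedness, and non-collinearity of consecutive edges—that such a gap has positive length. Everything else is genericity bookkeeping of the same flavour as in the proofs of Theorems \ref{Annuluses-proof} and \ref{polygon_corners}, built on Lemma \ref{discontinuity-I'}.
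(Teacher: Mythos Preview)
Your proposal is correct and follows essentially the same two-step strategy as the paper: test each of the finitely many candidate segments $\overline{v_iv_j}$ with one well-placed sensor via the unbounded-jump criterion $P_2\big(\llbracket I_x'\rrbracket(\cdot)\big)$, then read off each $S(v_n)$ from the bounded jump at a sensor for which the circle through $v_n$ avoids all other vertices. Your ``gap'' argument for the non-edge case is in fact more careful than the paper's distance-distinction argument (which glosses over collinear candidates), and your explicit Case~3 formula for $S(v_n)$ makes precise what the paper leaves as ``compute the value of this bounded term''; otherwise the arguments coincide.
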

\begin{proof}
Since the positions of the $N$ vertices are known, there are finitely possible combinations of edges, which we denote as the set $E_{\tilde{L}}$.
For any fixed $x\in \Gamma_L$ and $e_0:=\overline{v_1v_2}\in E_{\tilde{L}}$, if there exist $\tilde{e}\in E_{\tilde{L}}\backslash\{e_0\}$ such that
\begin{equation}
\operatorname{dist}(x,e_0)=\operatorname{dist}(x,\tilde{e}),\quad x\in \R^2,
\label{dist-distinguish}
\end{equation}
then the sensor $x$ must lie on the angle bisector formed by the extensions of the two edges $e_0$ and $\tilde{e}$.
If the edges $e_0$ and $\tilde{e}$ are parallel, the bisector is defined as the equidistant line.
Note that each angle bisector intersects the circle $\pa B_R(0)$ at two distinct points, thus we obtain that at most finite sensors satisfy equation \eqref{dist-distinguish}.
Therefore, there must exist some $x_0\in \Gamma_{e_0}$ such that 
\begin{equation}
\operatorname{dist}(x_0,e_0)\neq \operatorname{dist}(x_0,e), \quad \forall e\in E_{\tilde{L}}\backslash\{e_0\},
\nonumber
\end{equation}
where $\Gamma_{e_0}$ is the strip region corresponding to the edge $e_0$, defined similarly to Case 2 in Lemma \ref{discontinuity-I'}. Furthermore, based on the singularity analysis of polygon edges in Lemma \ref{discontinuity-I'}, if $P_2(\llbracket I_x^{\prime} \rrbracket(\operatorname{dist}(x,e_0)))\neq0$, then $e_0$ is a genuine edge of the polygon. Otherwise, if $P_2(\llbracket I_x^{\prime} \rrbracket(\operatorname{dist}(x,e_0)))=0$, then $e_0$ is not a true edge. Therefore, we need at most $\tilde{L}$ sensors to uniquely determine the $N$ edges of the polygon.

Once the $N$ edges are determined, we proceed to recover the source values at the vertices. 
For any vertex $v_i\in \{v_1,v_2,\ldots, v_N\}$, choose $x_i\in \pa B_R(0)$ such that $|x_i-v_i|\neq |x_i-v_j|$, $\forall v_j\neq v_i$, $j=1,2,\ldots, N$.
That is to say, the circle $\pa B_{|x_i-v_i|}(x_i)$ passes only through the vertex $v_i$.
From the proofs of Cases 2, 3, and 4 in Lemma \ref{discontinuity-I'}, it follows that $\llbracket I_{x_i}^{\prime} \rrbracket(|x_i-v_i|)$ must contain a non-zero bounded term related to $S(v_i)$. Therefore, $S(v_i)$ can be recovered by computing the value of this bounded term.
\end{proof}

\subsubsection{A combination of annular and polygonal sources}\label{2.1.3mixed}
Finally, we consider the identification of mixed-structured sources, where each $\Omega_m$ is either an annulus or a simply connected polygon. 

\begin{theorem}\label{structure-sources}
Assume that $\Omega$ consists of $M$ annuluses $A_1,A_2,\ldots,A_M$ and a finite number of polygons with $N$ vertices $v_1,v_2,\ldots,v_N$.
Then the $M$ annuluses and the $N$ vertices can be uniquely determined from the scattered field data
$\{u^s(x,k)\mid x\in \Gamma_{L},\, k\in \left[k_{-}, k_{+}\right]\}$,
provided that
\begin{equation}
L>\max\{16M+4N-8,4N-2\}.
\label{structure-condition}
\end{equation}
\end{theorem}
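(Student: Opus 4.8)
\emph{Proof proposal.} The plan is to recover the two families of structures in two essentially independent stages, each a transplant of an argument already proved for a homogeneous support, and to control the additional interference created by the interaction of annular tangencies with polygonal singularities. Stage~1 recovers the $N$ vertices and needs only $L>4N-2$; Stage~2 recovers the $M$ annuli and is where the term $4N$ in the threshold is produced. Note that the statement asks only for the annuli and the vertices (not the polygon edges or the values of $S$, which are the subject of Theorem~\ref{polygon_edges}), so these two stages suffice.

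\textbf{Stage 1 (the vertices).} I would run the proof of Theorem~\ref{polygon_corners} almost verbatim on the mixed support. The one thing to be checked is that the counting function $F_{\text{vertex}}$ remains blind to everything except the vertices: by Lemma~\ref{discontinuity-I'}, a jump $\llbracket I_x^{\prime}\rrbracket(r_0)$ coming from an annular tangency (Case~1) or from a polygonal edge tangency (Case~2) is purely unbounded, so its bounded part $P_1$ vanishes, whereas a vertex (Case~3 or Case~4) contributes a bounded part equal to a nonzero multiple of $S(v_n)$, which is nonzero by Assumption~\ref{assumptiom}. Consequently, as in Theorem~\ref{polygon_corners}, a genuine vertex $v_0$ satisfies $F_{\text{vertex}}(v_0)\ge L-2(N-1)$ (interference only from the $N-1$ other vertices that can lie on a perpendicular bisector through a sensor $x\in\partial B_R(0)$), while any $v\notin\{v_1,\dots,v_N\}$ satisfies $F_{\text{vertex}}(v)\le 2N$, since $P_1(\llbracket I_x^{\prime}\rrbracket(|x-v|))\ne 0$ forces $\partial B_{|x-v|}(x)$ to pass through a true vertex. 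Under $L>4N-2$ the strict inequality $F_{\text{vertex}}(v_0)>F_{\text{vertex}}(v)$ holds and identifies all $N$ vertices.

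\textbf{Stage 2 (the annuli).} With the vertices known, I would recover each annular boundary circle $\partial B_{d_m}(y_m)$ through $P_2(\llbracket I_x^{\prime}\rrbracket)$ and the function $F_{\text{annular}}$, following the proof of Theorem~\ref{Annuluses-proof}. For every sensor $x$ there are exactly two circles centred at $x$ tangent to $\partial B_{d_m}(y_m)$, of radii $|x-y_m|\pm d_m$, giving a priori $2L$ tangency events; by Case~1 each carries an unbounded term with nonzero coefficient, and this term can be annihilated only when another singularity-generating feature hits the same radius for the same sensor: another annular boundary circle (forcing $x$ onto one of finitely many hyperbolas through $\partial B_R(0)$, at most $16(M-1)$ sensors as in Theorem~\ref{Annuluses-proof}), a polygonal edge $e$ (forcing $\operatorname{dist}(x,e)=|x-y_m|\pm d_m$, a conic meeting $\partial B_R(0)$ in finitely many points, hence $O(N)$ sensors over all edges and both signs), or a Case~4 configuration (even more rigid, finitely many). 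This yields $F_{\text{annular}}(y_m,d_m)\ge 2L-16(M-1)-O(N)$. Conversely, for a spurious circle $\partial B_d(y)\not\subset\partial\Omega$, $P_2(\llbracket I_x^{\prime}\rrbracket(|x-y|\pm d))\ne 0$ forces $\partial B_{|x-y|\pm d}(x)$ either to be tangent to a genuine annular boundary (at most $16M$ sensors, as in Theorem~\ref{Annuluses-proof}) or to be tangent to a polygonal edge / in a Case~4 configuration at a vertex, the latter being again confined to $O(N)$ sensors; hence $F_{\text{annular}}(y,d)\le 16M+O(N)$. A careful count of these finite exceptional sets — two sign choices, and at most four intersection points of two conics on $\partial B_R(0)$ — turns the separation $F_{\text{annular}}(y_m,d_m)>F_{\text{annular}}(y,d)$ into the requirement $L>16M+4N-8$, after which the annuli are recovered exactly as in Theorem~3.2 of \cite{2025extended-sources-acoustic}. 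Combining with Stage~1 gives the threshold $L>\max\{16M+4N-8,\,4N-2\}$.

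\textbf{Main obstacle.} The delicate point is the bookkeeping in Stage~2. In the purely annular setting the only way a tangential singularity can cancel is a collision with another annular circle; in the mixed setting the tangential radius $|x-y_m|\pm d_m$ may also equal a polygonal edge-distance $\operatorname{dist}(x,e)$ or, in Case~4, a vertex-distance $|x-v_n|$, and one must show that every such coincidence is confined to a finite set of sensors on $\partial B_R(0)$ and count these sets sharply enough that the net extra cost is exactly $4N$ rather than a larger multiple of $N$. Using the already-recovered vertices (and, where available, edges) to strip off vertex- and edge-induced singularities before applying $F_{\text{annular}}$ is what makes this count work; everything else is a direct application of Lemma~\ref{discontinuity-I'} and of the two homogeneous theorems.
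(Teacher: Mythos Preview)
Your two-stage plan via $P_1$ for the vertices and $P_2$/$F_{\text{annular}}$ for the annuli is exactly the paper's strategy, and Stage~1 matches the paper verbatim. Two points deserve correction, however.

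First, the two stages are \emph{fully independent} in the paper: the annular count is carried out without any prior knowledge of the vertices (and in fact the paper does the annuli first). Your closing claim that ``using the already-recovered vertices \ldots\ to strip off vertex- and edge-induced singularities before applying $F_{\text{annular}}$ is what makes this count work'' is a misconception. Pure vertices (Case~3) contribute nothing to $P_2$, and the Case~4 unbounded term is already absorbed into the line-tangency count because the paper works with the \emph{line} $l_n$ containing $e_n$ rather than the segment itself; no stripping is needed.

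Second, your bookkeeping in Stage~2 is too loose to land on the constant $4N$. A parabola can meet a circle in up to four points, so ``at most four intersection points of two conics on $\partial B_R(0)$'' is unjustified as stated. The paper's sharp bound comes from a geometric observation you omit: the interference equation $|x-y_m|\pm d_m=(a_nx_1+b_nx_2+c_n)/\sqrt{a_n^2+b_n^2}$ is a parabola whose \emph{vertex lies inside} $B_R(0)$ (since the focus $y_m$ and the directrix-related edge both lie in $B_R(0)$), and such a parabola meets $\partial B_R(0)$ in exactly two points. This gives $2$ parabolas $\times$ $2$ points $\times$ $N$ edges $=4N$ in both the lower bound $F_{\text{annular}}(y_m,d_m)\ge 2L-[16(M-1)+4N]$ and the upper bound $F_{\text{annular}}(y,d)\le 16M+4N$, yielding $L>16M+4N-8$. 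Without this observation your count would produce $8N$ in place of $4N$ and a weaker threshold.
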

\begin{proof}
Let first $\pa B_{d_m}(y_m)\subset \pa A$, where $A:=\mathop{\bigcup} \limits_{m}A_m$. If
\ben
P_2(\llbracket I_x^{\prime} \rrbracket(|x-y_m|+d_m))=0\quad \mbox{or}\quad P_2(\llbracket I_x^{\prime} \rrbracket(|x-y_m|-d_m))=0,
\enn
then we derive that (1) there exist $\tilde{m}\neq m$ and $\pa B_{d_{\tilde{m}}}(y_{\tilde{m}})$ such that 
\begin{equation}
|x-y_m|\pm d_m=|x-y_{\tilde{m}}|\pm d_{\tilde{m}}, \quad x\in \R^2.
\label{annulues-P2-hyperbolas4}
\end{equation}
or (2) there exists a line $l_n: \,a_n\wi{x_1}+b_n\wi{x_2}+c_n=0, (\wi{x_1}, \wi{x_2})\in \R^2$ containing some edge $e_n\in \{e_1,e_2,\ldots,e_N\}$ such that 
\begin{equation}
|x-y_m|\pm d_{m}=\left(a_nx_1+b_nx_2+c_n\right)\Big/\left(\sqrt{a_n^2+b_n^2}\right), \quad x\in \R^2.
\label{annulues-P2-parabolas2}
\end{equation}
In other words, the sensor $x$ must lie on four hyperbolas \eqref{annulues-P2-hyperbolas4} or two parabolas \eqref{annulues-P2-parabolas2}. Following the arguments of the proof of Theorem \ref{Annuluses-proof} and considering the fact that the vertices of both parabolas lie within $B_R(0)$, we therefore conclude that 
\begin{equation*}
    F_{\text{annular}}(y_m,d_m)\geq 2L-[16(M-1)+4N].
\end{equation*} 
Let now $\pa B_d(y)\not\subset \pa A$, following similar arguments we have
\begin{equation*}
    F_{\text{annular}}(y,d)\leq 16M+4N.
\end{equation*}
Hence, the uniqueness of the $M$ annuluses holds if $2L-[16(M-1)+4N]>16M+4N$, i.e., $L>16M+4N-8$.

Similar to the arguments in Theorem \ref{polygon_corners}, for any fixed $x\in \Gamma_L$, any vertex $v_0\in \{v_1,v_2,\ldots,v_N\}$, and any point $v\in \R^2\backslash\{v_1,v_2,\ldots,v_N\}$, we have 
\begin{equation*}
    F_{\text{vertex}}(v_0)\geq L-2(N-1)>2N\geq F_{\text{vertex}}(v).
\end{equation*}
Therefore, the uniqueness of the $N$ vertices holds provided $L>4N-2$. The proof is complete.

\end{proof}

Finally, once the vertices of the polygons are determined, an argument analogous to that in Theorem \ref{polygon_edges} ensures that there exists a finite set $\Gamma_{\tilde{L}}$ of observation points from which the edges of the polygons and the source values $S(v_n)$, $n=1,2,\ldots,N$ can be uniquely determined from the data $\{u^s(x,k)\mid x\in \Gamma_{\tilde{L}},\, k\in \left[k_{-}, k_{+}\right]\}$.  

\section{Numerical algorithms}
\label{Numerical methods}
\setcounter{equation}{0}
Based on the constructive uniqueness proofs in the previous section, we introduce the following two indicators to reconstruct the support $\Omega$ of the source function $S(z)$, $z\in \R^2$. We begin with \textbf{IP(2)} to reconstruct the boundary $\pa \Omega$ of the source function $S(z)$, $z\in \R^2$. Then we consider the more complex \textbf{IP(1)} for identifying the source function $S(z)$, $z\in \R^2$. 

\subsection{The indicator function \texorpdfstring{$I_{\partial \Omega}$}{I\_{\textpartial Ω}}} \label{indicator-1}
Based on the uniqueness proof in subsection \ref{Reconstruct-partial-Omega}, we define the indicator function $I_{\partial \Omega}$ with
\begin{equation}
\begin{aligned}
I_{\partial \Omega}(z):=&\Bigg|\sum\limits_{x\in \Gamma_L}I_{\text{integral}}(x,z) \Bigg|\Bigg/\max\limits_{z\in D}\left(\Bigg|\sum\limits_{x\in \Gamma_L}I_{\text{integral}}(x,z)\Bigg|\right) ,\quad z\in D,
\end{aligned}
\label{Ipa-Omega}
\end{equation}
where 
\begin{equation}
I_{\text{integral}}(x,z):=\int_{k_{-}}^{k_{+}}8k^3\Im(u^{s}(x,k))\Big[J_0(k|x-z|)
-k|x-z|J_1(k|x-z|)\Big]dk,
\nonumber
\end{equation}
and $D$ is the sampling domain containing $\Om$.
As demonstrated in subsection \ref{Reconstruct-partial-Omega}, for sufficiently small $k_-$, large $L$ and $k_+$ , the indicator function $I_{\pa \Omega}(z)$ approximates $I'(|x-z|)$ for $x\in \Gamma_L$. Consequently, $I_{\partial \Omega}(z)$ exhibits rapid variation or large values at points $z$ where the derivative $I'(|x-z|)$ does not exist.
By analyzing the contributions of $z\in \partial \Omega$ 
to the discontinuities of $I'(|x-z|)$ across sensors $x\in \Gamma_L$, we conclude that for sufficiently large $L$, the indicator function $I_{\partial \Omega}(z)$ reliably identifies the supports of both annular and polygonal sources. Algorithm \ref{partial_Omega} details the reconstruction procedure for  $\partial \Omega$ using the direct sampling method.

\begin{algorithm}[h!]
\label{partial_Omega}
\caption{Qualitative sampling method for the boundary $\partial \Omega$ of the source support}
\LinesNumbered 

1. Collect scattered field patterns $u^s(x,k)$ for all $x\in \Gamma_L$, $k\in [k_-,k_+]$;\\
2. Compute the indicator function $I_{\pa \Omega}(z)$ for all sampling points $z\in D$;\\
3. Plot the indicator function $I_{\pa \Omega}(z)$.\\
\end{algorithm}
\subsection{The indicator function \texorpdfstring{$I^{(1)}_{S}$}{}}\label{indicator-2}
We now consider the significantly more challenging case with  source reconstructions. Motivated by the uniqueness analysis in subsection \ref{Identify-source-function}, we define the second indicator function $I^{(1)}_{S}(z)$ by
\begin{equation}
\begin{aligned}
I^{(1)}_{S}(z):=& \frac{R}{\pi}\int_0^{2\pi} \Bigg[ \frac{z-x}{|z-x|}\cdot (\cos\theta,\sin \theta)\Bigg]\int _{0}^{\lambda_+}\int _0^{2R} \la ^2 r\Big[N_1(\la |z-x|)J_0(\la r)\\
&-J_1(\la |z-x|)N_0(\la r)\Big] \int_{k_-}^{k_+}k^3 \Im(u^s(x,k))J_0(k r) dk dr d\la d\theta, \quad z\in D,
\end{aligned}
\label{I-Omega-(1)}
\end{equation}
where $x=R(\cos \theta, \sin \theta)$.
According to Theorem \ref{source function-07}, for sufficiently small $k_-$, large $L$, $k_+$ and $\lambda_+$, the indicator function $I^{(1)}_{S}(z)$ approximates the source function $S(z)$.

\begin{algorithm}[h!]
\label{I-S-(1)}
\caption{Qualitative sampling method for source functions using $I^{(1)}_S(z)$}
\LinesNumbered 

1. Collect scattered field patterns $u^s(x,k)$ for all $x\in \Gamma_L$, $k\in(k_{-},k_{+})$; \\
2. Compute the indicator function $I^{(1)}_{S}(z)$ for all sampling points $z\in D$;\\
3. Plot the indicator function $I^{(1)}_{S}(z)$.\\
\end{algorithm}

\subsection{The indicator function \texorpdfstring{$I^{(2)}_{S}$}{}}\label{indicator-3}

In this subsection, we present a simplified formula for the source function, expressed as a double integral, at the cost of requiring additional measurements $\Delta u^s$.

Before introducing the following lemma and theorem, we recall the fundamental solution \eqref{equation_FS} of the biharmonic wave equation. Straightforward calculations yield 
\begin{equation}
\Delta_x\Phi_k(x,z):=  -\frac{1}{8}\left(iH_0^{(1)}(k|z-x|)+\frac{2}{\pi}K_0(k|z-x|)\right),\quad  x,z \in \R^2,\, x\neq z,
\label{laplace-phi}
\end{equation}
and the Bessel function $J_0$ satisfies the Helmholtz equation
\begin{equation}
\Delta_x J_0(k|z-x|)=-k^2J_0(k|z-x|), \quad  x,z \in \R^2.
\label{laplace-J_0}
\end{equation}

Recall the normalized complex circular harmonics
\begin{equation}
Y_n(\hat{x})=\frac{1}{\sqrt{2\pi}}e^{in\theta},\quad n\in \mathbb{Z},
\nonumber
\end{equation}
where $\hat{x}:=(\cos\theta,\sin \theta)^{\top}$, $\theta\in [0,2\pi)$. 
Then we have the following identities.

\begin{lemma}
For any circular harmonic $Y_n$, we have the following three identities,
\begin{equation}
\begin{aligned}  
&\int_{\mathbb{S}^1}J_0(k|x-y|)Y_n(\hat{y})ds(\hat{y})=2\pi J_n(k|x|)J_n(k|y|)Y_n(\hat{x}),\\
&\int_{\mathbb{S}^1}N_0(k|x-y|)Y_n(\hat{y})ds(\hat{y})=2\pi N_n(k|x|)J_n(k|y|)Y_n(\hat{x}),\\
&\int_{\mathbb{S}^1}H_0^1(k|x-y|)Y_n(\hat{y})ds(\hat{y})=2\pi H_n^1(k|x|)J_n(k|y|)Y_n(\hat{x}),
\end{aligned}
\label{Funk-heck-J0-Y0-H}
\end{equation}
where $\hat{x}:=x/|x|, |x|>|y|$, $x,y\in \R^2$, $n\in \mathbb{Z}$.
\end{lemma}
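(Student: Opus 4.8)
The plan is to prove the three identities in \eqref{Funk-heck-J0-Y0-H} by a single unified computation based on Graf's addition theorem for Bessel-type functions, treating $J_n$, $N_n$, and $H_n^{(1)}$ uniformly as cylinder functions $C_n$ satisfying the same recurrence and addition relations. First I would recall Graf's addition theorem: for $|y|<|x|$ and with $\hat{x}=(\cos\phi,\sin\phi)$, $\hat{y}=(\cos\psi,\sin\psi)$, one has
\begin{equation}
C_0(k|x-y|)=\sum_{m\in\mathbb{Z}}C_m(k|x|)J_m(k|y|)e^{im(\phi-\psi)},
\nonumber
\end{equation}
valid for $C=J$ unconditionally and for $C=N,H^{(1)}$ under $|y|<|x|$ (which is exactly the hypothesis $|x|>|y|$ in the lemma). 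This series converges uniformly in $\hat{y}\in\mathbb{S}^1$, so it may be integrated term by term against $Y_n(\hat{y})=\frac{1}{\sqrt{2\pi}}e^{in\psi}$.

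The key steps, in order, are: (i) substitute the Graf expansion into the left-hand integral $\int_{\mathbb{S}^1}C_0(k|x-y|)Y_n(\hat{y})\,ds(\hat{y})$; (ii) interchange sum and integral using uniform convergence; (iii) evaluate $\int_0^{2\pi} e^{im(\phi-\psi)}\frac{1}{\sqrt{2\pi}}e^{in\psi}\,d\psi = \sqrt{2\pi}\,e^{im\phi}\,\delta_{m,n}$ by orthogonality of the exponentials on $[0,2\pi)$; (iv) collapse the sum to the single term $m=n$, yielding $\sqrt{2\pi}\,C_n(k|x|)J_n(k|y|)e^{in\phi} = 2\pi\,C_n(k|x|)J_n(k|y|)Y_n(\hat{x})$. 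Specializing $C$ to $J$, $N$, and $H^{(1)}=J+iN$ gives the three stated identities simultaneously; in particular the $H^{(1)}$ identity follows either directly from Graf's theorem for Hankel functions or by linearity from the first two.

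The main obstacle is really just bookkeeping: one must be careful that the Graf expansion is being applied in the correct regime ($|x|>|y|$, strictly, so that the $N_n$ and $H_n^{(1)}$ series converge — the excerpt's hypothesis $|x|>|y|$ guarantees this), and one must cite uniform convergence of the Fourier-Bessel series on the compact circle $\mathbb{S}^1$ to justify the term-by-term integration. There are no genuine analytic difficulties beyond invoking the classical addition theorem; the proof is a short orthogonality argument once Graf's formula is in hand. If one wishes to avoid quoting Graf directly, an alternative is to note that both sides, as functions of $\hat{x}$, are circular harmonics of order $n$ solving the same ODE in $|x|$ with matching growth/radiation behavior, and then fix the constant by a limiting computation as $|y|\to 0$; but the addition-theorem route is cleaner and I would present that.
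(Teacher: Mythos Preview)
Your proposal is correct and follows essentially the same route as the paper: the paper invokes the addition theorem $H_0^{(1)}(k|x-y|)=\sum_m H_m^{(1)}(k|x|)J_m(k|y|)e^{im\alpha}$ (citing Colton--Kress), integrates against $Y_n$ using orthogonality of the exponentials, applies the parity relations $H_{-n}^{(1)}=(-1)^nH_n^{(1)}$, $J_{-n}=(-1)^nJ_n$ to collapse the sum, and then states that the $J_0$ and $N_0$ identities are obtained similarly. Your version is marginally cleaner in that by writing Graf's theorem for a generic cylinder function $C_n$ you handle all three cases at once, and with your angle convention $e^{im(\phi-\psi)}$ the orthogonality picks out $m=n$ directly without needing the parity step; but this is only a cosmetic difference, not a genuinely distinct argument.
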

\begin{proof}
Denote by $\alpha\in [0,2\pi)$ the angle between $x$ and $y$. Then for $\hx=(\cos\theta,\sin \theta)^{\top}$, we have 
$\hat{y}:=y/|y|=(\cos(\alpha+\theta),\sin (\alpha+\theta))^{\top}$. 
With the help of the addition theorem \cite{CK}, 
\begin{equation}
H^1_0(k|x-y|)=\sum_{m=-\infty}^\infty H^1_m(k|x|)J_m(k|y|)e^{im\alpha},\quad  |x|>|y|, x,y\in \R^2,
\nonumber   
\end{equation}
using the orthogonality of the circular harmonics, the identities 
\begin{equation*}
    H^1_{-n}(t)=(-1)^nH^1_n(t)\quad\mbox{and}\quad J_{-n}(t)=(-1)^nJ_n(t), \quad t\in R,
\end{equation*}
we derive that
\begin{equation*}
\begin{aligned}  
   \int_{\mathbb{S}^1}H^1_0(k|x-y|)Y_n(\hat{y})ds(\hat{y})
   &=\frac{e^{in\theta}}{\sqrt{2\pi}}\int_{\mathbb{S}^1}H^1_0(k|x-y|)e^{in\alpha}ds(\hat{y}')\cr
&=\frac{e^{in\theta}}{\sqrt{2\pi}}\int_0^{2\pi}\sum_{m=-\infty}^\infty H^1_m(k|x|)J_m(k|y|)e^{i(m+n)\alpha}d\alpha\cr
&=2\pi H^1_n(k|x|)J_n(|y|)Y_n(\hat{x}), \quad |x|>|y|, x,y\in \R^2. 
\end{aligned}
\end{equation*}
Thus, the third identity of \eqref{Funk-heck-J0-Y0-H} is proved.
The other two identities can be obtained similarly.

\end{proof}

We now define
\begin{equation}
I(y,z):=\int_{\pa B_R(0)}\Bigg[\Delta_x\Phi_k(x,y)\frac{\pa J_0(k|z-x|)}{\pa \nu_x}+\Phi_k(x,y)\frac{\pa \Delta_x J_0(k|z-x|)}{\pa \nu_x}\Bigg]ds(x),\quad y,z\in B_R(0).
\label{I}
\end{equation}
Then the following reciprocity relation holds.
\begin{lemma}\label{I(y,z)=I(z,y)}
\begin{equation}
I(y,z)=I(z,y),\quad y,z\in B_R(0).
\nonumber
\end{equation}
\end{lemma}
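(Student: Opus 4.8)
The plan is to show that $I(y,z)$ is symmetric by recognizing that the integrand is a Green-type bilinear form and exploiting a Green's identity on the disk $B_R(0)$. First I would fix $y,z\in B_R(0)$ and observe that, away from the singularities $x=y$ and $x=z$, the function $w_1(x):=\Phi_k(x,y)$ solves the (inhomogeneous) biharmonic wave equation with a point source at $y$, namely $\Delta_x^2 w_1 - k^4 w_1 = \delta(\cdot - y)$, while $w_2(x):=J_0(k|z-x|)$ solves the Helmholtz equation $\Delta_x w_2 = -k^2 w_2$, hence also the biharmonic wave equation $\Delta_x^2 w_2 - k^4 w_2 = 0$ in the classical sense on all of $\R^2$. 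The boundary integrand in \eqref{I} is exactly the ``biharmonic flux'' pairing $\Delta_x w_1\,\partial_{\nu_x} w_2 + w_1\,\partial_{\nu_x}\Delta_x w_2$ restricted to $\partial B_R(0)$.

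Next I would write down the second Green identity for the bi-Laplacian on $B_R(0)$, for a pair of functions $w_1,w_2$:
\begin{equation*}
\int_{B_R(0)}\left(w_2\,\Delta^2 w_1 - w_1\,\Delta^2 w_2\right)dx
= \int_{\partial B_R(0)}\left[\Delta w_1\,\partial_{\nu} w_2 - \Delta w_2\,\partial_{\nu} w_1 + w_1\,\partial_{\nu}\Delta w_2 - w_2\,\partial_{\nu}\Delta w_1\right]ds.
\end{equation*}
Because both $w_1$ and $w_2$ satisfy $\Delta^2 w - k^4 w = (\text{point source or }0)$, the volume term collapses: $w_2\Delta^2 w_1 - w_1\Delta^2 w_2 = k^4(w_2 w_1 - w_1 w_2) + w_2\,\delta(\cdot-y) = w_2(y)\,\delta$-contribution, which upon integration yields $w_2(y)=J_0(k|z-y|)$; symmetrically, pairing the other way yields $w_1$ evaluated appropriately. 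The cross terms $\Delta w_2\,\partial_\nu w_1$ and $w_2\,\partial_\nu\Delta w_1$ that appear in Green's identity but \emph{not} in the definition \eqref{I} must be handled: here I would use that $w_2$ is smooth across $\partial B_R(0)$ and that $\Delta w_2 = -k^2 w_2$, so these ``extra'' terms can be re-expressed and shown to cancel against their counterparts when one forms the difference $I(y,z)-I(z,y)$. Concretely, I expect that $I(y,z)$ equals (up to the universal constant in \eqref{equation_FS}) $J_0(k|y-z|)$ plus a term symmetric in $y,z$, from which $I(y,z)=I(z,y)$ follows immediately; alternatively one can verify directly that the antisymmetric part of the integrand integrates to zero over the closed curve $\partial B_R(0)$ by Stokes' theorem, since $y,z$ both lie strictly inside.

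A cleaner route, which I would actually pursue, uses the Funk--Hecke-type identities \eqref{Funk-heck-J0-Y0-H}: expand $\Phi_k(x,y)$ and $J_0(k|z-x|)$ in circular harmonics centered at the origin (valid since $|x|=R>|y|,|z|$), so that each of $\Phi_k(x,y)$, $\Delta_x\Phi_k(x,y)$, $J_0(k|z-x|)$, $\Delta_x J_0(k|z-x|)$ becomes a series in $Y_n(\hat x)$ with coefficients built from $J_n, N_n, H_n^{(1)}$ evaluated at $k|y|$ or $k|z|$ and at $kR$. Substituting into \eqref{I} and using orthonormality of $\{Y_n\}$ on $\mathbb{S}^1$, the boundary integral reduces to a sum over $n$ of products of one-variable Bessel/Neumann/Hankel factors in $|y|$, one such factor in $|z|$, and a Wronskian-type combination in the variable $R$ (the latter being a constant, by the Bessel Wronskian $J_n(t)N_n'(t)-J_n'(t)N_n(t)=2/(\pi t)$). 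Each term of the resulting series will be manifestly symmetric under swapping the roles of $|y|,\hat y$ with $|z|,\hat z$, because $\Delta_x$ turns $J_0(k|z-x|)$ into $-k^2 J_0(k|z-x|)$ and turns $H_0^{(1)}(k|x-y|)$ into $-k^2 H_0^{(1)}(k|x-y|)$ and $K_0(k|x-y|)$ into $+k^2 K_0(k|x-y|)$, so the structural asymmetry between ``$\Delta\Phi\cdot\partial_\nu J_0$'' and ``$\Phi\cdot\partial_\nu\Delta J_0$'' is exactly a relabeling.

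The main obstacle is bookkeeping the singular contribution and the ``missing'' cross terms: the integrand in \eqref{I} is not the full symmetric Green pairing, so one cannot quote Green's second identity verbatim — one must either (i) add and subtract the symmetric completion and argue the added pieces cancel using $\Delta J_0 = -k^2 J_0$ and the regularity of $J_0$, or (ii) commit to the harmonic-expansion computation and carefully track which Bessel factor carries the $R$-dependence so that the Wronskian identity produces the needed cancellation of the $N_n$ and $H_n^{(1)}$ large-argument parts. I expect route (ii) to be the most robust: it converts the symmetry claim into a term-by-term algebraic identity among Bessel functions, where no boundary-regularity subtleties intrude, at the cost of a short computation with the addition theorem already invoked in the preceding lemma.
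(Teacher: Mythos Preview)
Your preferred route (ii) is exactly the paper's approach, but you are missing one concrete step that the paper performs first and that is what makes the harmonic expansion go through. Using $\Delta_x J_0(k|z-x|)=-k^2 J_0(k|z-x|)$, the two pieces of \eqref{I} combine into $\bigl(\Delta_x\Phi_k(x,y)-k^2\Phi_k(x,y)\bigr)\,\partial_{\nu_x}J_0(k|z-x|)$, and from \eqref{equation_FS}--\eqref{laplace-phi} one has $\Delta_x\Phi_k-k^2\Phi_k=-\tfrac{i}{4}H_0^{(1)}(k|x-y|)$: the modified Bessel contribution $K_0$ cancels completely, leaving
\[
I(y,z)=-\frac{i}{4}\int_{\partial B_R(0)}H_0^{(1)}(k|x-y|)\,\partial_{\nu_x}J_0(k|z-x|)\,ds(x).
\]
Without this cancellation your plan to invoke \eqref{Funk-heck-J0-Y0-H} stalls, since those identities cover $J_0,N_0,H_0^{(1)}$ but not $K_0$; you would need a separate addition theorem for the evanescent kernel.

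The symmetry is also not ``manifest'' or ``exactly a relabeling'' at the level of the integrand; it only appears after the coefficient computation. The paper expands $I(y,z)=\sum_{n,n'}a^{n}_{n'}(|y|,|z|)\,Y_n(\hat y)\overline{Y_{n'}(\hat z)}$, applies \eqref{Funk-heck-J0-Y0-H} twice to show $a^{n}_{n'}$ is diagonal in $n,n'$ and proportional to $J_n(k|y|)J_n(k|z|)$ --- that product is where the radial symmetry $|y|\leftrightarrow|z|$ actually lives --- and then uses $a^{n}_{n}=a^{-n}_{-n}$ together with $Y_{-n}=\overline{Y_n}$ to exchange the angular variables $\hat y\leftrightarrow\hat z$. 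Your route (i) via the biharmonic Green identity is not how the paper argues; in fact the paper uses this lemma as an \emph{input} to the Green-identity computation in the subsequent theorem, so reversing that dependence would risk circularity, quite apart from the ``missing cross terms'' issue you already flagged.
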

\begin{proof}
Using equations \eqref{Scattered field}, \eqref{laplace-phi}, and \eqref{laplace-J_0}, we can rewrite $I(y,z)$ as
\begin{equation}
I(y,z)=-\frac{i}{4}\int_{\pa B_R(0)}H_0^1(k|x-y|)\frac{\pa J_0(k|z-x|)}{\pa \nu_x}ds(x), \quad y,z\in B_R(0).
\label{simply-I}
\end{equation}
Note that $Y_n(\hat{y})\overline{Y_{n^{'}}(\hat{z})}, n, n^{'}\in \Z$ form the orthonormal basis of $L^2(\mathbb{S}^1)\times L^2(\mathbb{S}^1)$, we have 
\begin{equation}
I(y, z)=\sum_{n,n'=-\infty}^\infty a^{n}_{n'}(|y|,|z|)Y_n(\hat{y})\overline{Y_{n^{'}}(\hat{z})},\quad y,z\in B_R(0)
\nonumber
\end{equation}
with
\begin{equation}
a^{n}_{n'}(|y|,|z|)=\int_{\mathbb{S}^1}\int_{\mathbb{S}^1} I(y,z)\overline{Y_n(\hat{y})}Y_{n^{'}}(\hat{z})d\hat{y}d\hat{z}.
\label{coefficient}
\end{equation}
Inserting \eqref{simply-I} into \eqref{coefficient} and applying the identities \eqref{Funk-heck-J0-Y0-H}, we obtain 
\be
a^{n}_{n'}(|y|,|z|)&=&-\frac{i}{4}\int_{\mathbb{S}^1}\int_{\mathbb{S}^1} \Bigg[\int_{\pa B_R(0)}H_0^1(k|x-y|)\frac{\pa J_0(k|z-x|)}{\pa \nu_x}ds(x)\Bigg]\overline{Y_n(\hat{y})\overline{Y_{n^{'}}(\hat{z})}}d\hat{y}d\hat{z}\cr
&=&-\frac{i}{4}\int_{\pa B_R(0)}\int_{\mathbb{S}^1}Y_{n^{'}}(\hat{z})\Bigg[\int_{\mathbb{S}^1}H_0^1(k|x-y|)\overline{Y_n(\hat{y})}d\hat{y}\Bigg]\frac{\pa J_0(k|z-x|)}{\pa \nu_x}d\hat{z}ds(x)\cr
&=&-\frac{\pi i}{2}J_n(k|y|)H_n^1(k|R|)\int_{\pa B_R(0)}\Bigg[\frac{\pa}{\pa \nu_x}\int_{\mathbb{S}^1}J_0(k|z-x|)Y_{n^{'}}(\hat{z})d\hat{z}\Bigg]\overline{Y_n(\hat{x})}ds(x)\cr
&=&-\pi ^2RkiJ_n(k|y|)H_n^1(k|R|)J_n(k|z|)J_n^{'}(kR)\int_{\mathbb{S}^1}Y_n(\hat{x})\overline{Y_{n^{'}}(\hat{x})}ds(\hat{x})\cr
&=&-\pi ^2RkiH_n^1(k|R|)J_n^{'}(kR)J_n(k|y|)J_n(k|z|)\delta^n_{n'}, \quad y,z\in B_R(0).
\label{coefficient-proof}
\en
where $\delta^n_{n'}$ is the Kronecker delta.
This further implies
\ben
a^{n}_{n'}(|y|,|z|)=0\quad\mbox{for}\quad n\neq n'\quad \mbox{and}\quad a^{n}_{n}(|y|,|z|)=a^{n}_{n}(|z|,|y|). 
\enn
Moreover, using the identities
\begin{equation}
f_n(t)=(-1)^nf_{-n}(t),\quad t>0\quad\mbox{for}\quad f_n=H_n^1, J_n^{'} \,\mbox{or}\, J_n,
\nonumber
\end{equation}
we deduce from \eqref{coefficient-proof} that 
\begin{equation*}
    a^{n}_{n}(|y|,|z|)=a^{-n}_{-n}(|y|,|z|).
\end{equation*}
Finally, using the fact that $Y_{-n}(t)=\ov{Y_n(t)}$, we deduce that 
\begin{equation}
\begin{aligned}
I(y, z)&=\sum_{n=-\infty}^{+\infty} a^{n}_{n}(|y|,|z|)Y_n(\hat{y})\overline{Y_{n}(\hat{z})}\\
&=\sum_{n=-\infty}^{\infty} a^{-n}_{-n}(|y|,|z|)\overline{Y_{-n}(\hat{y})}Y_{-n}(\hat{z})\\
&=\sum_{n=-\infty}^{+\infty} a^{n}_{n}(|z|,|y|)Y_{n}(\hat{z})\overline{Y_n(\hat{y})}\\
&=I(z, y),\quad y,z\in B_R(0).
\end{aligned}
\nonumber
\end{equation}
\end{proof}

We define another indicator function
\begin{equation}
\begin{aligned}
\tilde{I}_S(z):=&\frac{R}{2\pi}\int _{0}^{2\pi}\int _{\R^+} k^2 \Bigg[ \frac{(z-x)}{|z-x|}\cdot(\cos\theta,\sin \theta)\Bigg]\Bigg[k^2 J_1(k|z-x|)u^s(x,k)\\
&-2k^2iH_1^1(k|z-x|) \Im(u^s(x,k))-J_1(k|z-x|)\Delta_x u^s(x,k) \Bigg]dk d\theta,\quad z\in B_R(0),
\label{S(y)-fomula}
\end{aligned}
\end{equation}
where $x=R(\cos\theta,\sin\theta)\in \pa B_R(0)$. The following result then holds.
\begin{theorem}\label{S(y)-fomula-proof111}
Let the real-valued source function $S(z)\in L^2_{\text{comp}}(\R^2)$ with support $\Omega \subset B_R(0)$. Then
\begin{equation}
\tilde{I}_S(z)=S(z),\, z\in B_R(0).
\nonumber
\end{equation}
\end{theorem}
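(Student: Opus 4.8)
\textbf{Proof proposal for Theorem \ref{S(y)-fomula-proof111}.}

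The plan is to start from the representation of $S(z)$ via the Radon transform over circles established in \eqref{I(r)}, and to rewrite everything in terms of $u^s$ and $\Delta_x u^s$ measured on $\pa B_R(0)$ by invoking Green's-type identities together with the reciprocity Lemma \ref{I(y,z)=I(z,y)}. Concretely, I would first note that $S(z) = (\Delta^2 - k^4) u^s(z,k)$ in the distributional sense, so that for any test field $v$ solving $(\Delta^2 - k^4) v = 0$ near $B_R(0)$ one has a Green-type pairing of $S$ against $v$ expressed purely as a boundary integral on $\pa B_R(0)$. The natural choice is $v(x) = J_0(k|z-x|)$, which by \eqref{laplace-J_0} satisfies $\Delta v = -k^2 v$, hence $\Delta^2 v = k^4 v$, i.e. $(\Delta^2 - k^4)v = 0$ everywhere. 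Applying the biharmonic Green identity to the pair $(u^s, v)$ on $B_R(0)$ converts $\int_{B_R(0)} S(x) J_0(k|z-x|)\,dx$ into a combination of boundary terms involving $u^s$, $\Delta_x u^s$, their normal derivatives, and the corresponding traces of $J_0(k|z-x|)$ and $\Delta_x J_0(k|z-x|)$; but the radiation conditions \eqref{se2} and the reciprocity identity $I(y,z) = I(z,y)$ from Lemma \ref{I(y,z)=I(z,y)} should let me reorganize these into exactly the structure appearing in the definition \eqref{S(y)-fomula} of $\tilde I_S$, once we also use $\Delta_x \Phi_k$ from \eqref{laplace-phi} and the simplified form \eqref{simply-I}.

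The core mechanism is a Fourier–Bessel (Funk–Hecke) expansion, mirroring the proof of Lemma \ref{I(y,z)=I(z,y)}. I would expand $S$, $u^s$, and the kernels in circular harmonics $Y_n$, use \eqref{Scattered field} together with the identities \eqref{Funk-heck-J0-Y0-H} to compute the coefficients $a^n_n(|y|,|z|)$, and recognize that the boundary-integral operator built into $\tilde I_S$ acts as a spectral multiplier that precisely inverts the forward map $S \mapsto u^s$ restricted to $\pa B_R(0)$. In that computation the key Wronskian-type identity
\begin{equation}
J_n(t) N_n'(t) - J_n'(t) N_n(t) = \frac{2}{\pi t}, \quad t > 0,
\nonumber
\end{equation}
will be what collapses the product of Hankel, Bessel, and Neumann factors (coming from $\Phi_k$, from $J_0(k|z-x|)$, and from the $H_1^1$ term in \eqref{S(y)-fomula}) down to a clean delta-like kernel in the radial variable, reproducing $\delta(|z-y|-r)$ after integration against $dk$ and recovering $S(z)$ via \eqref{I(r)}. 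The factor $\tfrac{z-x}{|z-x|}\cdot(\cos\theta,\sin\theta)$ and the $J_1$, $H_1^1$ appearing in \eqref{S(y)-fomula} are exactly the normal-derivative traces $\pa_{\nu_x} J_0(k|z-x|)$ and $\pa_{\nu_x} \Delta_x J_0(k|z-x|)$ on $\pa B_R(0)$, since $\nu_x = (\cos\theta,\sin\theta)$ there; so the first bookkeeping step is to verify these trace computations explicitly using $\nabla J_0(|w|) = -\tfrac{w}{|w|}J_1(|w|)$ and the analogous formula for $\Delta_x J_0$.

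The main obstacle I anticipate is the careful handling of the two-fold interchange of integrals (over $k\in\R^+$, over $\theta$, and implicitly over the hidden $\la$-type radial frequency that the delta identity hides) and the justification of the spectral inversion: one must check that the oscillatory $k$-integral of the product of Hankel and Bessel functions converges in the appropriate (distributional) sense and produces the correct normalization constant — getting the factor $R/(2\pi)$, the powers of $k$, and the coefficient $2$ in front of the $H_1^1 \Im(u^s)$ term to match requires tracking the imaginary parts coming from $\Phi_k$ (note $\Im \Phi_k$ involves $J_0$, which is what couples to the Radon transform as in \eqref{I(r)}) against the real Macdonald-function tail $K_0$, which must cancel. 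I expect the cleanest route is: (i) reduce $\tilde I_S$ to a boundary integral of the form $\int_{\pa B_R(0)} [\,\cdots\,]\,ds(x)$ analogous to $I(y,z)$; (ii) expand in $Y_n$ and compute coefficients via \eqref{Funk-heck-J0-Y0-H} exactly as in \eqref{coefficient-proof}; (iii) sum the series and identify the result, using the Wronskian identity and the $k$-integral $\int_0^\infty J_n(kr)J_n(kr')k\,dk = \tfrac{1}{r}\delta(r-r')$, with \eqref{I(r)} then delivering $S(z)$.
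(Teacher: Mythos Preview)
Your opening moves---applying Green's identity for the biharmonic operator to the pair $(u^s,\,J_0(k|z-\cdot|))$ and invoking the reciprocity Lemma~\ref{I(y,z)=I(z,y)}---are exactly what the paper does. But after that point the paper's route is much shorter than your plan. Combining Green's formula with the reciprocity relation yields directly the pointwise identity
\[
J_0(k|z-y|)=\int_{\pa B_R(0)}\Big[k^2\Phi_k(y,x)\,\pa_{\nu_x}J_0(k|z-x|)-\tfrac{i}{4}J_0(k|x-y|)\,\pa_{\nu_x}H_0^{(1)}(k|z-x|)-\Delta_x\Phi_k(y,x)\,\pa_{\nu_x}J_0(k|z-x|)\Big]\,ds(x)
\]
(this is \eqref{green-J}). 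Now substitute $u^s(x,k)=\int\Phi_k(x,y)S(y)\,dy$, $\Delta_x u^s=\int\Delta_x\Phi_k(x,y)S(y)\,dy$, and $8k^2\Im u^s=\int J_0(k|x-y|)S(y)\,dy$ into the definition \eqref{S(y)-fomula}; the boundary integral then collapses, via \eqref{green-J}, to
\[
\tilde I_S(z)=\frac{1}{2\pi}\int_0^\infty k\int_{B_R(0)} J_0(k|z-y|)\,S(y)\,dy\,dk.
\]
The paper finishes in one line by writing $J_0(|w|)=\frac{1}{2\pi}\int_{\mathbb{S}^1}e^{iw\cdot\hat\xi}\,d\hat\xi$ and recognizing the right-hand side as Fourier inversion of $\widehat S$, hence $S(z)$.

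So your proposed harmonic expansion in $Y_n$, the Wronskian $J_nN_n'-J_n'N_n=2/(\pi t)$, and the Hankel orthogonality $\int_0^\infty J_n(kr)J_n(kr')k\,dk=\tfrac1r\delta(r-r')$ are all unnecessary here: they would amount to redoing Fourier inversion in polar coordinates by hand. The connection to \eqref{I(r)} is also a detour---that identity underlies the indicator $I_S^{(1)}$ of Theorem~\ref{source function-07}, not $\tilde I_S$. Your route would work in principle, but it introduces bookkeeping (and the convergence worries you flag) that the paper's direct Fourier-inversion step avoids entirely.
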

\begin{proof}
Applying Green's formula to the biharmonic wave equation, we have
\be
J_0(k|z-y|)&=&\int_{\pa B_R(0)}\Bigg[J_0(k|x-y|)\frac{\pa \Delta_x \Phi_k(x,z)}{\pa \nu_x}+\Delta_x J_0(k|x-y|)\frac{\pa \Phi_k(x,z)}{\pa \nu_x}\cr
&&-\left(\Delta_x\Phi_k(x,z)\frac{\pa J_0(k|y-x|)}{\pa \nu_x}+\Phi_k(x,z)\frac{\pa \Delta_x J_0(k|y-x|)}{\pa \nu_x}\right)\Bigg]ds(x)\cr
&=&\int_{\pa B_R(0)}\Bigg[J_0(k|x-y|)\frac{\pa \Delta_x \Phi_k(x,z)}{\pa \nu_x}+\Delta_x J_0(k|x-y|)\frac{\pa \Phi_k(x,z)}{\pa \nu_x}\cr
&&-\left(\Delta_x\Phi_k(x,y)\frac{\pa J_0(k|z-x|)}{\pa \nu_x}+\Phi_k(x,y)\frac{\pa \Delta_x J_0(k|z-x|)}{\pa \nu_x}\right)\Bigg]ds(x)\cr
&=&\int_{{\pa B_R(0)}}\Bigg[k^2\Phi_k(y,x)\frac{\pa J_0(k|z-x|)}{\pa \nu_x}-\frac{i}{4}J_0(k|x-y|)\frac{\pa H_0^1(k|z-x|)}{\pa \nu_x}\cr
&&-\Delta_x \Phi_k(y,x)\frac{\pa J_0(k|z-x|)}{\pa \nu_x}\Bigg]ds(x),\quad \forall y,z\in B_R(0),\, k\in \R^+,
\label{green-J}
\en
where the second equality follows from Lemma \ref{I(y,z)=I(z,y)} and the last equality follows from \eqref{laplace-phi} and \eqref{laplace-J_0}. 

Using equation \eqref{green-J} and the well-known integral representation \cite{fourier-formula},
\begin{equation}
J_0(|y|)=\frac{1}{2\pi}\int_{\mathbb{S}^1}e^{iy\cdot \hat{\xi}}d\hat{\xi},\quad \forall y\in \R^2,
\nonumber
\end{equation}

we derive
\be
\tilde{I}_S(z)&=&\frac{R}{2\pi}\int _{0}^{2\pi}\int _{0}^{+\infty} k^2 \Bigg[ \frac{(z-x)}{|z-x|}\cdot(\cos\theta,\sin \theta)\Bigg]\Bigg[k^2 J_1(k|z-x|)u^s(x,k)\cr
&&-2k^2iH_1^1(k|z-x|) \Im(u^s(x,k))-J_1(k|z-x|)\Delta_x u^s(x,k) \Bigg]dk d\theta\cr
&=&-\frac{1}{2\pi}\text{div}\int_{\pa B_R(0)}\nu(x)\int_{0}^{+\infty} k\Big[k^2u^s(x,k)J_0(k|z-x|)\cr
&&-\left(2k^2iH_0^1(k|x-z|)\Im (u^s(x,k))+\Delta_xu^s(x,k)J_0(k|z-x|)\right)\Big]dkds(x)\cr
&=&\frac{1}{2\pi}\int _{\R^{+}}\int_{B_R(0)}J_0(k|z-y|)S(y)dy kdk\cr
&=&\frac{1}{(2\pi)^2}\int _{\R^{+}}\int_{\R^2}\int_{\mathbb{S}^1}e^{ik(z-y)\cdot \hat{\xi}}d\hat{\xi}S(y)dy kdk\cr
&=&\frac{1}{(2\pi)^2}\int_{\R^2}\int_{\R^2} S(y) e^{-iy\cdot \xi}dy e^{iz\cdot \xi}d\xi\cr
&=&S(z),\quad z\in B_R(0).
\label{S(y)-fomula-proof}
\en
\end{proof}

Based on Theorem \ref{S(y)-fomula-proof111}, and given that the frequencies lie within the interval $(k_{-},k_{+})$, we employ the indicator function
\begin{equation}
\begin{aligned}
I^{(2)}_{S}(z):=&\frac{R}{2\pi}\int_0^{2\pi}\int_{k_{-}}^{k_{+}}k^2 \bigg[ \frac{(z-x)}{|z-x|}\cdot(\cos\theta,\sin \theta)\bigg]
\bigg[k^2 J_1(k|z-x|)u^{s}(x,k)\\
&-2k^2iH_1^1(k|x-z|) \Im(u^{s}(x,k))-J_1(k|x-z|)\Delta_x u^{s}(x,k) \bigg]dkd\theta,\quad z\in D,
\end{aligned}
\label{I-Omega-(2)}
\end{equation}
where $x=R(\cos \theta,\sin \theta)$, to calculate the source function. The procedure is formulated in the following algorithm.

\begin{algorithm}[h!]
\label{I-S-(2)}
\caption{Qualitative sampling method based on the indicator function $I^{(2)}_S$(z)}
\LinesNumbered 

1. Collect scattered field patterns $u^s(x,k)$ and $\Delta_x u^s(x,k)$ for all $x\in \Gamma_L$, $k\in(k_{-},k_{+})$; \\
2. Compute the indicator function $I^{(2)}_{S}(z)$ for all sampling points $z\in D$;\\
3. Plot the indicator function $I^{(2)}_{S}(z)$.\\
\end{algorithm}

\section{Numerical examples and discussions}
Before presenting the numerical examples, we introduce the following notations. The scattered field is computed via \eqref{Scattered field} and then perturbed by random noise as follows
\begin{equation}
\begin{aligned}
u^{s,\delta}(x_l,k_m)&=u^s(x_l,k_m)(1+\delta \xi_{lm})\quad\mbox{and}\quad
\Delta u^{s,\delta}(x_l,k_m)=\Delta u^{s,\delta}(x_l,k_m)(1+\delta \xi_{lm}),
\label{numerical-u^{s,noise}}
\end{aligned}
\end{equation}
where $\delta=0.2$ is the relative noise level and $\xi_{lm}$ is a uniformly distributed random number between $-1$ and $1$. 
For inverse problems \textbf{IP(1)} and \textbf{IP(2)}, the numerical dataset is constructed as
\begin{equation}
\{u^{s,\delta}(x_l,k_m)\mid x_l\in \Gamma_L,\, k_m\in K \},
\label{numerical-data-u^s}
\end{equation}
whereas the indicator function $I_{S}^{(2)}$ proposed in Subsection \ref{indicator-3} requires extended data
\begin{equation}
\{u^{s,\delta}(x_l,k_m),\, \Delta u^{s,\delta}(x_l,k_m)\mid x_l\in \Gamma_L,\, k_m\in K\}.
\label{numerical-data-laplace-u^s}
\end{equation}
We define the sparse sensor array as
\begin{equation}
\Gamma_L:=\left\{3\left(\cos\left(\frac{2\pi l}{L}\right), \sin\left(\frac{2\pi l}{L}\right)\right)\Bigg | l=0,1,\ldots,L-1\right\}
\label{numerical-GammaL}
\end{equation}
which is uniformly distributed on a circle of radius 3. The wavenumber set is given by
\begin{equation}
K:=\left\{k_m=k_{-}+(m-1)dk\big | \, m=1,2,\ldots,N_{\text{max}},\, k_{+}=k_{N_{\text{max}}} \right\}.
\label{numerical-wave-number}
\end{equation}
Unless otherwise specified, we set $L=30$, $k_-=0.5$, $k_+=30$ and $dk=0.5$.
Additionally, we assume a priori that the source is located within the search domain $D=[-2,2]\times [-2,2]$, which is discretized using a uniform  $401\times 401$ grid of $z$ values.


\begin{figure}[h!]
\centering
\begin{tabular}{cc}
\subfigure[Annular source]{
\label{num_ex_1}
\includegraphics[width=.30\textwidth]{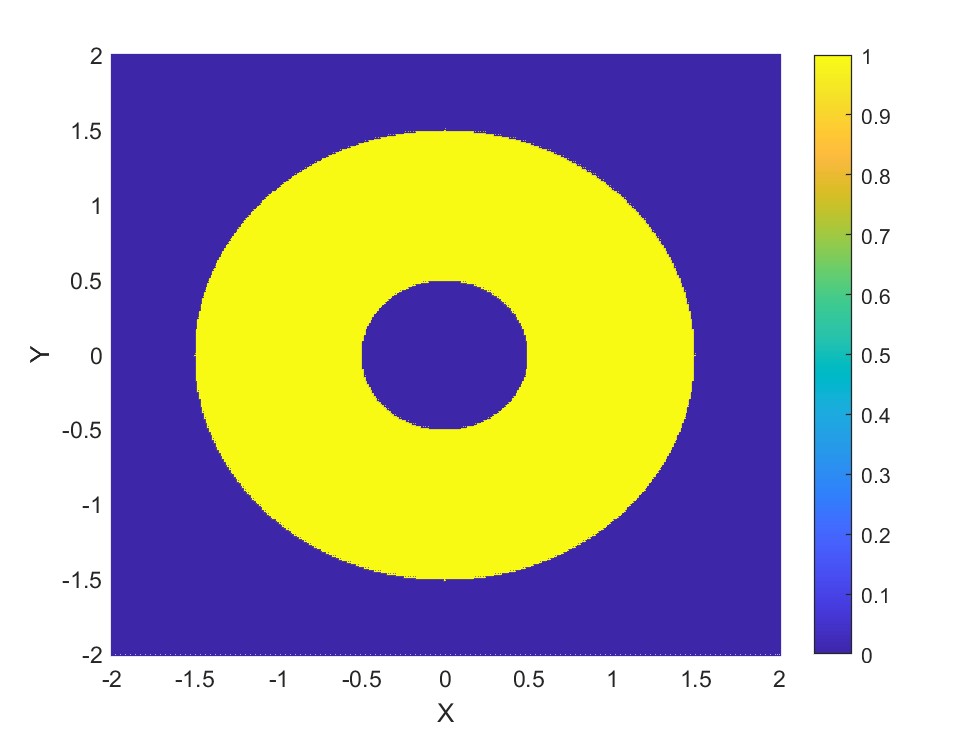}
}\hspace{0em} &
\subfigure[Polygonal source]{
\label{num_ex_2}
\includegraphics[width=.30\textwidth]{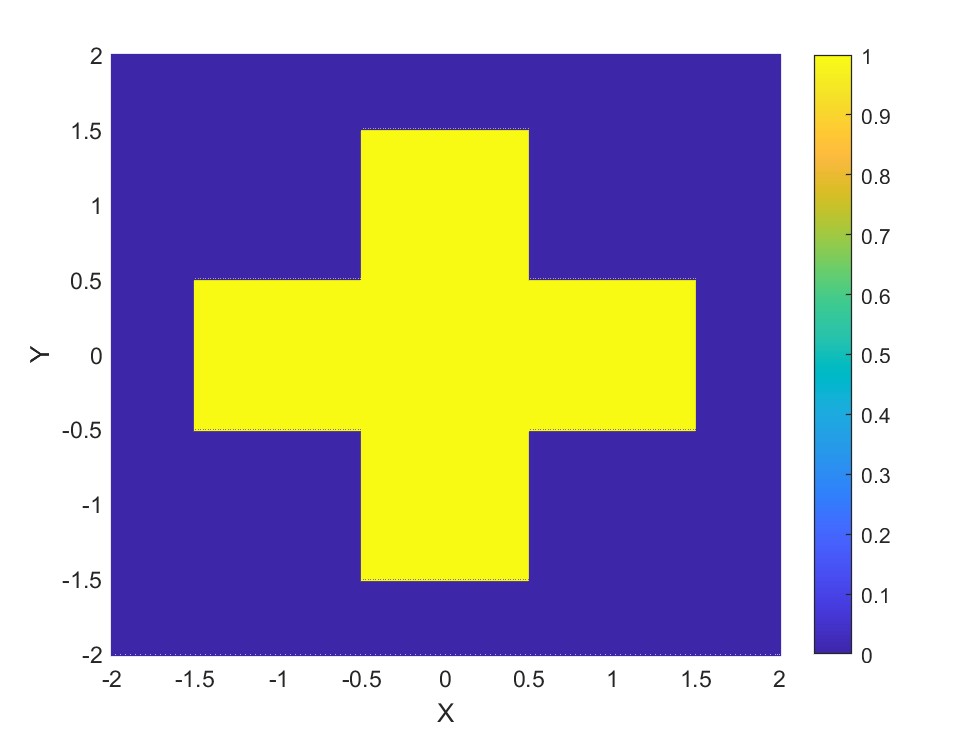}
}\\
\subfigure[Smiling bear]{
\label{num_ex_3}
\includegraphics[width=.30\textwidth]{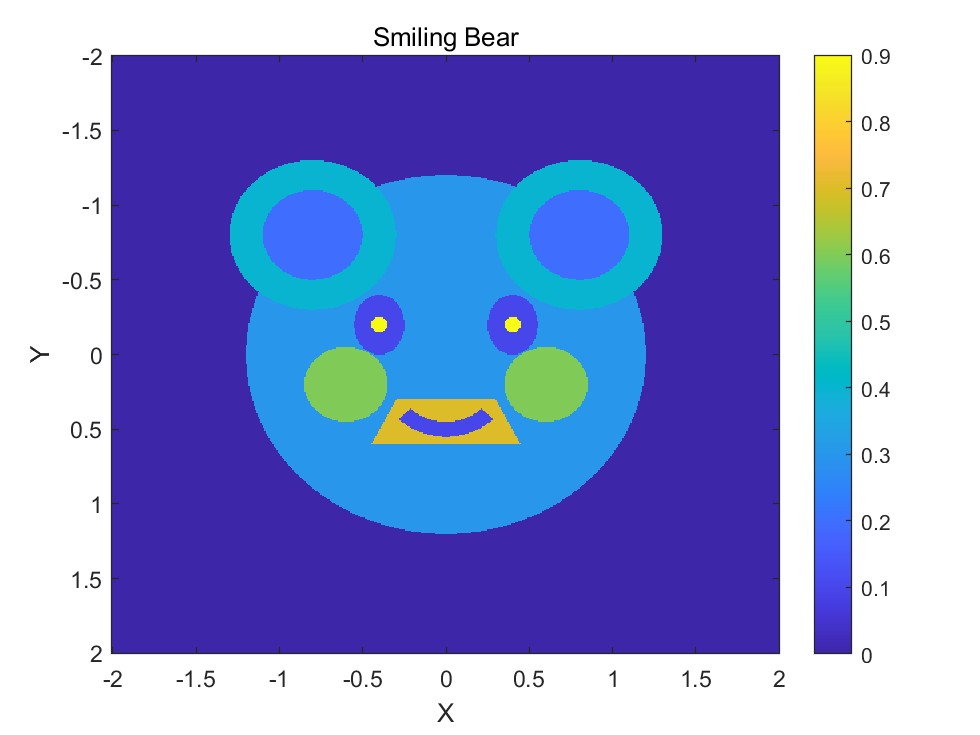}
}\hspace{0em} &
\subfigure[A smooth source]{
\label{num_ex_4}
\includegraphics[width=.30\textwidth]{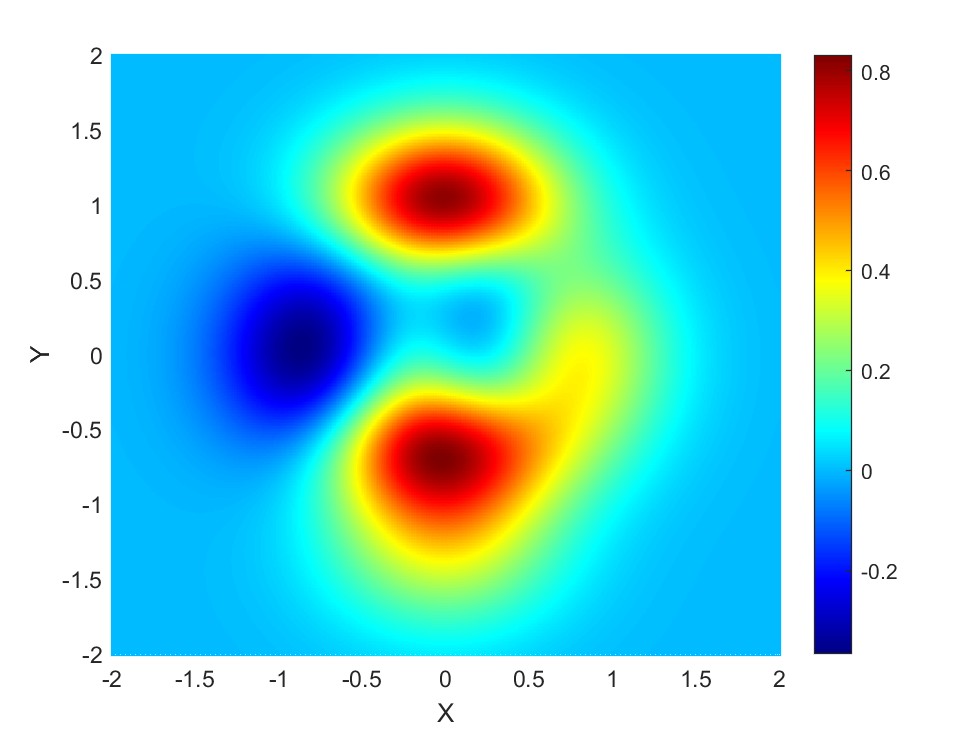}
}
\end{tabular}
\caption{The true sources.}
\label{true-sources}
\end{figure}

To verify the effectiveness and robustness of our indicator functions proposed in Section \ref{Numerical methods}, we designed the following four examples
\begin{itemize}
    \item $S=\chi_{\Omega}$ with $\Om:=\overline{B_{1.5}(0)\ba B_{0.5}(0)}$ being an annulus;
    \item $S=\chi_{\Omega}$ with $\Om$ being a cross-shaped polygon;
    \item A piecewise constant source function, supported in a smiling bear-shaped domain depicted in Figure \ref{num_ex_3};
    \item A smooth source function $S(y)$ defined by 
    \begin{equation*}
    \begin{aligned}
     S(y)=&0.3(1 - 1.5y_2)^2 e^{-[(1.5y_1)^2 + (1.5y_2 + 1)^2]}- 0.03 e^{-[(1.5y_1 + 1)^2 + (1.5y_2)^2]}\\
     &- [0.3y_1 - (1.5y_1)^3 - (1.5y_2)^5]e^{-[(1.5y_1)^2+ (1.5y_2)^2] },\quad y=(y_1,y_2)\in \R^2.
\end{aligned}
\end{equation*}
\end{itemize}
The true sources are illustrated in Figure \ref{true-sources}.

\begin{figure}[h!]
\centering
\begin{tabular}{cc}
\subfigure{
\includegraphics[width=.30\textwidth]{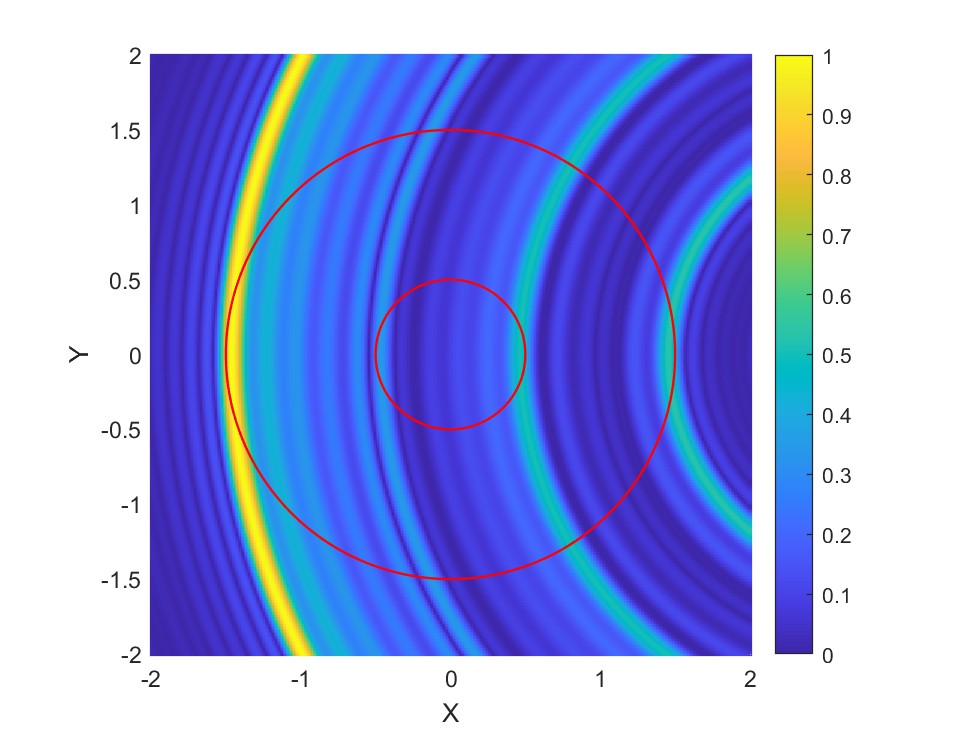}
}\hspace{0em} &
\subfigure{
\includegraphics[width=.30\textwidth]{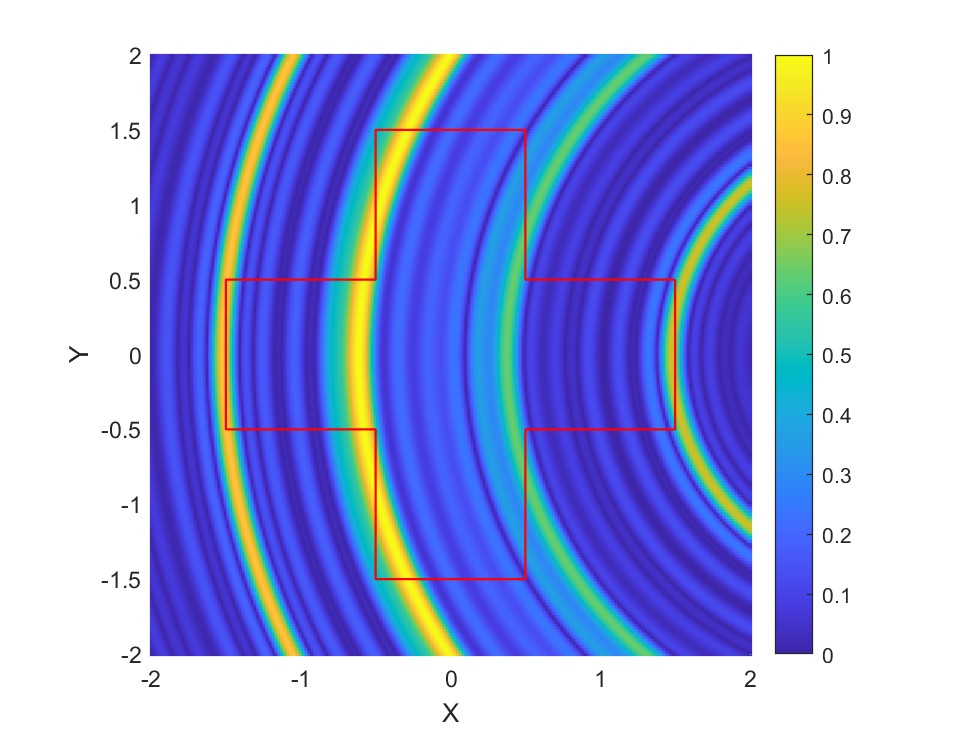}
}
\end{tabular}
\caption{Reconstructions of the support set by plotting $I_{\pa \Omega}$ with one sensor $(3,0)$. }
\label{L=1_annulus_+}
\end{figure}

\begin{figure}[h!]
\centering
\begin{tabular}{ccc}
\subfigure[$L=10$]{
\includegraphics[width=.30\textwidth]{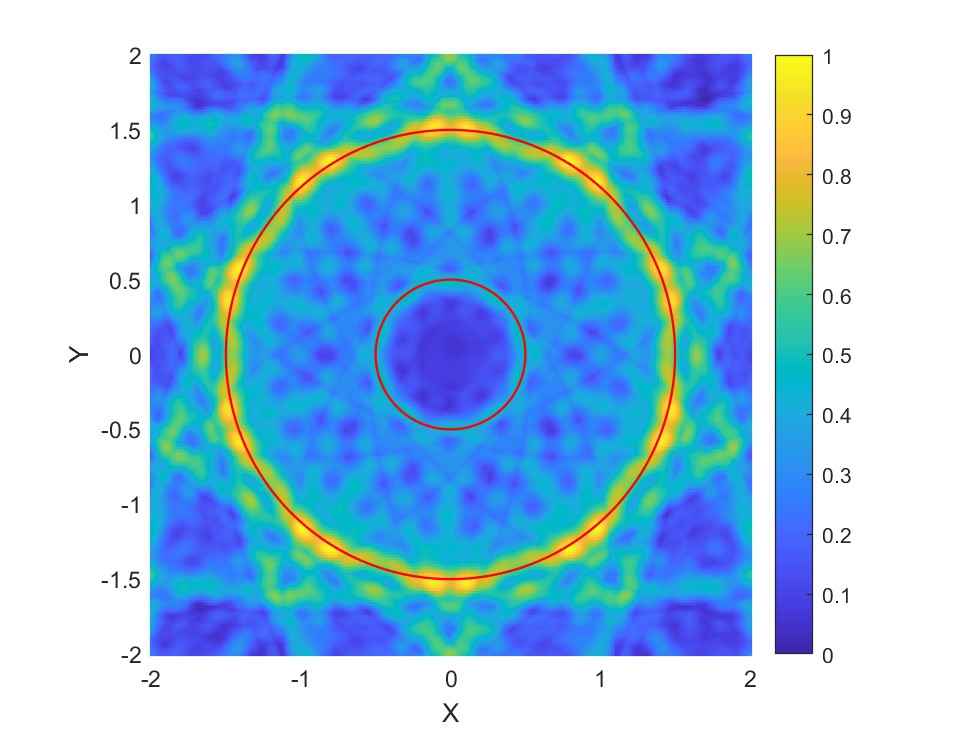}
}\hspace{0em} &
\subfigure[$L=20$]{
\includegraphics[width=.30\textwidth]{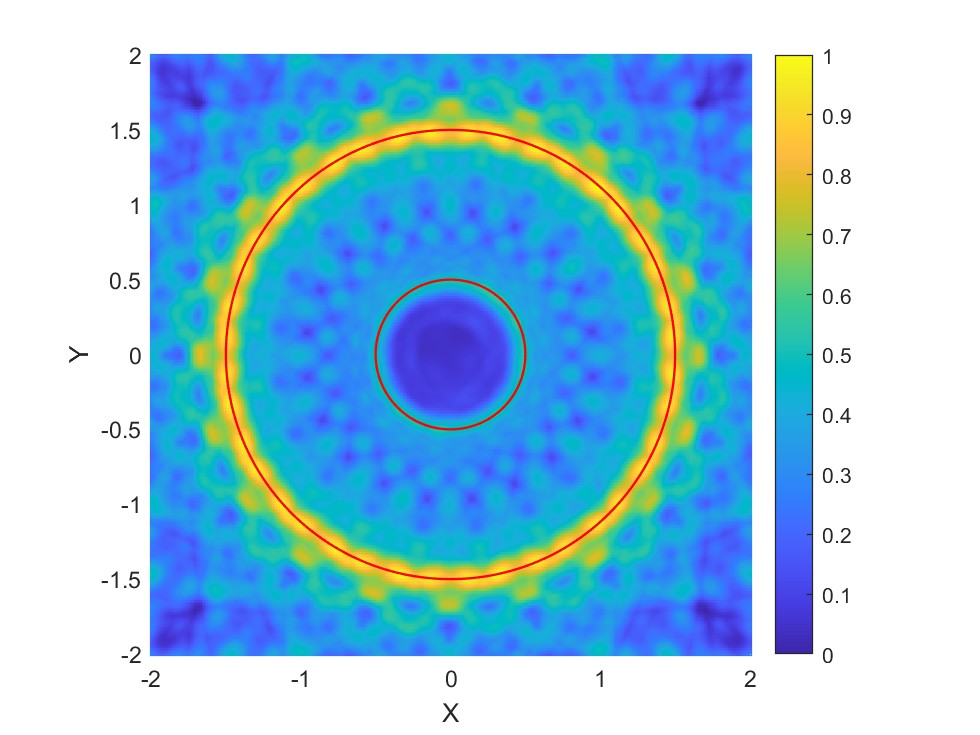}}&
\subfigure[$L=30$]{
\includegraphics[width=.30\textwidth]{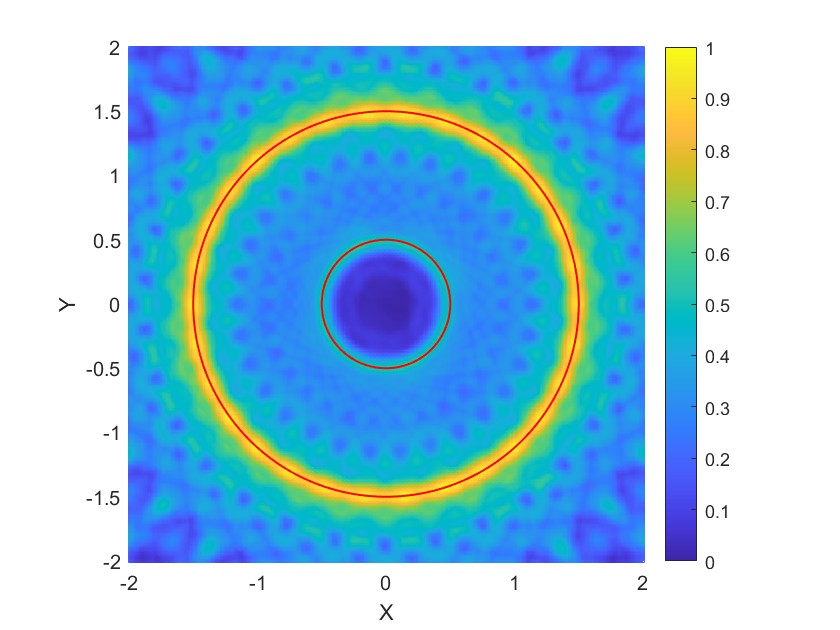}}
\\
\subfigure[$L=10$]{
\includegraphics[width=.30\textwidth]{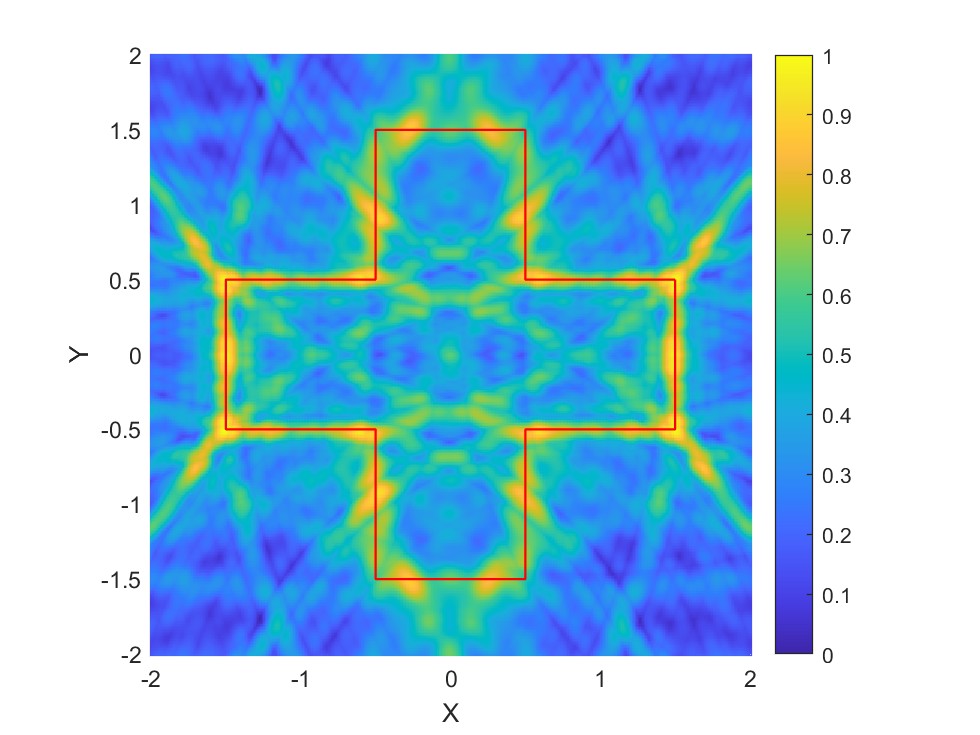}
}\hspace{0em} &
\subfigure[$L=20$]{
\includegraphics[width=.30\textwidth]{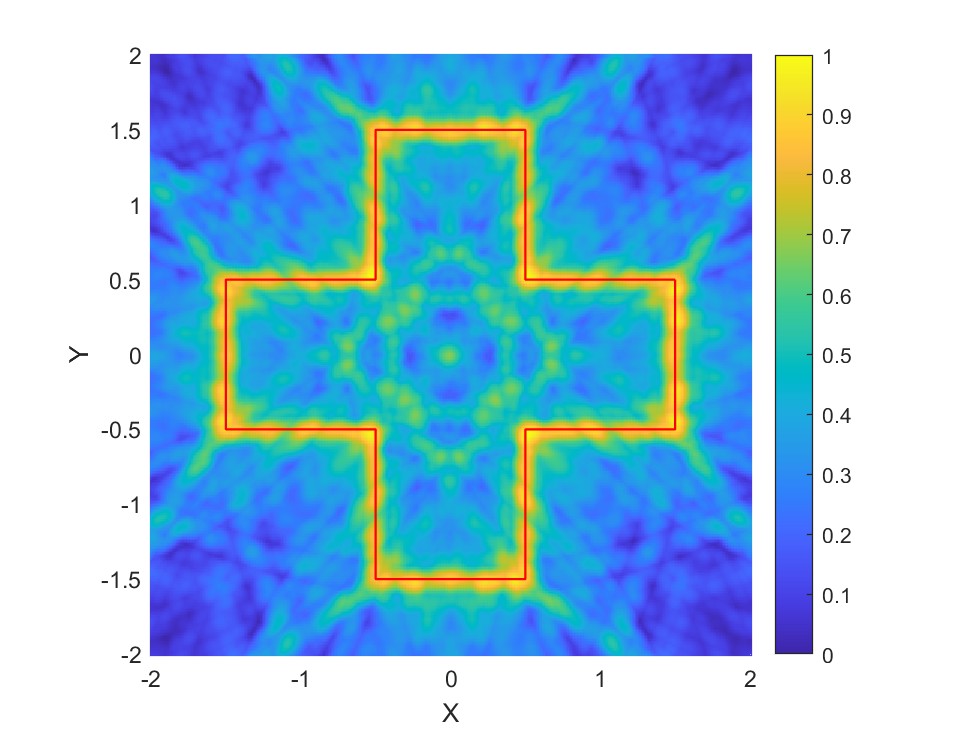}}&
\subfigure[$L=30$]{
\includegraphics[width=.30\textwidth]{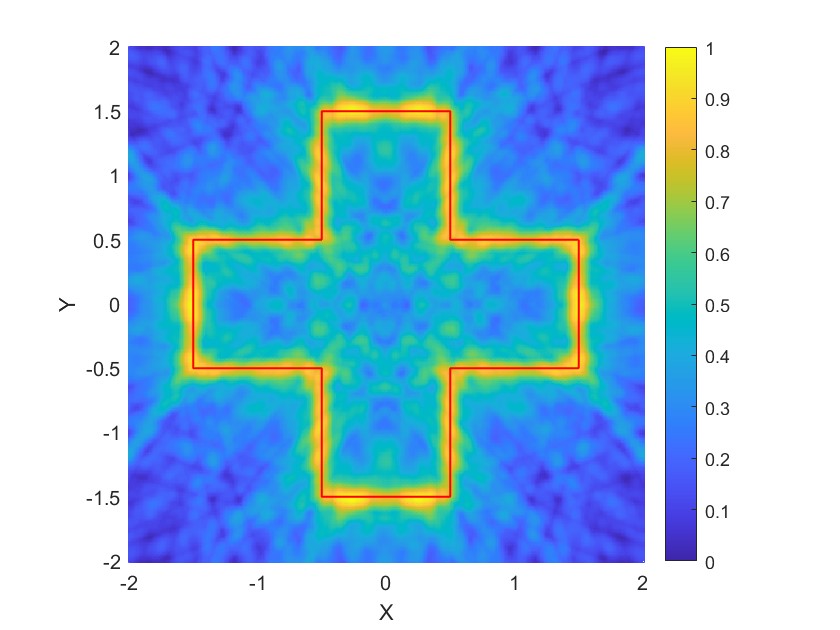}}
\end{tabular}
\caption{Reconstructions of the support set by plotting $I_{\pa \Omega}$. }
\label{L=10_20_30_annulus_+}
\end{figure}

Figure \ref{L=1_annulus_+} presents the reconstruction results obtained from scattered field measurements at a single sensor located at $x=(3,0)$. 
Consistent with theoretical predictions, the indicator function $I_{\pa \Omega}$ successfully identifies four circles centered at $x=(3,0)$ that are tangent to both the inner and outer boundaries of the annular source.
For the cross-shaped source, $I_{\pa \Omega}$ detects circles centered at $x=(3,0)$ that either intersect the polygon's vertices or are tangent to its edges.  

From the first row of Figure \ref{L=10_20_30_annulus_+}, we observe that both the inner and outer  boundaries of the annulus are successfully reconstructed as $L$ increases, with a superior effect achieved for the outer circle. This phenomenon is consistent with the asymptotic jump formulas \eqref{annular-dis1}--\eqref{annular-dis2}, where the indicator $|I_{\pa \Omega}|$ attains larger values along the outer boundary due to its greater radius.
We also observe that all the boundaries of the cross-shaped polygonal sources are progressively reconstructed with increasing number of sensors. Such a fact is consistent with the the theoretical analyses in the equalities \eqref{polygon-dis1}, \eqref{polygon-dis2}, and Theorem \ref{polygon_edges}.
\begin{figure}[h!]
\centering
\begin{tabular}{ccc}
\subfigure{
\includegraphics[width=.30\textwidth]{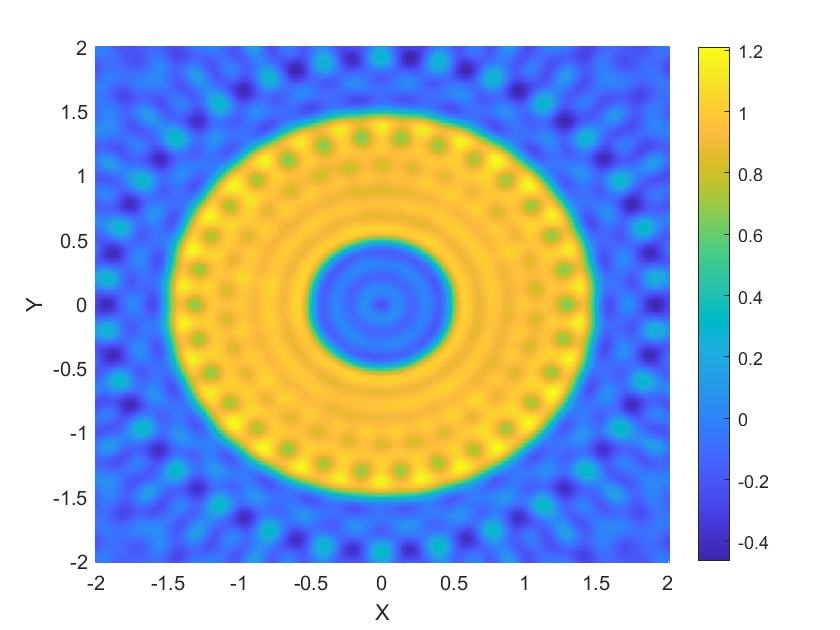}
}\hspace{0em} &
\subfigure{
\includegraphics[width=.30\textwidth]{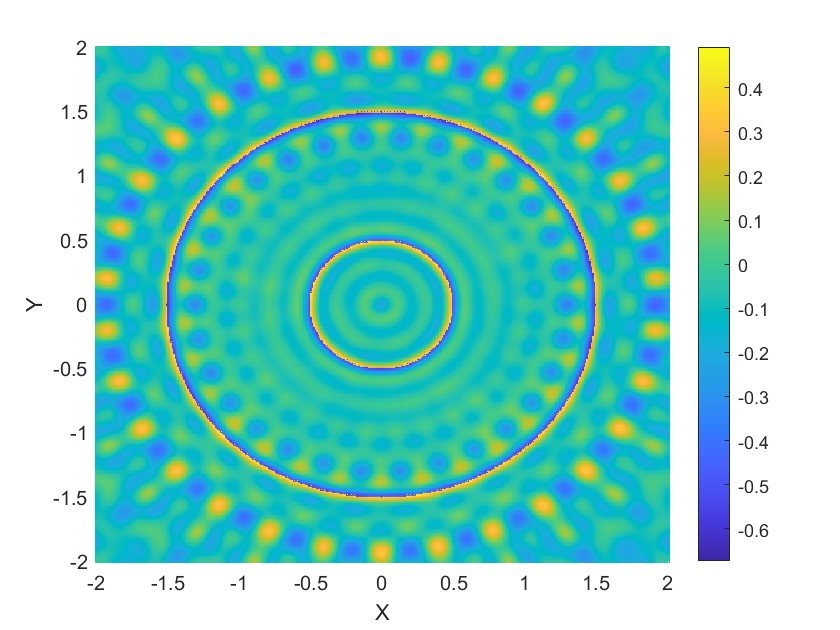}}&
\subfigure{
\includegraphics[width=.30\textwidth]{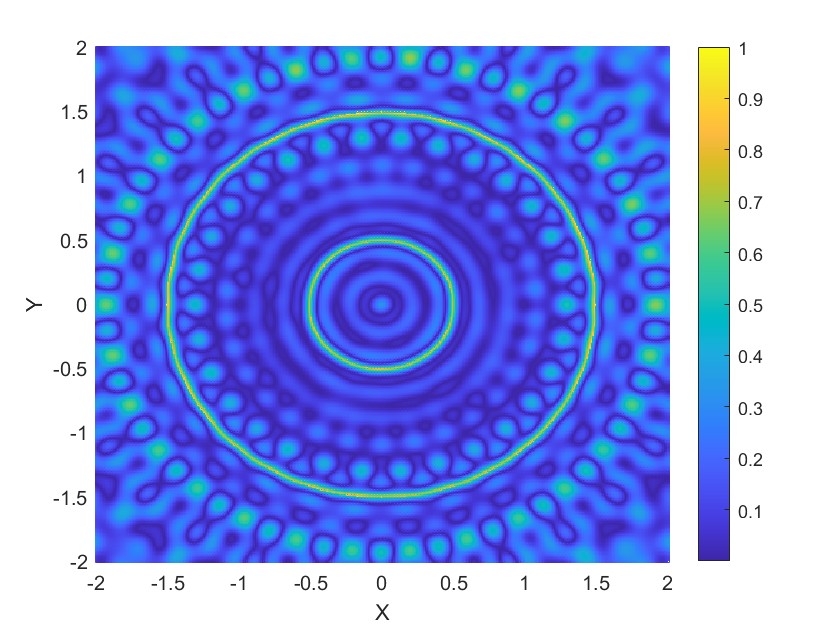}}\\
\subfigure{
\includegraphics[width=.30\textwidth]{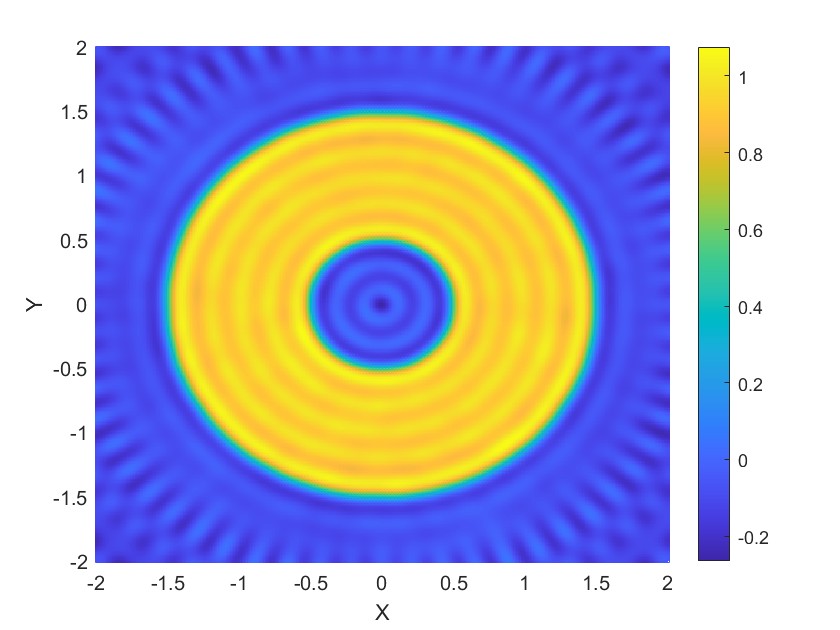}
}\hspace{0em} &
\subfigure{
\includegraphics[width=.30\textwidth]{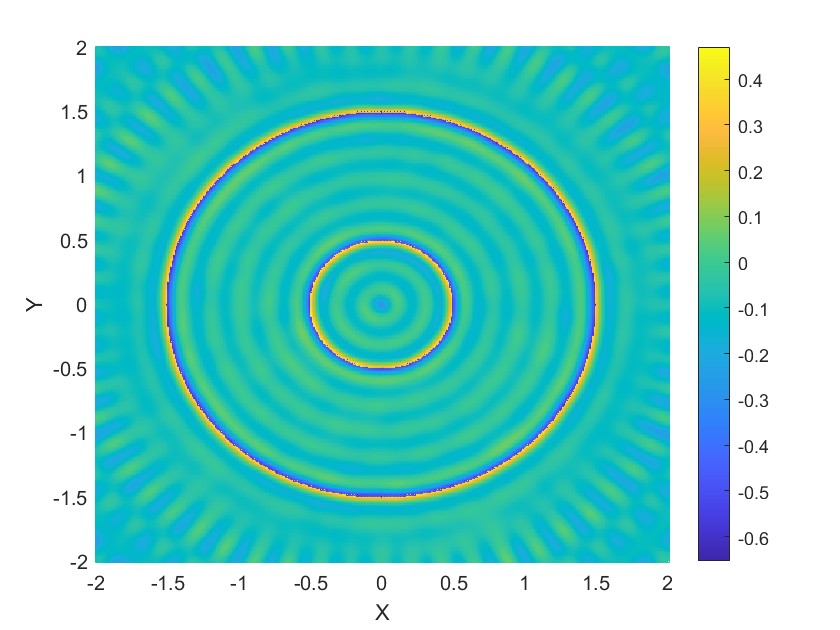}}&
\subfigure{
\includegraphics[width=.30\textwidth]{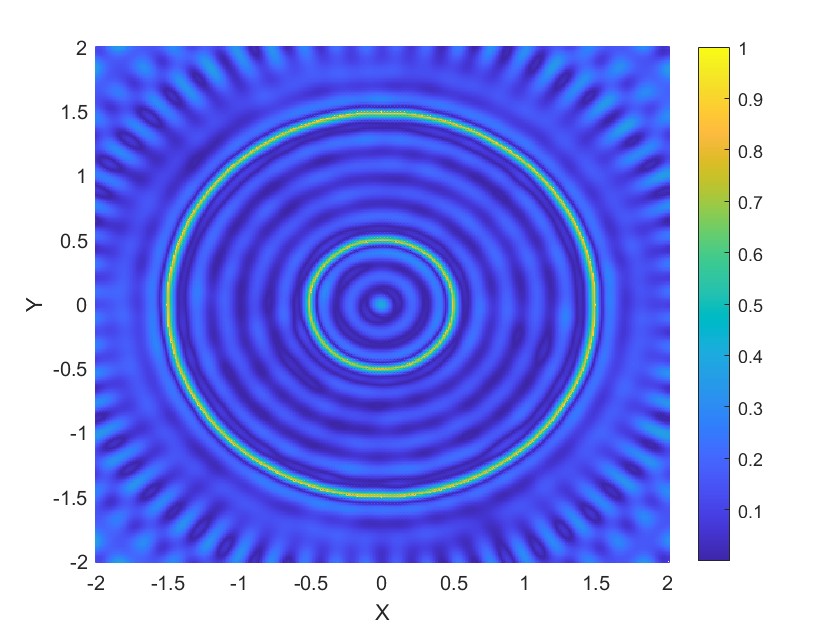}}
\end{tabular}
\caption{Reconstructions of the annular source using the indicator function $I^{(1)}_{S}$.\quad Left: Reconstructions with $L=30$ and $L=60$ respectively.\quad Middle: The error $I_S^{(1)}-S$.\quad Right: Normalized error.}
\label{annulus_07}
\end{figure}

\begin{figure}[h!]
\centering
\begin{tabular}{ccc}
\subfigure{
\includegraphics[width=.30\textwidth]{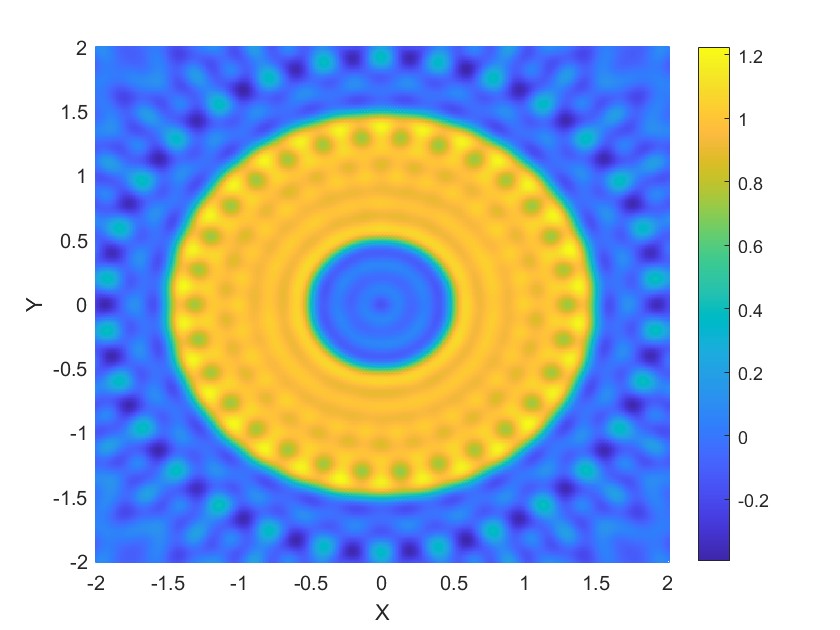}
}\hspace{0em} &
\subfigure{
\includegraphics[width=.30\textwidth]{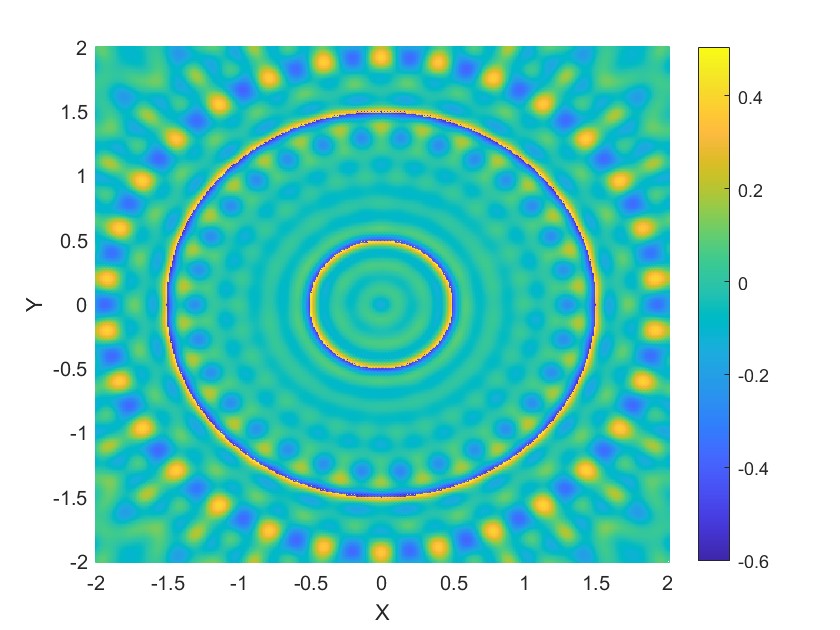}}&
\subfigure{
\includegraphics[width=.30\textwidth]{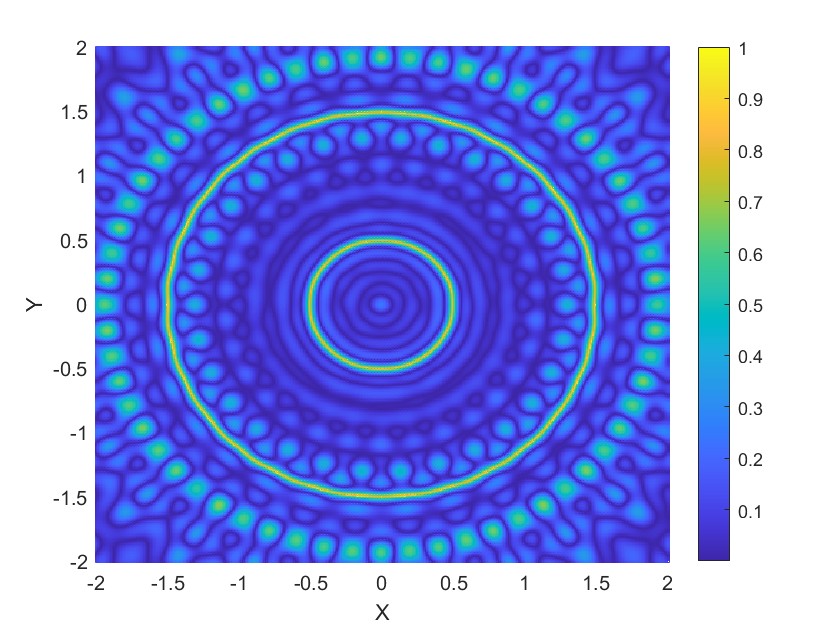}}\\
\subfigure{
\includegraphics[width=.30\textwidth]{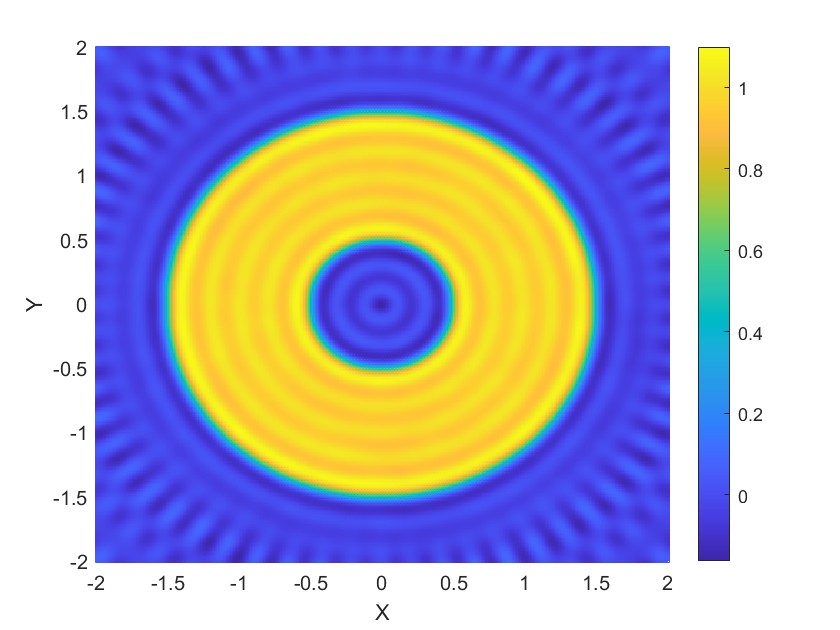}
}\hspace{0em} &
\subfigure{
\includegraphics[width=.30\textwidth]{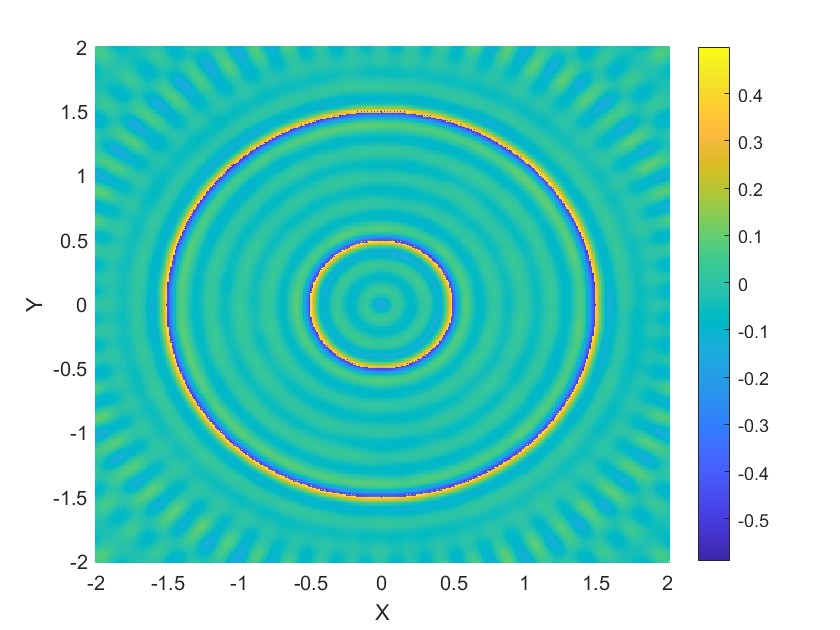}}&
\subfigure{
\includegraphics[width=.30\textwidth]{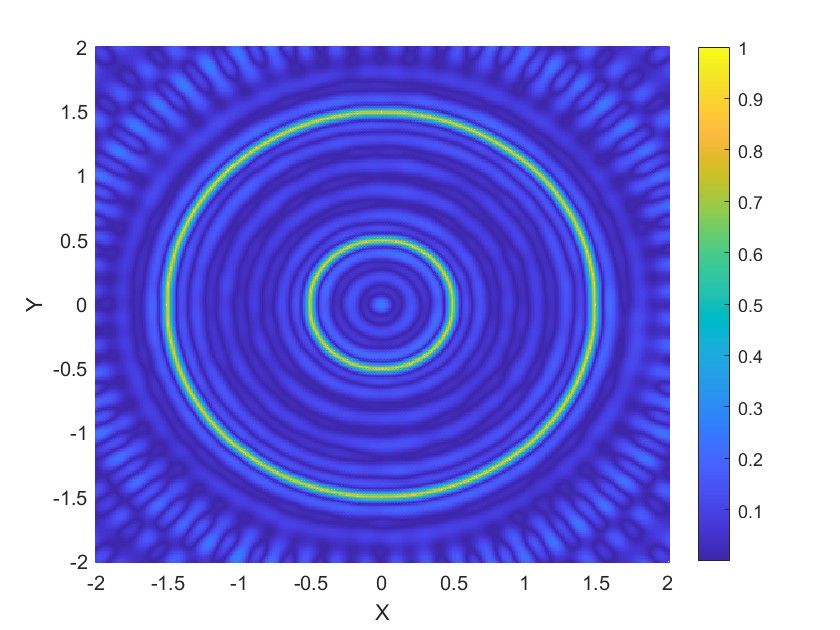}}
\end{tabular}
\caption{Reconstructions of the annular source using the indicator function $I^{(2)}_{S}$.\quad Left: Reconstructions with $L=30$ and $L=60$ respectively.\quad Middle: The error $I_S^{(2)}-S$.\quad Right: Normalized error.}
\label{annulus_two_layer}
\end{figure}

\begin{figure}[h!]
\centering
\begin{tabular}{ccc}
\subfigure{
\includegraphics[width=.30\textwidth]{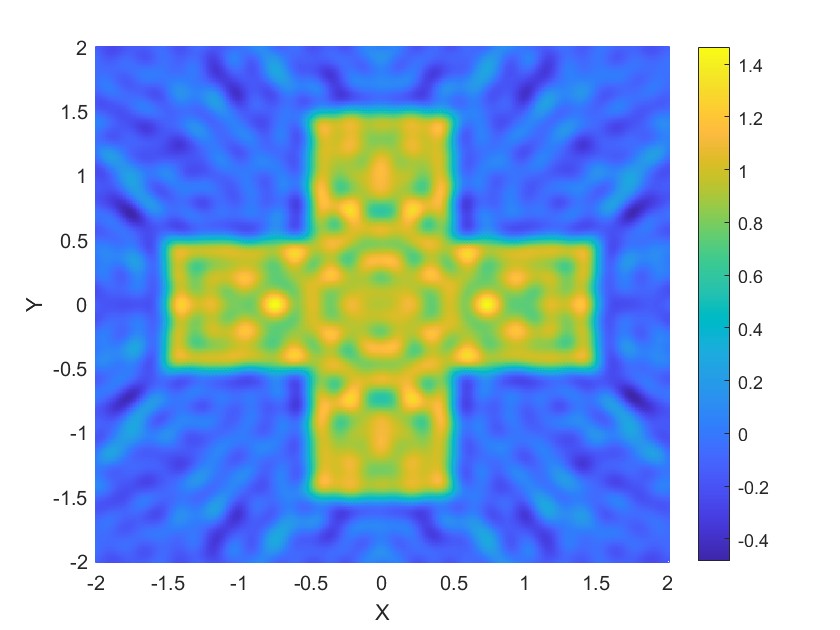}
}\hspace{0em} &
\subfigure{
\includegraphics[width=.30\textwidth]{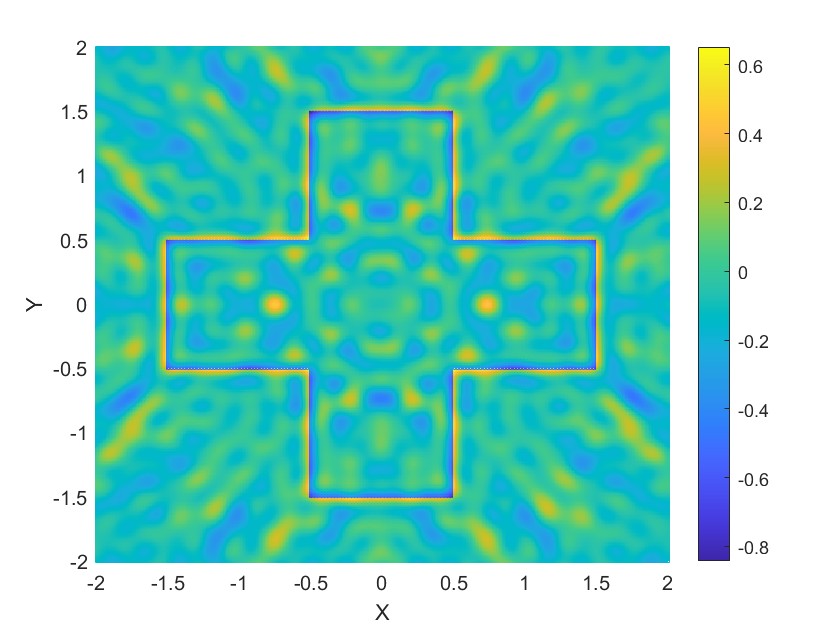}}&
\subfigure{
\includegraphics[width=.30\textwidth]{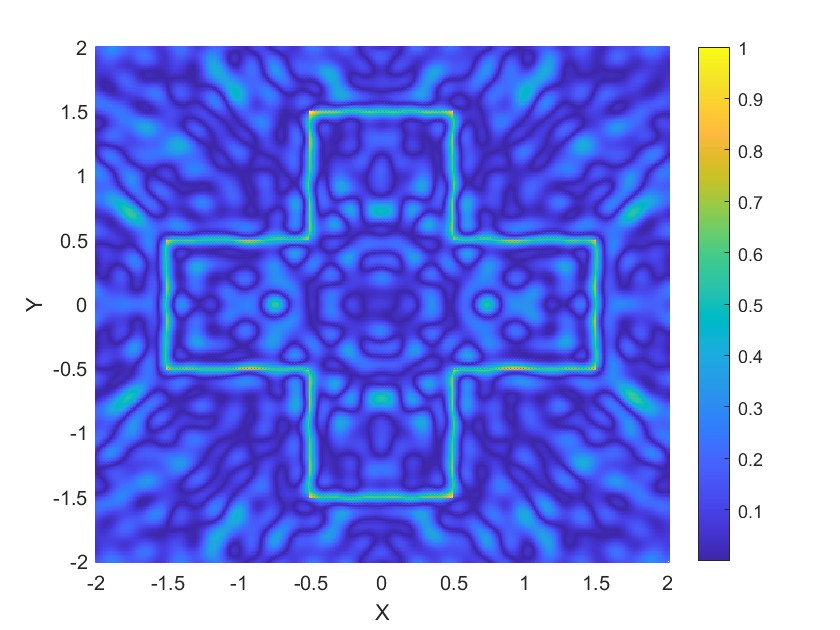}}\\
\subfigure{
\includegraphics[width=.30\textwidth]{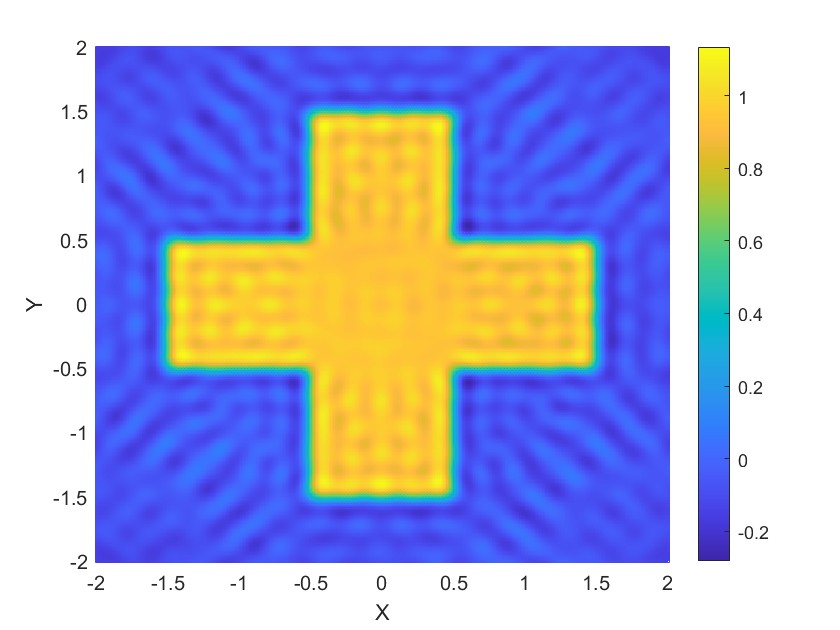}
}\hspace{0em} &
\subfigure{
\includegraphics[width=.30\textwidth]{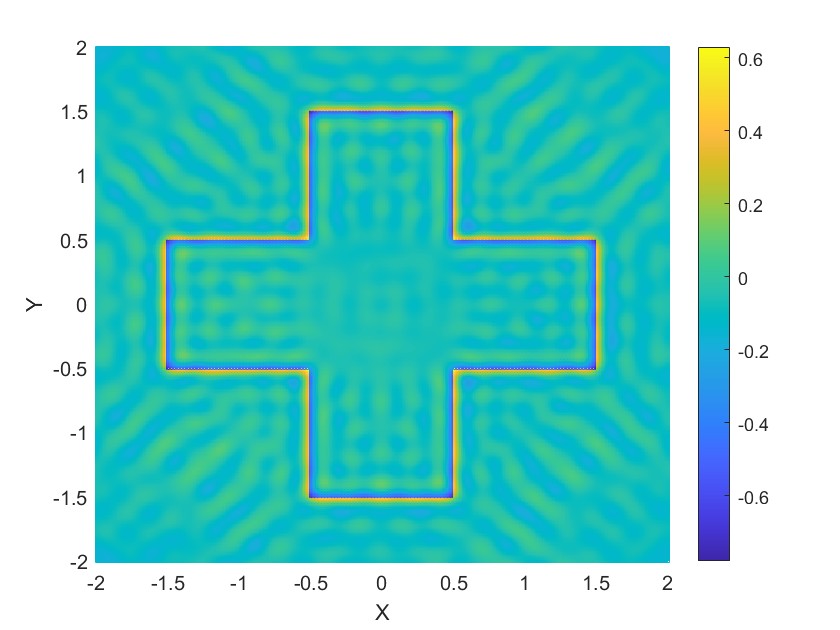}}&
\subfigure{
\includegraphics[width=.30\textwidth]{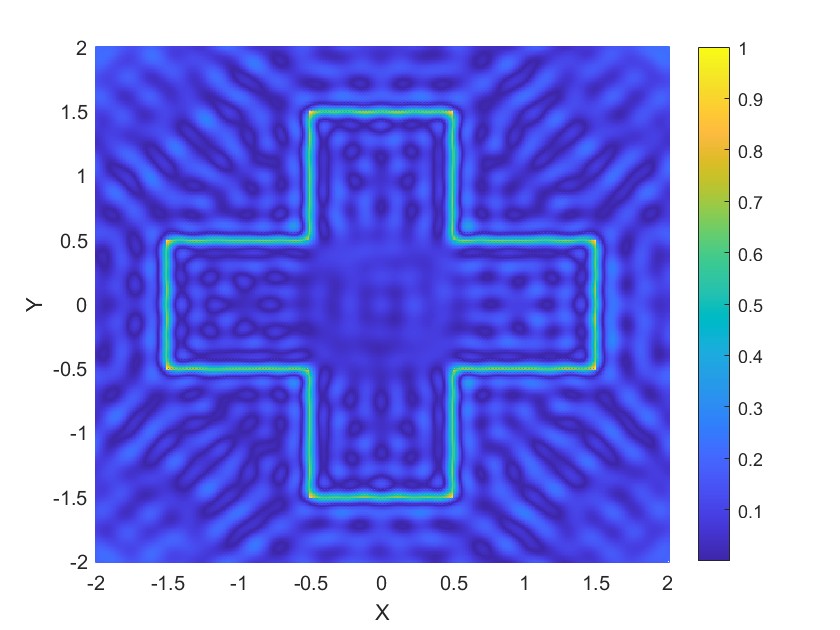}}
\end{tabular}
\caption{Reconstructions of the polygon source  using the indicator function $I^{(1)}_{S}$.\quad Left: Reconstructions with $L=30$ and $L=60$ respectively.\quad Middle: The error $I_S^{(1)}-S$.\quad Right: Normalized error.}
\label{polygon_07}
\end{figure}

\begin{figure}[h!]
\centering
\begin{tabular}{ccc}
\subfigure{
\includegraphics[width=.30\textwidth]{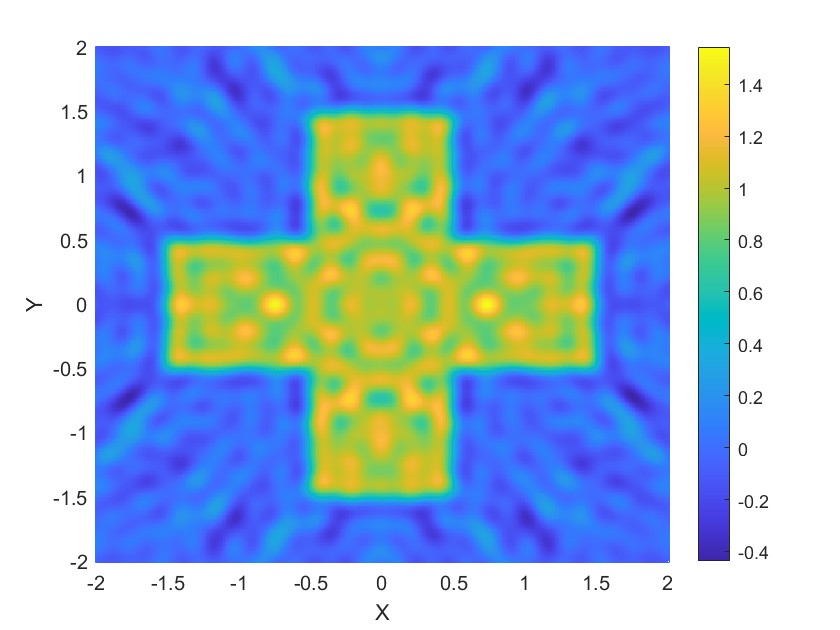}
}\hspace{0em} &
\subfigure{
\includegraphics[width=.30\textwidth]{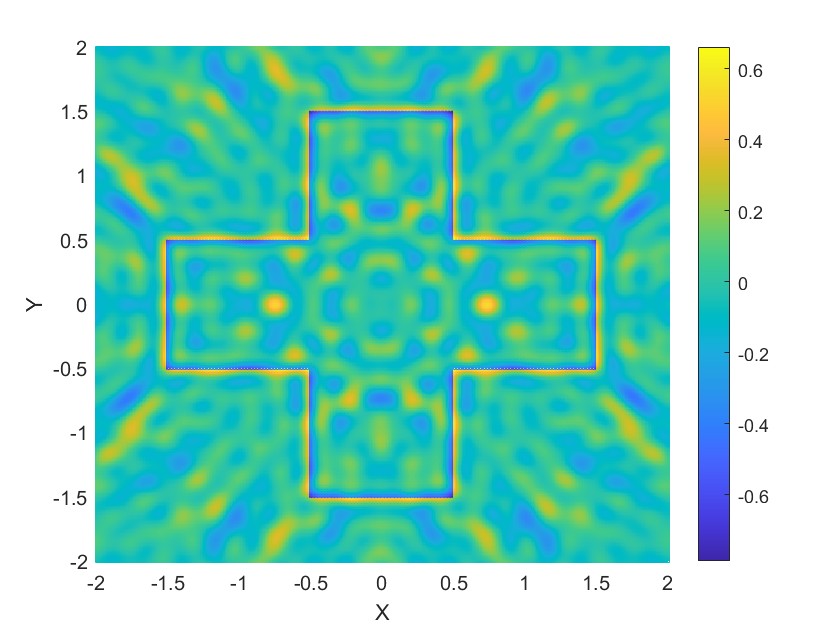}}&
\subfigure{
\includegraphics[width=.30\textwidth]{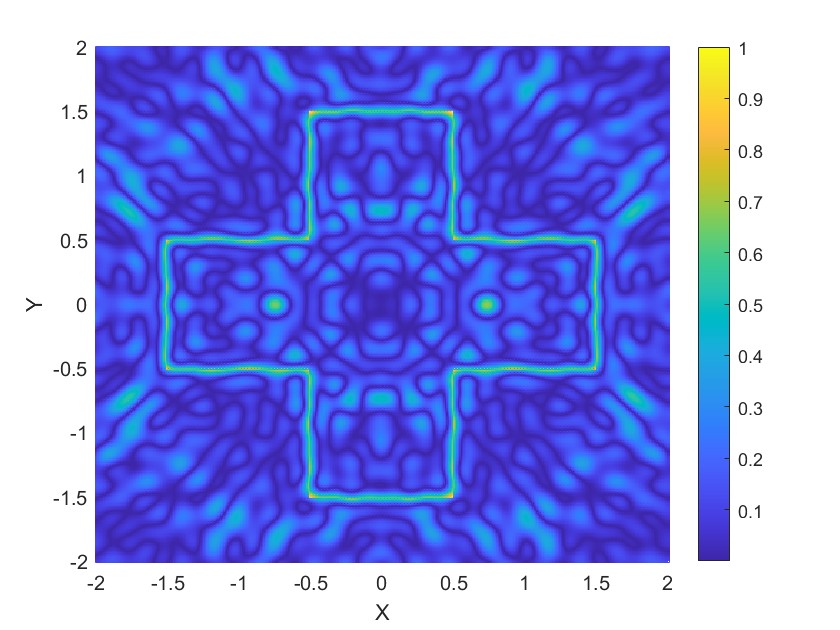}}\\
\subfigure{
\includegraphics[width=.30\textwidth]{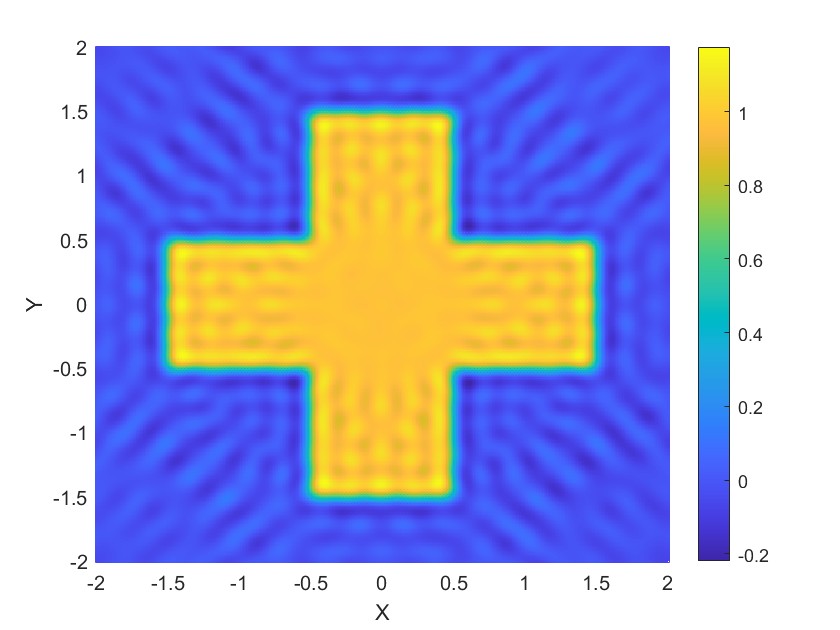}
}\hspace{0em} &
\subfigure{
\includegraphics[width=.30\textwidth]{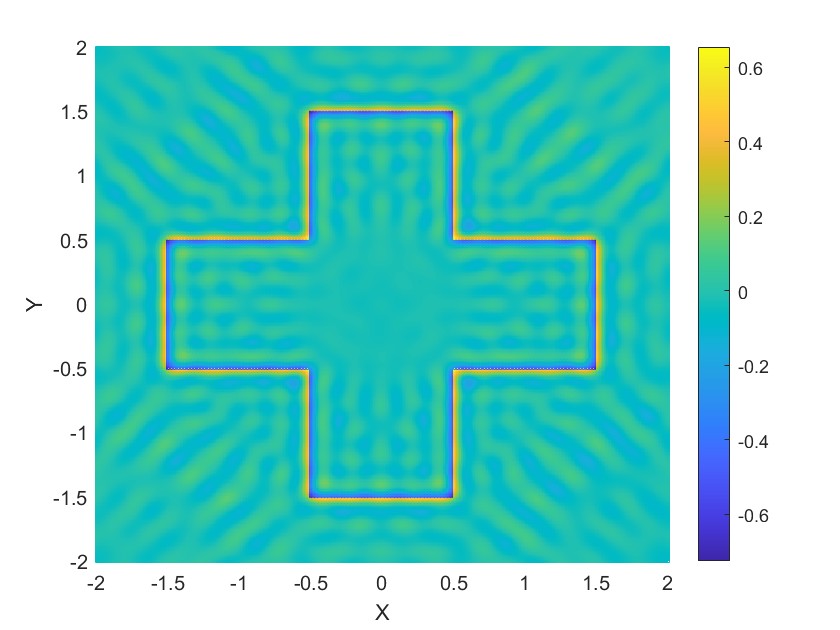}}&
\subfigure{
\includegraphics[width=.30\textwidth]{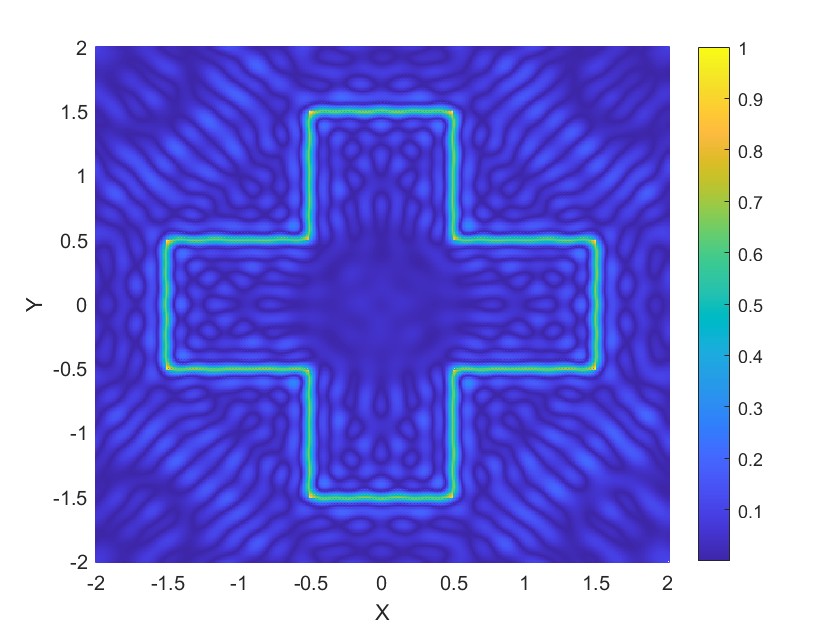}}
\end{tabular}
\caption{Reconstructions of the polygon source  using the indicator function $I^{(2)}_{S}$.\quad Left: Reconstructions with $L=30$ and $L=60$ respectively.\quad Middle: The error $I_S^{(2)}-S$.\quad Right: Normalized error.}
\label{polygon_two_layer}
\end{figure}

Figures \ref{annulus_07} and \ref{annulus_two_layer} show reconstructions of the annular source using the indicator functions $I^{(1)}_{S}$ and $I^{(2)}_{S}$, respectively. 
The left column presents the results for $L=30$ and $L=60$, illustrating that the error in the reconstructed source decreases as $L$ increases. The middle column shows the error distribution of the reconstructed source function. For $L=30$, in addition to the concentration of error along the boundary of the source support, there are point-like radial artifacts. These artifacts arise from the finite truncation applied to the Fourier transform in the numerical experiment, which induces the Gibbs phenomenon. 
When $L$ increases to 60, the point-like radial artifacts vanish and Gibbs oscillations become confined mainly to the boundary region. This effect is further highlighted in the right column of Figures \ref{annulus_07} and \ref{annulus_two_layer} through error normalization. These observations indicate that in our numerical reconstructions, errors are predominantly concentrated along the boundary. Moreover, as the amount of data increases, the boundary error does not vanish but stabilizes.
Similarly, Figures \ref{polygon_07} and \ref{polygon_two_layer} present the reconstruction of the polygonal source using the two indicator functions $I^{(1)}_{S}$ and $I^{(2)}_{S}$, respectively. 

The third example involves a source function whose support is the smiling bear-shaped domain shown in Figure \ref{num_ex_3}. We perform reconstruction experiments using three distinct indicator functions $I_{\pa \Omega}$, $I_S^{(1)}$ and $I_{S}^{(2)}$, with the results presented in Figure \ref{Smiling_Bear}. Comparing the experimental results in the first and second rows of Figure \ref{Smiling_Bear}, we observe that both $I_S^{(1)}$ and $I_{S}^{(2)}$ successfully reconstruct the smiling bear as the data volume increases. In contrast, the boundary indicator function $I_{\pa \Omega}$ fails to accurately recover some internal morphological features of the smiling bear. This limitation stems from the non-simply connected nature of the function's support and the influence of the Gibbs phenomenon. Nevertheless, even under these conditions, the indicator function $I_{\pa \Omega}$ effectively reconstructs key characteristics such as the general contour and the eyes of the smiling bear, when sufficient data are provided.
\begin{figure}[h!]
\centering
\begin{tabular}{ccc}
\subfigure{
\includegraphics[width=.30\textwidth]{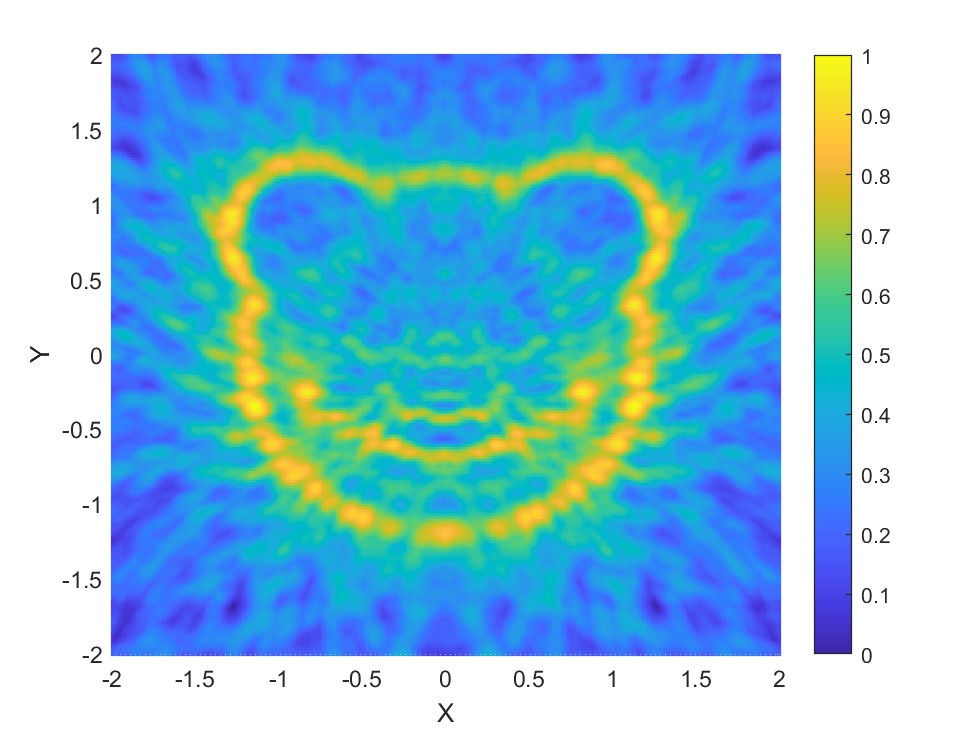}
}\hspace{0em} &
\subfigure{
\includegraphics[width=.30\textwidth]{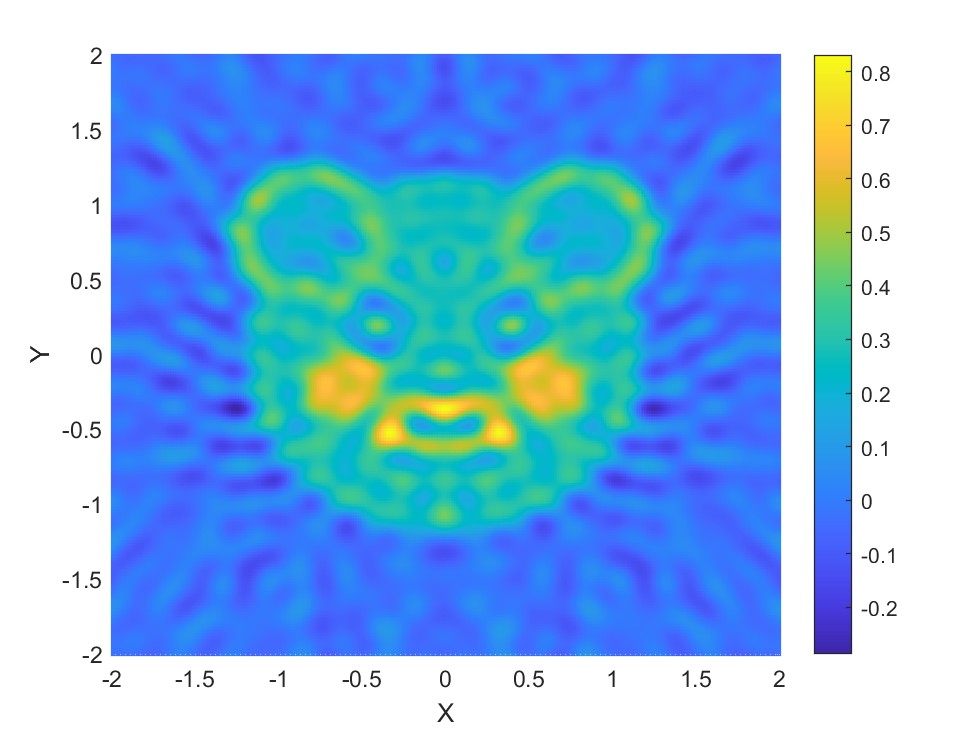}}&
\subfigure{
\includegraphics[width=.30\textwidth]{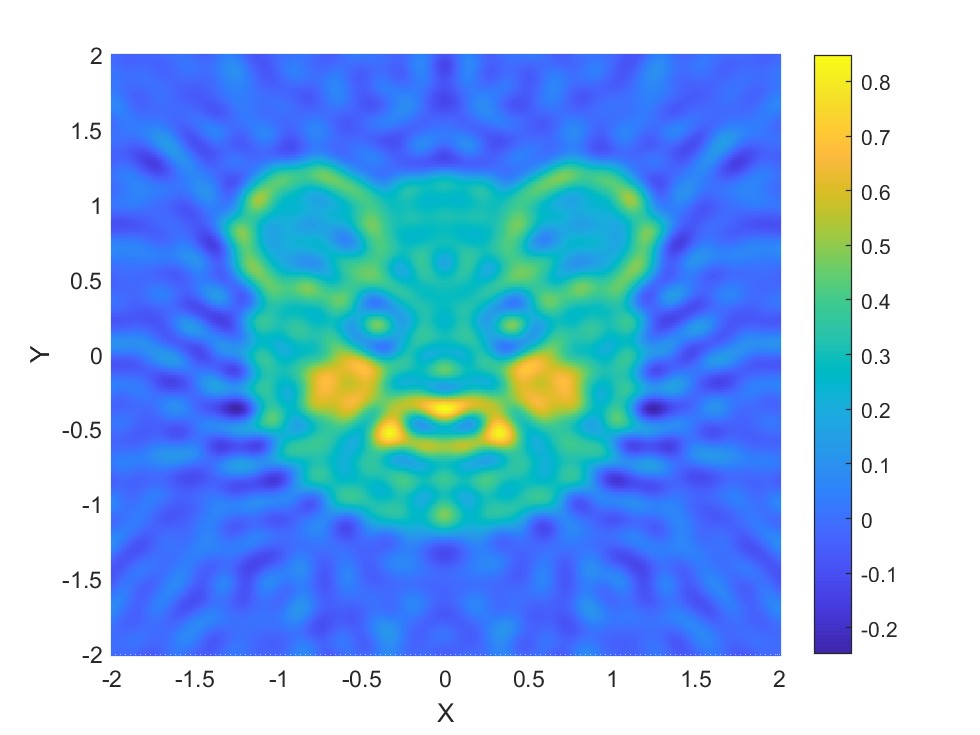}}\\
\subfigure{
\includegraphics[width=.30\textwidth]{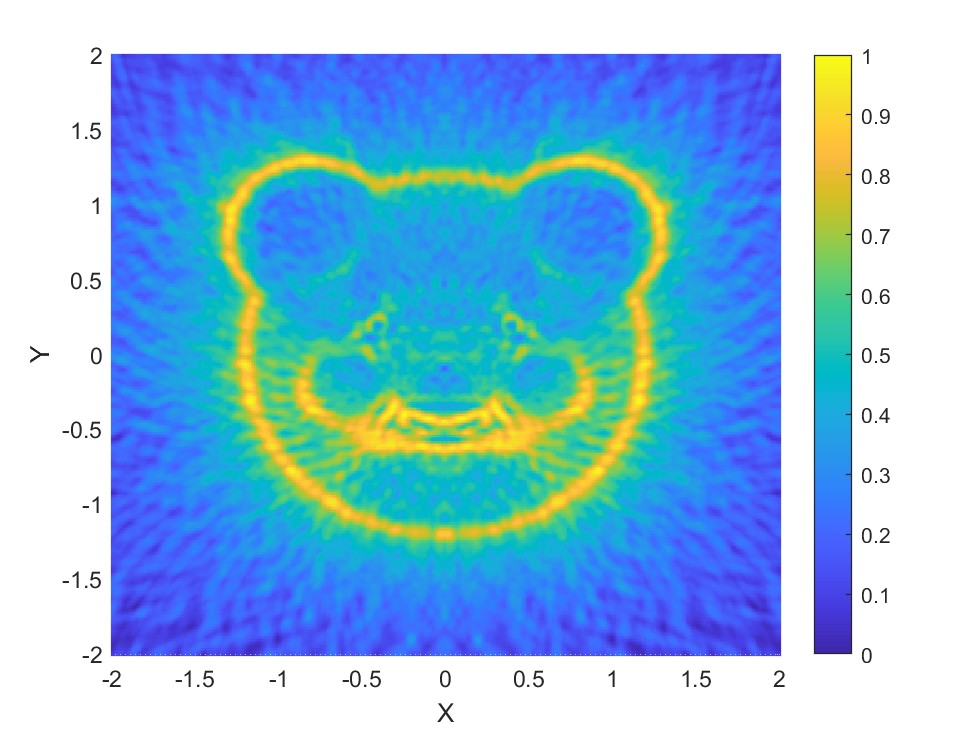}
}\hspace{0em} &
\subfigure{
\includegraphics[width=.30\textwidth]{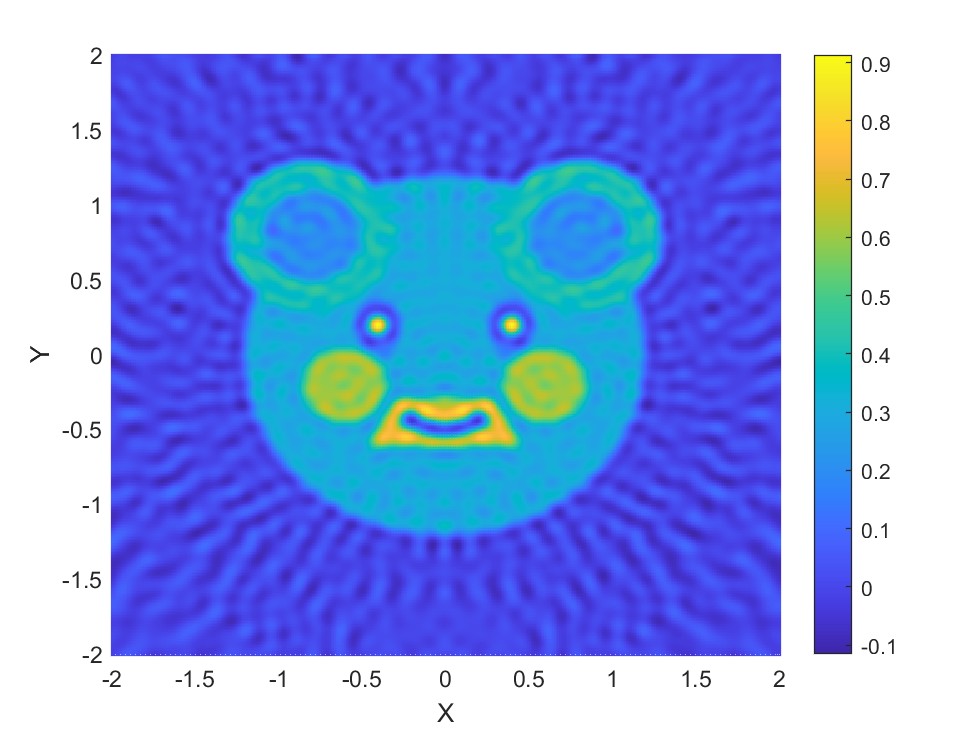}}&
\subfigure{
\includegraphics[width=.30\textwidth]{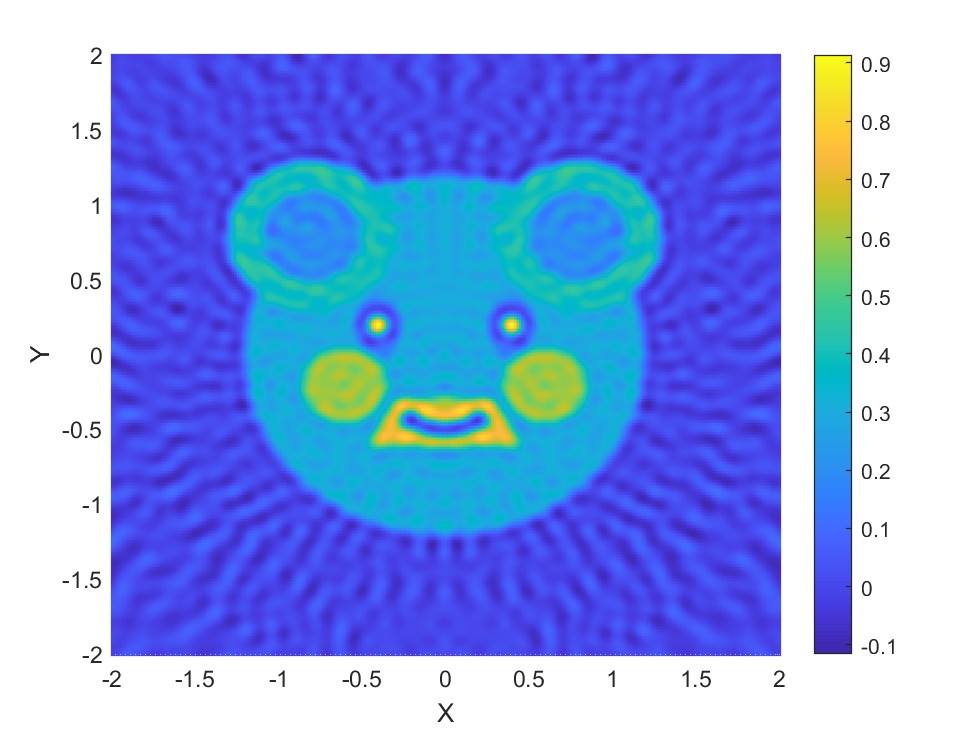}}
\end{tabular}
\caption{Reconstructions of the smiling bear: the first row employs parameters $k_-=0.5$, $k_+=30$, $dk=0.5$, $L=30$, while the second row utilizes $k_-=0.1$, $k_+=50$, $dk=0.1$, $L=60$.\quad Left: Reconstructions with the indicator function $I_{\pa \Omega}$.\quad Middle: Reconstructions with the indicator function $I_{S}^{(1)}$.\quad Right: Reconstructions with the indicator function $I_{S}^{(2)}$.}
\label{Smiling_Bear}
\end{figure}

Finally, we investigate the smooth source function, with the corresponding reconstructions shown in Figure \ref{L=30_k=30_source function0}. 
Table \ref{tab:Comparison-I12} confirms the stability and high accuracy of reconstructions achieved by the proposed indicators.
Moreover, comparative analysis reveals that  $I^{(2)}_S$ consistently surpasses  $I^{(1)}_S$ in both computational efficiency and reconstruction fidelity.
To illustrate this, we define a discrete approximation of the Radon transform in \eqref{I(r)} as
\begin{equation}
g(h):=\sum\limits_{j=1}^{60} 8k_j^3h \Im(u^s(x,k_j))J_0(k_j h),\quad x=(3 ,0), k_j=0.5j\,\, \text{or}\,\, 0.1j.
\nonumber
\end{equation}
We conduct a numerical comparison of the two indicator functions $I^{(1)}_{S}$ and $I^{(2)}_{S}$.  $I^{(1)}_{S}$ (derived from Theorem \ref{source function-07}) involves a quadruple integral, whereas  $I^{(2)}_{S}$ (based on Theorem \ref{S(y)-fomula-proof111}) requires only a double integral.
In \eqref{07-I_S}, the innermost layer of the quadruple integral essentially generates Radon transform data of $S(z)$ from the scattered field measurements. However, this numerical integration introduces certain errors in the original scattered field data. As an illustration, Figure \ref{radon_error} shows error propagation in this procedure for the annular source. The red curve represents the Radon transform of the source function, while the blue curve depicts its numerical approximation based on the function $g(h)$.
Consequently, with 30 sensors and a frequency spacing of $dk=0.5$, the performance of the indicator function $I^{(1)}_S$ is observed to be inferior to that of $I^{(2)}_S$, attributable to the cumulative error introduced by multiple integration layers. 
As the frequency spacing becomes sufficiently small, the error introduced by the Radon transform decreases significantly, which can be observed in Figure \ref{k_0.1}. At the same time, the reconstruction error of the corresponding indicator function $I^{(1)}_S$ decreases as $d_k$ decreases.

\begin{figure}[h!]
\centering
\begin{tabular}{cc}
\subfigure[$dk=0.5$]{
\includegraphics[width=.40\textwidth]{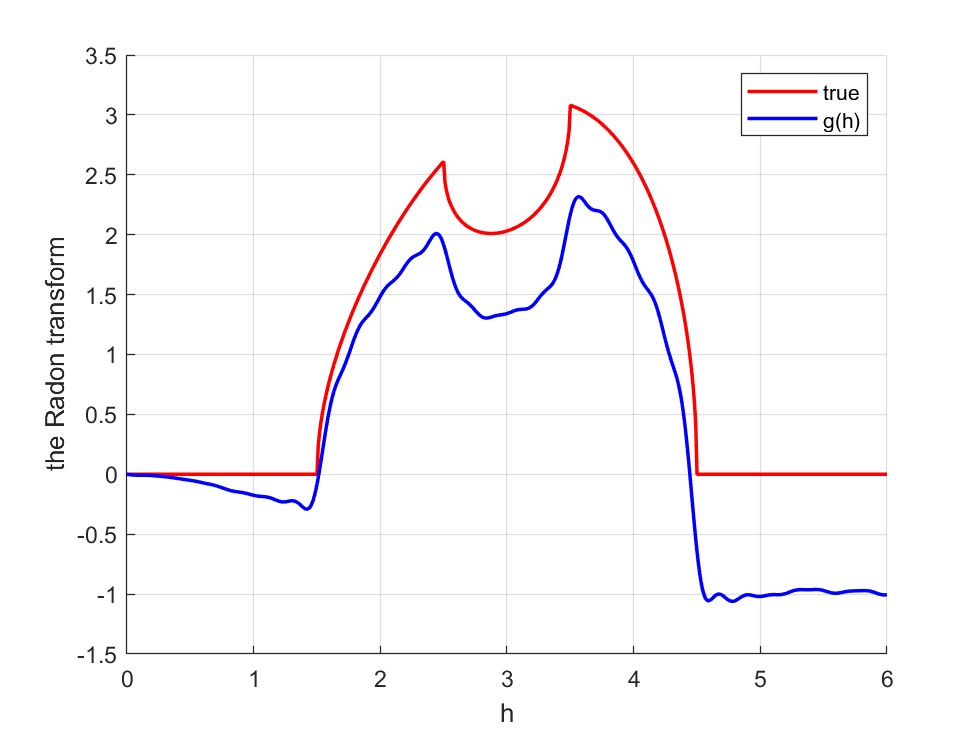}
}\hspace{0em} &
\subfigure[$dk=0.1$]{
\includegraphics[width=.40\textwidth]{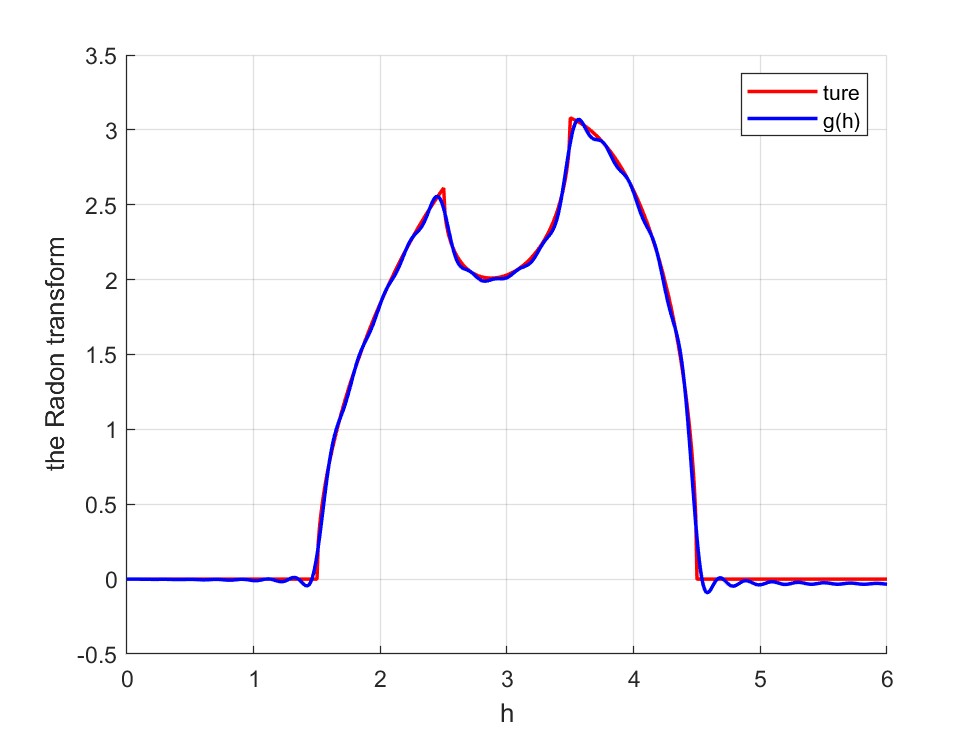}
\label{k_0.1}}
\end{tabular}
\caption{Comparison between the exact Radon transform and its approximation function $g(h)$.}
\label{radon_error}
\end{figure}
\begin{table}
    \centering
    \begin{tabular}{|c|c|c|} 
        \hline 
                     & Relative error ($dk=0.5$)& Relative error ($dk=0.1$))\\
        \hline 
        $I_S^{(1)}(L=30)$  & 0.4002 & 0.2178  \\
        \hline 
        $I_S^{(2)}(L=30)$  & 0.2228 & 0.1029  \\
        \hline 
        $I_S^{(1)}(L=60)$  & 0.3992 & 0.1335  \\
        \hline 
        $I_S^{(2)}(L=60)$  & 0.1964 & 0.0997  \\
        \hline 
    \end{tabular}
    \caption{The computation is performed on a $401\times 401$ grid. The relative error is defined as $\frac{\|S(z)-I_S^{(i)}(z)\|_2}{\|S(z)\|_2}$, ($i=1,2$). Despite a ‌$20\%$ relative noise in the measurements, both reconstructions achieve high accuracy. In particular, the reconstruction error is greatly reduced with the decrease of $dk$. This further verify the observation from Figure \ref{radon_error}. }
    \label{tab:Comparison-I12}
\end{table}

\begin{figure}[h!]
\centering
\begin{tabular}{ccc}
\subfigure{
\includegraphics[width=.30\textwidth]{R2-1.jpg}
}\hspace{0em} &
\subfigure{
\includegraphics[width=.30\textwidth]{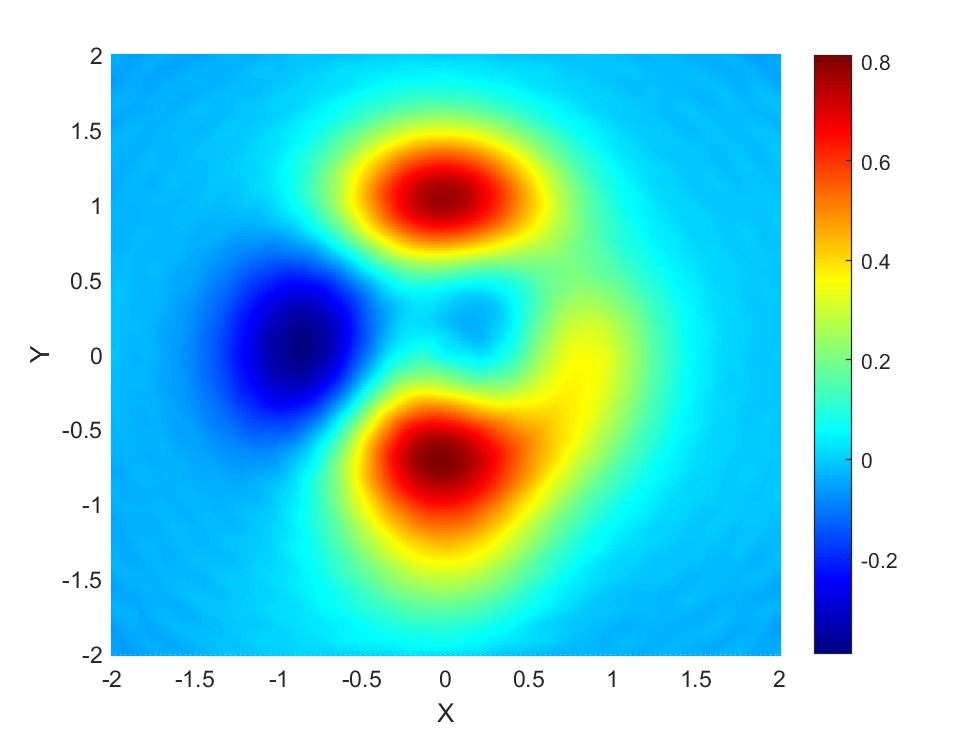}
}&
\subfigure{
\includegraphics[width=.30\textwidth]{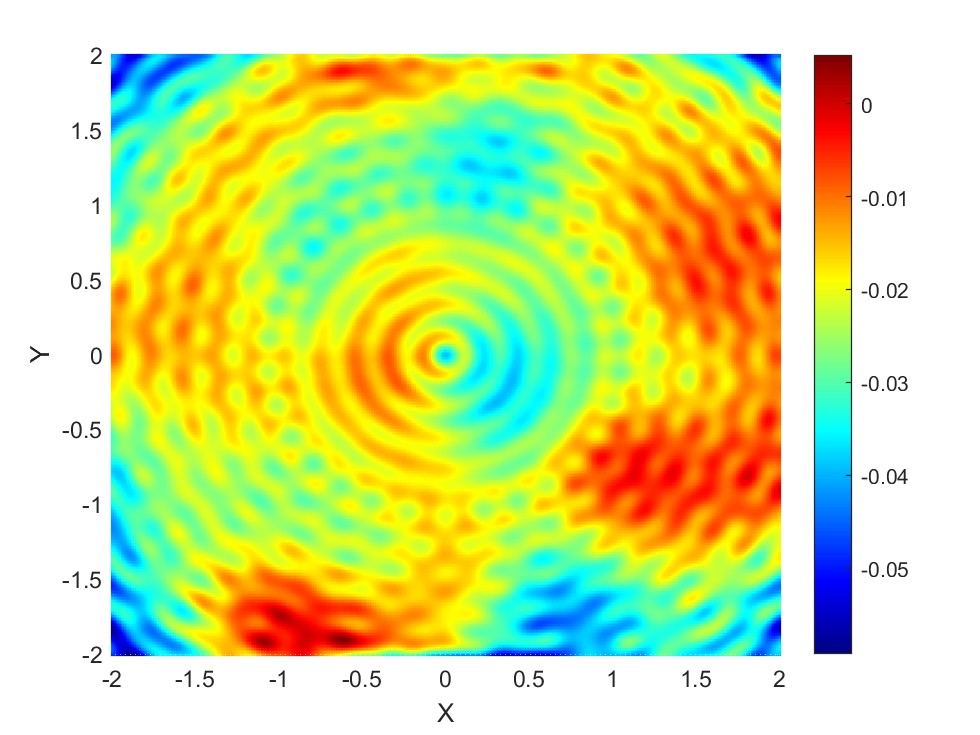}
}\\
\subfigure{
\includegraphics[width=.30\textwidth]{R2-1.jpg}
}\hspace{0em} &
\subfigure{
\includegraphics[width=.30\textwidth]{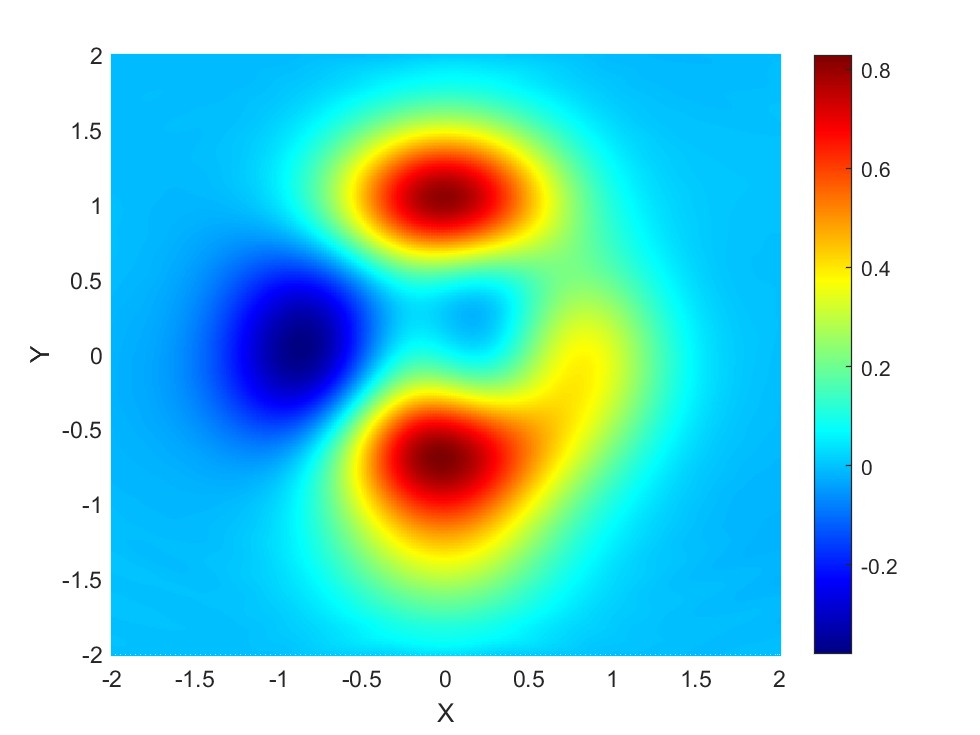}
}&
\subfigure{
\includegraphics[width=.30\textwidth]{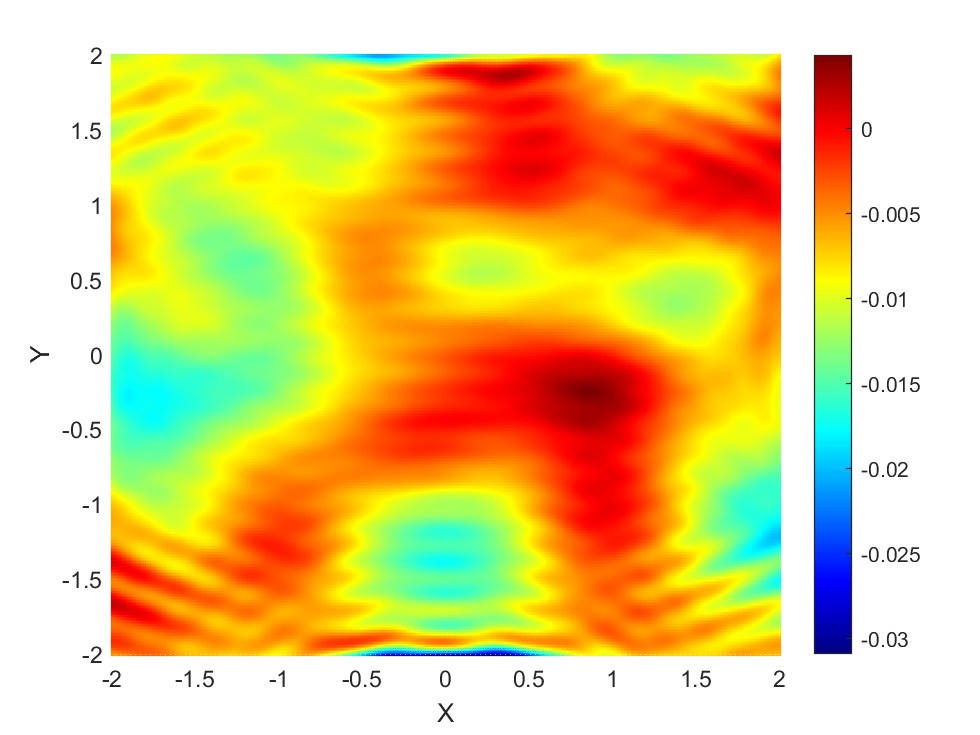}
}
\end{tabular}
\caption{Reconstrction of the smooth source function.\quad Left: The true source.\quad Middle: Reconstruction by plotting $I^{(1)}_{S}$ (top) and $I^{(2)}_{S}$ (bottom). \quad Right: The errors of $I^{(1)}_{S}$ (top) and $I^{(2)}_{S}$ (bottom).}
\label{L=30_k=30_source function0}
\end{figure}

\section{Concluding remarks}

This work focuses on characterizing the source support from the scattered fields taken at sparse sensors and developing fast, stable, and high-resolution numerical methods for identifying the source functions. Both objectives are highly relevant from a practical point of view. We demonstrate that the support of annular and polygonal sources can be uniquely determined using multi-frequency scattered fields measured at a finite number of sensors, where the required sensor count is explicitly determined by the number of unknown sources. Three sampling methods are proposed to reconstruct either the source support or the source functions themselves. 

To reconstruct annular and polygonal sources, we analyze the Radon transform of the source function to obtain the indicator function $I_{\pa \Omega}$ in \eqref{Ipa-Omega} that achieves large values exclusively on the boundary $\pa \Omega$. This allows for accurate shape recovery using only the imaginary part of the scattered field $\Im(u^s(x,k))$, measured at a sparse set of observation points. This represents an extremely small amount of data, which is of significant practical relevance. Moreover, by exploiting the Radon transform to establish a direct equality between $\Im(u^s(x,k))$ and the source function, the indicator function $I_S^{(1)}$ in \eqref{I-Omega-(1)} is based exclusively on $\Im(u^s(x,k))$ and completely avoids the need for $\Delta u^s(x,k)$, $\partial _{\nu}u^s(x,k)$ and $\partial _{\nu}\Delta_x u^s(x,k)$. In addition, with the inclusion of the data $\Delta u^s(x,k)$, the indicator function $I_S^{(2)}$ in \eqref{I-Omega-(2)} provides better reconstruction quality than $I_S^{(1)}$, as evidenced by our numerical experiments.

Although our numerical experiments demonstrate that the external boundaries of general domains can also be successfully reconstructed, the theoretical proof for general shapes and for complex-valued sources presents significant difficulties. This important and challenging problem will be a key focus of our future research.

\section*{Acknowledgment}
 The research of X. Liu is supported by the National Key R\&D Program of China under grant 2024YFA1012303  and the NNSF of China under grant 12371430.

\bibliographystyle{SIAM}

\end{document}